\documentclass{IEEEtran}

\usepackage{cite}
\usepackage{amsmath,amssymb,amsfonts,mathdots}

\usepackage[]{footmisc}
\usepackage{xcolor}
\usepackage{algorithm}
\usepackage{algorithmic}
\usepackage[algo2e]{algorithm2e} 
\usepackage{graphicx}
\usepackage{textcomp}
\usepackage{psfrag}
\usepackage{epsfig}
\usepackage{subcaption}
\usepackage{float}
\usepackage{xcolor}

\usepackage{booktabs}
\usepackage{multicol}
\usepackage{arydshln}
\usepackage{multirow}
\usepackage{wrapfig}
\usepackage[numbers,sort&compress]{natbib}
\usepackage[capitalise]{cleveref}

\newtheorem{mydef}{Definition}
\newtheorem{mypro}{Proposition}
\newtheorem{myass}{Assumption}
\newtheorem{myrem}{Remark}
\newtheorem{mylem}{Lemma}
\newtheorem{mythr}{Theorem}

\begin{document}

\title{\LARGE A Unified Framework for Robust Data-Driven Model Predictive Control with Terminal Ingredients}

\author{Zhaohua~Yang,~Nan~Bai,~Pengyu~Wang,~and~Ling~Shi,~\IEEEmembership{Fellow,~IEEE}
	\thanks{\rm Z. Yang, N. Bai, P. Wang, and L. Shi are with the Department of Electronic and Computer Engineering, Hong Kong University of Science and
		Technology, Clear Water Bay, Hong Kong SAR (email: zyangcr@connect.ust.hk; eenanbai@ust.hk; pwangat@connect.ust.hk; eesling@ust.hk). L. Shi is also with the Department of Chemical and Biological Engineering and the Cheng Kar-Shun Robotics Institute (CKSRI), Hong Kong University of Science and Technology, Clear Water Bay, Hong Kong SAR.}
	\thanks{}
  }
\maketitle

\begin{abstract}
This paper proposes a unified framework for robust data-driven model predictive control (MPC) with terminal ingredients for unknown linear time-invariant systems. Based on Willems’ fundamental lemma, a single offline input-output trajectory is used as an implicit model, while online implementation only requires past input-output measurements. Bounded measurement noise is considered in both offline and online output data. The framework consists of an offline stage, where robust terminal ingredients are designed directly from noisy data, and an online stage, where these ingredients are incorporated into a one-step MPC scheme. Recursive feasibility of the proposed framework and practical exponential stability of the closed-loop system are established. Numerical results illustrate the performance tuning capability of the terminal set, with particularly pronounced benefits for short prediction horizons, and show that the proposed framework achieves performance comparable to data-driven MPC without terminal ingredients while requiring a reduced prediction horizon and less online computation.
\end{abstract}

\begin{IEEEkeywords} 
Data-driven control, model predictive control, robust control.
\end{IEEEkeywords}

\IEEEpeerreviewmaketitle
\section{Introduction}
Model predictive control (MPC) has become one of the most successful advanced control methodologies due to its ability to explicitly handle system constraints while achieving desirable closed-loop performance \cite{rawlings2020model, mayne2014model}. Over the past decades, a comprehensive theoretical foundation has been developed for MPC, including recursive feasibility, closed-loop stability, constraint satisfaction, and performance guarantees \cite{rawlings2020model,mayne2014model}. In particular, terminal ingredients have been extensively investigated to establish closed-loop stability for constrained systems \cite{kohler2019nonlinear}, while robust MPC techniques have been developed to maintain stability and constraint satisfaction in the presence of disturbances and model uncertainties \cite{kohler2020computationally}. These theoretical developments, together with advances in numerical optimization, have made MPC a well-established framework for the control of constrained dynamical systems.

Despite its theoretical maturity and practical success, conventional MPC fundamentally relies on an explicit model of the controlled system to predict its future behavior. In practical applications, however, constructing an accurate model can be difficult, time-consuming, or even impractical. This has motivated the development of data-driven control approaches that aim to design controllers directly from measured data, thereby bypassing system identification. For linear time-invariant (LTI) systems, an important theoretical foundation for this paradigm is Willems' fundamental lemma \cite{willems2005note}, which shows that, under a persistent excitation condition, all admissible trajectories of an unknown LTI system can be parametrized by a Hankel matrix constructed from a single measured input-output trajectory. This behavioral characterization provides an appealing alternative to model-based prediction and has become a key tool for the development of data-driven control methods \cite{markovsky2008data,de2019formulas,van2020data,berberich2020trajectory}.

Based on behavioral representations of system trajectories, an early data-driven predictive control scheme was proposed by Yang and Li \cite{yang2015data}, where a reduced Hankel matrix constructed from input-output data was employed for prediction and controller design. Coulson et al. subsequently proposed the data-enabled predictive control (DeePC) framework \cite{coulson2019data}, which directly optimizes over input-output trajectories parametrized by measured data. However, closed-loop properties such as recursive feasibility and stability were not established in these early data-driven predictive control schemes. To address this issue, Berberich et al. \cite{berberich2020data} incorporated terminal equality constraints into data-driven MPC and established recursive feasibility and exponential stability in the noise-free case, as well as practical exponential stability under bounded output measurement noise. The design of more general terminal ingredients was later investigated in \cite{berberich2021design}, where a terminal cost and a terminal set were constructed directly from noise-free input-output data to guarantee exponential stability. More recently, Bongard et al. \cite{bongard2022robust} established stability and robustness guarantees for a data-driven MPC scheme without terminal ingredients, where closed-loop stability is ensured by choosing a sufficiently long prediction horizon. Beyond these fundamental developments, data-driven MPC has also been extended to specialized settings, including changing setpoint tracking, resilient control, and event-triggered control \cite{berberich2020data_ifac,liu2022data,deng2024event}. A recent overview of systems-theoretic guarantees and developments in data-driven MPC can be found in \cite{berberich2025overview}.

Despite these advances, a gap remains between robustness analysis and terminal ingredient design. In particular, systems-theoretic guarantees for robust data-driven MPC with general terminal ingredients have not been established. Such an analysis is particularly important since MPC with and without terminal ingredients represent two classical stabilization philosophies. The former establishes stability by incorporating suitable terminal costs and terminal sets, whereas the latter relies on a sufficiently long prediction horizon, as pursued in \cite{bongard2022robust}. In this sense, a robustness analysis of data-driven MPC with general terminal ingredients provides a complementary counterpart to the terminal-free framework in \cite{bongard2022robust}.

Establishing such guarantees, however, introduces several nontrivial challenges. First, suitable terminal ingredients need to be designed directly from noisy data while remaining valid for the unknown underlying system. Although \cite{berberich2021design} provides a data-driven design of terminal ingredients, its main theoretical development assumes noise-free measurements, while the noisy-data extension discussed therein relies on an additional nontrivial condition on the measurement noise. Second, recursive feasibility requires constructing a feasible solution at the next time step from the current optimal solution. Such a construction needs to be carefully coordinated with the properties of the terminal ingredients. Finally, the resulting candidate solution must lead to a suitable Lyapunov decrease condition. While this step follows relatively standard arguments in the noise-free case, measurement noise causes discrepancies between predicted and actual trajectories and substantially complicates the closed-loop stability analysis.

Motivated by these observations, this paper develops a robust data-driven MPC framework with terminal ingredients and establishes rigorous closed-loop guarantees. The main contributions are summarized as follows.
\begin{enumerate}
    \item We develop a unified robust data-driven MPC framework with general terminal ingredients. The proposed scheme consists of two stages. Offline, a terminal controller, a terminal cost, and a terminal set are designed directly from noisy input-output data. Online, these terminal ingredients are incorporated into a regularized data-driven MPC problem, which is implemented in a standard one-step fashion (\cref{robust terminal set design theorem,full algorithm}).
    \item We propose a new candidate solution construction for the robustness analysis of data-driven MPC. Unlike existing constructions \cite{berberich2020data,bongard2022robust}, both the optimal input and output trajectories are simultaneously shifted, after which the terminal controller is used to append the terminal input-output pair and a slack variable is introduced to absorb the mismatch induced by noisy data. Based on this construction, we establish recursive feasibility under one-step implementation and prove practical exponential stability of the resulting closed-loop system (\cref{recursive feasibility proposition,practical exponential stability theorem}).
    \item Numerical studies on an inverted pendulum model illustrate three features of the proposed framework. First, the terminal set level provides a tuning mechanism for balancing convergence speed and closed-loop performance. Second, compared with terminal equality MPC, the benefit of the terminal set is particularly obvious for short prediction horizons, whereas the two schemes exhibit similar behavior as the prediction horizon increases. Third, compared with data-driven MPC without terminal ingredients \cite{bongard2022robust}, the proposed scheme achieves comparable closed-loop performance with a shorter prediction horizon, thereby trading additional offline terminal design for reduced online computational effort.
\end{enumerate}

We organize this paper as follows. \cref{Preliminaries} introduces the preliminaries and formulates the control problem. In \cref{Main Results}, we present the proposed robust data-driven MPC scheme, develop a data-driven design of robust terminal ingredients, establish recursive feasibility and practical exponential stability, and discuss its relation to existing data-driven MPC schemes. \cref{Simulations} provides numerical simulations illustrating the influence of the terminal set size and prediction horizon and comparing the proposed scheme with data-driven MPC without terminal ingredients. \cref{Conclusion} concludes the paper.

\textit{Notations}: For a vector $x$ and a positive definite matrix $P\succ0$, define $\|x\|_P \triangleq \sqrt{x^\top P x}$, while $\|x\|$ denotes the Euclidean norm by default. For a sequence $\{x_k\}_{k=0}^{N-1}$, we define its depth-$L$ Hankel matrix
$$
H_L(x)\triangleq
\begin{bmatrix}
x_0 & x_1 & \cdots & x_{N-L}\\
x_1 & x_2 & \cdots & x_{N-L+1}\\
\vdots & \vdots & \ddots & \vdots\\
x_{L-1} & x_L & \cdots & x_{N-1}
\end{bmatrix},
$$
and
$
x_{[a,b]}\triangleq
\begin{bmatrix}
x_a^\top & x_{a+1}^\top & \cdots & x_b^\top
\end{bmatrix}^{\top}.
$
For a set $\mathbb X$, $\mathbb X^l$ denotes its $l$-fold Cartesian product, and $\mathrm{int}(\mathbb X)$ denotes its interior. The set of integers in an interval $[a,b]$ is denoted by $\mathbb I_{[a,b]}$. The symbol \(\mathbf 1\) denotes the all-ones vector, with dimensions labeled when necessary. The symbols $(\cdot)^\dagger$, $\mathrm{blkdiag}(\cdot)$, and $\otimes$ denote the Moore--Penrose pseudoinverse, block-diagonal concatenation, and Kronecker product, respectively. For vectors $x_1,\ldots,x_r$, $\mathrm{col}(x_1,\ldots,x_r)$ denotes the column space spanned by these vectors, while $\ker(A)$ denotes the right null space of a matrix $A$. For a symmetric matrix $P$, $\lambda_{\min}(P)$ and $\lambda_{\max}(P)$ denote its minimum and maximum eigenvalues, respectively. For two symmetric matrices $P_1$ and $P_2$, define
$
\lambda_{\min}(P_1,P_2)
\triangleq
\min\{\lambda_{\min}(P_1),\lambda_{\min}(P_2)\},
$
and similarly for
$
\lambda_{\max}(P_1,P_2).
$
For compatible matrices $A$ and $B$, the operator $\circledast$ is defined as
$
A\circledast B\triangleq ABA^\top.
$
Finally, the symbol $\mathcal K_\infty$ denotes the class of continuous functions $\alpha:\mathbb R_{\geq 0}\rightarrow\mathbb R_{\geq 0}$ that are strictly increasing, unbounded, and satisfy $\alpha(0)=0$.

\section{Preliminaries}\label{Preliminaries}
\subsection{System Description}
In this paper, we consider a discrete-time LTI system:
\begin{equation}\label{basic system}
	\begin{aligned}
		x_{k+1} &= Ax_k + Bu_k, \\
		y_k &= Cx_k + Du_k,
	\end{aligned}
\end{equation}
where $x_k \in \mathbb{R}^{n_x}$, $u_k \in \mathbb{R}^{n_u}$, and $y_k \in \mathbb{R}^{n_y}$ denote the system state, control input, and system output at time step $k$, respectively. We make the standard assumption that $(A,B)$ is controllable and $(A,C)$ is observable. Throughout this paper, we assume that the system matrices $A$, $B$, $C$, and $D$ are unknown, but we can access the noise-free control input $u_k$ and noisy system output $\Tilde{y}_k$ satisfying $\Tilde{y}_k = y_k + \varepsilon_k$. The term $\varepsilon_k$ represents an unknown but bounded measurement noise whose bound will be specified later. Our goal is to design a data-driven controller that stabilizes a prescribed
equilibrium while satisfying input constraints. The formal problem formulation will be provided in \cref{problem formulation}. 

\subsection{Willems et al.'s Fundamental Lemma}
In model-based MPC, the system dynamics are explicitly used to characterize the set of admissible input-output trajectories. However, in a data-driven setting where the system model is unknown, such a characterization must be obtained directly from measured data.
Willems et al.'s fundamental lemma provides a key result in this direction, showing that, for LTI systems, the column space of a Hankel matrix constructed from a single persistently exciting trajectory spans the set of all possible trajectories of the system. Before presenting the fundamental lemma, we first introduce the concept of persistent excitation.
\begin{mydef}
	We say a sequence of signals $\{u_k\}_{k=0}^{N-1}$ is persistently exciting of order $L$ if $H_L(u)$ has full row rank.
\end{mydef}

With the concept of persistent excitation, we can now present the fundamental lemma.
\begin{mylem}[\cite{willems2005note}]\label{lemma: Fundamental Lemma}
Suppose that $\{u^d_k,y^d_k\}_{k=0}^{N-1}$ is a trajectory of \eqref{basic system} and that $u^d$ is persistently exciting of order $L+n_x$. Then, $\{\Bar{u}_k,\Bar{y}_k\}_{k=0}^{L-1}$ is a trajectory of \eqref{basic system} if and only if there exists a vector $g\in \mathbb{R}^{N-L+1}$ such that 
\begin{equation}
	\begin{bmatrix}
		H_L(u^d) \\
		H_L(y^d)
	\end{bmatrix}g = \begin{bmatrix}
		\Bar{u} \\
		\Bar{y}
	\end{bmatrix}.
\end{equation}
\end{mylem}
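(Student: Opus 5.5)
The plan is to prove the two directions separately, working in a state-space parametrization of the trajectories. The key tool is the persistent-excitation rank condition on the stacked matrix of initial states and inputs.

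\emph{Sufficiency.} Suppose $g$ exists. Each column of the stacked Hankel matrix is a length-$L$ window $\{u^d_{i+k},y^d_{i+k}\}_{k=0}^{L-1}$ of the data trajectory. By time invariance, each such window is itself a trajectory of \eqref{basic system}, with initial state $x^d_i$. Linearity then shows that $\sum_i g_i(\cdot)$ is a trajectory with initial state $\sum_i g_i x^d_i$. This direction is routine.

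\emph{Necessity.} Write any length-$L$ trajectory as
\[
\begin{bmatrix}\Bar{u}\\ \Bar{y}\end{bmatrix}=\begin{bmatrix}0 & I\\ \mathcal O_L & \mathcal T_L\end{bmatrix}\begin{bmatrix}\Bar{x}_0\\ \Bar{u}\end{bmatrix},
\]
where $\mathcal O_L$ is the extended observability matrix and $\mathcal T_L$ is the Toeplitz matrix of Markov parameters. Stacking the data gives the same identity with $[\Bar{x}_0;\Bar{u}]$ replaced by $[X_0; H_L(u^d)]$, where $X_0=[x^d_0,\ldots,x^d_{N-L}]$. It therefore suffices to show that
\[
\operatorname{rank}\begin{bmatrix}X_0\\ H_L(u^d)\end{bmatrix}=n_x+Ln_u .
\]
Once this holds, every pair $(\Bar{x}_0,\Bar{u})$ is reached by some $g$, and the corresponding output follows.

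\emph{Main obstacle: the rank condition.} This is where controllability and persistent excitation of order $L+n_x$ enter. I would argue by contradiction. Suppose a nonzero left-annihilator $[\xi^\top,\eta^\top]$ with
\[
\xi^\top X_0+\eta^\top H_L(u^d)=0
\]
exists. Using $x^d_{i+1}=Ax^d_i+Bu^d_i$, I would shift the relation forward up to $n_x$ times. Each shifted copy is an annihilator of the depth-$(L+n_x)$ Hankel matrix, provided the state contributions can be eliminated.

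The elimination uses Cayley--Hamilton: with $\chi(z)=\det(zI-A)=\sum_j\alpha_j z^j$, the combination $\sum_j \alpha_j \xi^\top A^j x^d_i$ vanishes. Forming the corresponding weighted combination of the shifted relations therefore gives a vector $w$ with
\[
w^\top H_{L+n_x}(u^d)=0 .
\]
Persistent excitation forces $w=0$. From $w=0$ I would deduce $\eta=0$ by looking at the block entries with the highest shift, whose coefficient is $\alpha_{n_x}=1$. Then $\xi^\top X_0=0$ remains.

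At that point, $w=0$ gives $\xi^\top A^kB=0$ for all $k$, so controllability yields $\xi=0$, a contradiction. Carefully tracking the index bookkeeping in this shifted combination is the delicate part.
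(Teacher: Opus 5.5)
There is no in-paper proof to compare against: the paper imports this lemma from \cite{willems2005note}, where the argument is phrased in the behavioral setting (annihilators of a controllable behavior). Your proposal is correct and follows the standard state-space route. You reduce necessity to $\operatorname{rank}\left[\begin{smallmatrix}X_0\\ H_L(u^d)\end{smallmatrix}\right]=n_x+Ln_u$, then use Cayley--Hamilton to convert a left annihilator into one of $H_{L+n_x}(u^d)$. Compared with the behavioral proof, your route is elementary. It shows exactly where controllability (a standing assumption of the paper, left implicit in the statement) and the extra $n_x$ in the excitation order enter. It also yields directly the rank identity the paper later invokes, with $L+l$ in place of $L$, for the full row rank of $H_{ux}$. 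The behavioral argument, in exchange, needs no choice of state-space realization. In the write-up, make the bookkeeping explicit. The shifted copies are not annihilators individually. Let $v_j$ carry $\eta$ in input blocks $j+1,\ldots,j+L$ and $(\xi^\top A^{j-1-s}B)^\top$ in block $s+1$, $s\in\mathbb{I}_{[0,j-1]}$; then $v_j^\top H_{L+n_x}(u^d)=-\xi^\top A^j[x^d_0,\ldots,x^d_{N-L-n_x}]$, and only $w=\sum_j\alpha_jv_j$ annihilates. Note that the $j$-th shift is valid for columns $i\le N-L-j$, so all $j\in\mathbb{I}_{[0,n_x]}$ hold on exactly the columns of $H_{L+n_x}(u^d)$. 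The $B$-terms occupy only blocks $1,\ldots,j\le n_x$. Hence blocks $n_x+1,\ldots,n_x+L$ of $w=0$ form a unit-triangular system in $\eta_L,\ldots,\eta_1$, and $\eta=0$ follows by back substitution, not from the last block alone. After that, blocks $n_x,\ldots,1$ give $\xi^\top A^kB=0$ for $k\in\mathbb{I}_{[0,n_x-1]}$ in the same way.
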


Willems et al.'s fundamental lemma has received increasing attention in recent years and has been applied in various areas, such as (1) controller synthesis \cite{de2019formulas}, (2) moving horizon estimation \cite{wolff2024robust}, and (3) MPC \cite{coulson2019data,berberich2020data}. This is mainly because it provides an appealing parametrization of input-output trajectories directly from offline data, thereby avoiding the need to first identify system matrices and subsequently apply model-based MPC methods. The fundamental lemma serves as the theoretical foundation of this paper, upon which we aim to develop a data-driven MPC scheme that achieves the desired stability properties.

\subsection{Extended System Representation}
We now introduce the extended system representation, which is a key concept in our proposed data-driven MPC scheme. We first define the following matrices:
\begin{equation}
\begin{aligned}
    \mathcal{R}_l &\triangleq \begin{bmatrix}
		A^{l-1}B & A^{l-2}B & \cdots & B
	\end{bmatrix},\\
    \mathcal{O}_l &\triangleq 
	\begin{bmatrix}
		C \\
		CA \\
		\vdots \\
		CA^{l-1}
	\end{bmatrix},\mathcal{T}_l \triangleq \begin{bmatrix}
		D & 0 & \cdots & 0 \\
		CB & D & \cdots & 0 \\
		\vdots & \vdots & \ddots & \vdots \\
		C A^{l-2}B & C A^{l-3}B & \cdots & D
	\end{bmatrix}.
\end{aligned}
\end{equation}
Since $(A,C)$ is observable, we can define the observability index $\underline{l}$ as the smallest integer $l$ such that the observability matrix $\mathcal{O}_l$
has full column rank. To avoid introducing extra notation, we abuse notation and simply take $l$ to denote an upper bound on the observability index, i.e., $l\geq \underline{l}$, which is assumed to be available. The extended state is then defined as
\begin{equation}
	\xi_k \triangleq \begin{bmatrix}
		u_{[k-l,k-1]} \\
		y_{[k-l,k-1]}
	\end{bmatrix} \in \mathbb{R}^{n_\xi}, n_\xi = l(n_u+n_y).
\end{equation}
The extended system representation is given by
\begin{equation}\label{extended system}
	\begin{aligned}
		\xi_{k+1} &= \Tilde{A} \xi_k + \Tilde{B} u_k, \\
		y_k &= \Tilde{C} \xi_k + \Tilde{D} u_k,
	\end{aligned}
\end{equation}
where $\tilde A$, $\tilde B$, $\tilde C$, and $\tilde D$ denote the extended-system matrices defined in \cite{berberich2021design}.
The input-output behavior of \eqref{extended system} is known to be equivalent to that of \eqref{basic system} \cite{goodwin2014adaptive}. By observability of $(A,C)$ and the system dynamics, we obtain the following relation between the extended state $\xi_k$ and the system state $x_k$, which will be useful in the subsequent theoretical analysis
\begin{equation}\label{relationship between two systems 2}
	x_k = \Gamma\xi_k,\Gamma \triangleq \begin{bmatrix}\mathcal{R}_l-A^l\mathcal{O}_l^\dagger\mathcal{T}_l &A^l\mathcal{O}_l^\dagger\end{bmatrix}.
\end{equation}
The following lemma summarizes an interesting property of the extended system representation.
\begin{mylem}[\cite{bongard2022robust}]\label{IOSS lemma}
	The extended system \eqref{extended system} is input-output-to-state stable (IOSS), i.e., there exists an IOSS Lyapunov function $W(\xi) = \xi^\top P_W \xi$ with $P_W\succ 0$ such that
	\begin{equation}\label{IOSS eq}
		W(\xi_{k+1}) - W(\xi_k) \leq -\frac{1}{2}\|\xi_k\|^2 + c_{W,1}\|u_k\|^2 + c_{W,2}\|y_k\|^2,
	\end{equation}
	for some positive constants $c_{W,1}>0$ and $c_{W,2}>0$.
\end{mylem}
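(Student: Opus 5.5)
The plan is to build $W$ from a quadratic Lyapunov function of the minimal system together with the finite-memory structure of $\xi$. Since $(A,C)$ is observable, $(A,C)$ is in particular detectable. Hence there exists an output injection gain $L$ with $A-LC$ Schur; a deadbeat choice is also possible, but any Schur choice suffices. Writing
\[
x_{k+1}=(A-LC)x_k+(B-LD)u_k+Ly_k,
\]
we take $P_x\succ 0$ solving $(A-LC)^\top P_x(A-LC)-P_x=-I$. With $V(x)=x^\top P_x x$, Young's inequality then gives
\[
V(x_{k+1})-V(x_k)\le -\tfrac12\|x_k\|^2+c_1\|u_k\|^2+c_2\|y_k\|^2 .
\]
This is the standard IOSS Lyapunov function for detectable LTI systems.

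Next we need to control the full extended state $\xi_k$, not just $x_k=\Gamma\xi_k$. The key observation is that $\xi_k$ is a shift register of past inputs and outputs. Introduce the "memory" function
\[
M(\xi_k)=\sum_{i=1}^{l} i\,\bigl(\|u_{k-i}\|^2+\|y_{k-i}\|^2\bigr),
\]
which is a quadratic form in $\xi_k$ with weights $l,l-1,\dots,1$. Using the shift dynamics \eqref{extended system}, we get
\[
M(\xi_{k+1})-M(\xi_k)=l\bigl(\|u_k\|^2+\|y_k\|^2\bigr)-\|\xi_k\|^2 .
\]
This yields exactly the $-\|\xi_k\|^2$ decrease plus input/output supply terms.

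We then set $W(\xi)=\xi^\top P_W\xi$ with $P_W=\mathrm{blkdiag}$-type weighting equal to the matrix of $\tfrac12 M$, and optionally add $\epsilon\,\Gamma^\top P_x\Gamma$. Positive definiteness already holds from $M$ alone, since all weights are positive. The decrease is
\[
W(\xi_{k+1})-W(\xi_k)\le -\tfrac12\|\xi_k\|^2+\tfrac{l}{2}\|u_k\|^2+\tfrac{l}{2}\|y_k\|^2 ,
\]
so we can take $c_{W,1}=c_{W,2}=l/2$. If one wants $W$ to reflect $x$ as in \cite{bongard2022robust}, adding the $\Gamma^\top P_x\Gamma$ term only changes the constants, using $x_k=\Gamma\xi_k$ from \eqref{relationship between two systems 2}. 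The decrease of $V$ contributes $-\tfrac12\|x_k\|^2\le 0$ plus supply terms.

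The main point to check is that the ordering of blocks in $\xi_k$ matches the definition of $\tilde A,\tilde B,\tilde C,\tilde D$, so that $\xi_{k+1}$ indeed drops $(u_{k-l},y_{k-l})$ and appends $(u_k,y_k)$. Here $y_k=\tilde C\xi_k+\tilde Du_k$ is the true output of the system. This shift structure is immediate from the definition of $\xi_k$, and the rest is routine algebra.
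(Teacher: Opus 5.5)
The step that fails is the identity for $M$, because your weights point the wrong way. The issue is not the block ordering you flagged; the shift structure is fine. With weight $i$ on the pair $(u_{k-i},y_{k-i})$, the most recent sample is the lightest, and the shift gives
\[
M(\xi_{k+1})-M(\xi_k)=\|u_k\|^2+\|y_k\|^2+\sum_{i=1}^{l-1}\bigl(\|u_{k-i}\|^2+\|y_{k-i}\|^2\bigr)-l\bigl(\|u_{k-l}\|^2+\|y_{k-l}\|^2\bigr).
\]
So for $l\ge 2$ every intermediate sample enters with a plus sign.

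Your $W=\tfrac12 M$ then violates the IOSS inequality for every choice of $c_{W,1},c_{W,2}$. Set all samples in $\xi_k$ to zero except $(u_{k-1},y_{k-1})\neq 0$. Choose this pair in the kernel of the $n_y\times(n_u+n_y)$ submatrix of $\tilde C$ acting on $(u_{k-1},y_{k-1})$; the kernel is nontrivial since $n_u\ge 1$. With $u_k=0$ this gives $y_k=0$, and $W(\xi_{k+1})-W(\xi_k)=\tfrac12\|\xi_k\|^2>0$.

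The repair is to weight each sample by the number of steps it will still remain in the window:
\[
M(\xi_k)=\sum_{i=1}^{l}(l+1-i)\bigl(\|u_{k-i}\|^2+\|y_{k-i}\|^2\bigr),
\]
that is, weights $1,2,\dots,l$ in the oldest-first ordering of $\xi_k$. Then each retained pair loses exactly one unit of weight, the oldest pair (weight $1$) drops out, and the new pair $(u_k,y_k)$ enters with weight $l$. Hence $M(\xi_{k+1})-M(\xi_k)=l(\|u_k\|^2+\|y_k\|^2)-\|\xi_k\|^2$ holds exactly, and $W=\tfrac12 M$ gives the lemma with $c_{W,1}=c_{W,2}=l/2$.

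With this correction your argument is complete and takes a different route from the paper. The paper gives no proof at all; it imports the lemma from \cite{bongard2022robust}, leaving $P_W$, $c_{W,1}$, $c_{W,2}$ implicit. The usual way to get such a function is an existence argument: detectability of the extended system yields a quadratic IOSS Lyapunov function. Your memory function needs neither observability nor the relation $x_k=\Gamma\xi_k$. It only uses that the extended dynamics drop the oldest pair and append $(u_k,y_k)$ with $y_k=\tilde C\xi_k+\tilde D u_k$.

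As a result it holds for every $\xi_k\in\mathbb{R}^{n_\xi}$, with equality, and with explicit data $P_W=\tfrac12\,\mathrm{blkdiag}\bigl(\mathrm{diag}(1,\dots,l)\otimes I_{n_u},\,\mathrm{diag}(1,\dots,l)\otimes I_{n_y}\bigr)$. This also makes the later choice of $\gamma$ in the stability proof explicit, namely $\gamma=\lambda_{\min}(Q,R)/l$. Your first paragraph (output injection and $P_x$) is therefore unnecessary and can be dropped. If you keep it, rename the injection gain, since $L$ denotes the prediction horizon in the paper.
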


\subsection{Problem Formulation}\label{problem formulation}
Before formally formulating the problem, we first introduce two important concepts.
\begin{mydef}
	We say that the pair $(u^s,y^s)$ is an equilibrium of \eqref{basic system} if $\{\Bar{u}_k,\Bar{y}_k\}_{k=0}^{\underline{l}}$ is a trajectory of \eqref{basic system} with $(\Bar{u}_k, \Bar{y}_k) = (u^s, y^s)$ for all $k\in \mathbb{I}_{[0,\underline{l}]}$.
\end{mydef}

The use of a trajectory of length $\underline{l}+1$ is motivated by observability. In particular, if the system has observability index $\underline{l}$, then the input-output data $\{u_k,y_k\}_{k=0}^{\underline{l}}$ satisfy a linear relation, where $y_{\underline{l}}$ is uniquely determined by $\{u_k,y_k\}_{k=0}^{\underline{l}-1}$ and $u_{\underline{l}}$. Requiring that the sequence $\{u_k=u^s,\ y_k=y^s\}_{k=0}^{\underline{l}}$ is a valid trajectory ensures that this linear relation is satisfied at the steady pair. This effectively fixes the underlying linear relation at a constant point, ensuring that repeated application of $u^s$
 yields the same output $y^s$. 
\begin{mydef}
	For system \eqref{extended system}, an equilibrium $(u^s,y^s)$ corresponding to a constant input $u^s$ is said to be practically exponentially stable if there exist $c>0$, $\rho \in (0,1)$, and $d\in \mathcal{K}_\infty$ such that for any initial state $\xi_0$, the extended system state satisfies
	\begin{equation}\label{practical exponential stability}
	\|\xi_k - \xi^s\|_2 \le c \rho^k \|\xi_0 - \xi^s\|_2 + d(\bar{\varepsilon}), \quad k \in \mathbb{I}_{\ge 0},
	\end{equation}
	where
$\xi^s \triangleq \left[\begin{smallmatrix}
    &\mathbf{1}_l\otimes u^s\\
    &\mathbf{1}_l\otimes y^s
\end{smallmatrix}\right]$ is the extended equilibrium state and $\bar{\varepsilon}$ is the noise bound defined later. If $\bar{\varepsilon}=0$, then $(u^s,y^s)$ is said to be exponentially stable.
\end{mydef}

We can now formally formulate the problem considered in this paper as follows: \textit{Given a desired equilibrium $(u^s,y^s)$, design a robust data-driven MPC scheme that ensures practical exponential stability in the sense of \eqref{practical exponential stability} while satisfying the input constraints.}

\section{Main Results}\label{Main Results}
In this section, we present a robust data-driven MPC scheme with terminal ingredients. Our main contribution is the theoretical result that ensures the practical exponential stability of the closed-loop system under the proposed MPC scheme. In \cref{Robust Data-Driven MPC with Terminal ingredients}, we present the proposed robust data-driven MPC scheme with terminal ingredients. In \cref{Robust Terminal Controller Design}, we design robust terminal ingredients directly from offline data, ensuring that specific terminal conditions are satisfied. In \cref{Lower and Upper bound of Lyapunov Function}, we establish the lower and upper bounds of the Lyapunov function. In \cref{Bound on the Output Prediction Error}, we explicitly bound the difference between the predicted output and the nominal output with the application of the predicted input. In \cref{Analysis of the Optimal Solution Continuity}, we provide some technical results on the continuity of the optimal solution. In \cref{Recursive Feasibility}, we construct a feasible solution for the next time step by simultaneously shifting the current optimal input and output trajectories and thus prove the recursive feasibility. In \cref{Practical Exponential Stability}, we present the main theoretical result of practical exponential stability. Finally, in \cref{remark on different mpc scheme}, we discuss the relation between the proposed framework and existing data-driven MPC schemes.

\subsection{Robust Data-Driven MPC with Terminal Ingredients}\label{Robust Data-Driven MPC with Terminal ingredients}
Assume that a set of offline input-output data $\{u_k^d,\tilde{y}_k^d\}_{k=0}^{N-1}$ has been collected, where $\tilde{y}_k^d = y_k^d + \varepsilon_k^d$ denotes the output data corrupted by offline measurement noise. The superscript $d$ is used to indicate offline data. In contrast, MPC requires real-time measurement of the system output. Specifically, at each time step $k$, the measured output is given by $\tilde{y}_k = y_k + \varepsilon_k$, which is subject to online measurement noise. The measurement noise is assumed to be unknown but bounded, and we consider a general case where the noise bounds for the offline and online data may differ, i.e., $\|\varepsilon^d_k\|_2 \leq \bar{\varepsilon}_{off}$ and $\|\varepsilon_k\|_2 \leq \bar{\varepsilon}_{on}$, where $\bar{\varepsilon}_{off}$ and $\bar{\varepsilon}_{on}$ are known constants. To facilitate the theoretical analysis, we define $\bar{\varepsilon} = \max\{\bar{\varepsilon}_{off},\bar{\varepsilon}_{on}\}$ as the overall noise bound. The proposed robust data-driven MPC problem, which must be solved at each time step $t$, is formulated as follows:
\begin{subequations}\label{robust data-driven mpc scheme}
	\begin{align}
		&\Bar{J}^*_L(u_{[t-l,t-1]},\tilde{y}_{[t-l,t-1]})\nonumber \\ 
		&\begin{aligned}&= \min_{\substack{\Bar{\alpha}(t),\Bar{\sigma}(t),\\ \Bar{u}(t),\Bar{y}(t)}} \sum_{k=0}^{L-1}\left(\|\Bar{u}_k(t)-u^s\|_R^2 + \|\Bar{y}_k(t)-y^s\|_Q^2\right) \\
		&+ \|\Bar{\xi}_L(t)-\xi^s\|_P^2  + \lambda_\alpha \Bar{\varepsilon}^{\beta_\alpha}\|\Bar{\alpha}(t)\|^2+ \frac{\lambda_\sigma}{\Bar{\varepsilon}^{\beta_\sigma}}\|\Bar{\sigma}(t)\|^2\end{aligned} \label{objective function} \\
        &\qquad \mathrm{s.t.}\quad \begin{bmatrix}
            \Bar{u}(t)\\
            \Bar{y}(t) + \Bar{\sigma}(t)
        \end{bmatrix} = \begin{bmatrix}
            H_{L+l}(u^d)\\
            H_{L+l}(\Tilde{y}^d)
        \end{bmatrix}\Bar{\alpha}(t), \label{parametrization using offline data}\\
        &\qquad\qquad\ \begin{bmatrix}
            \Bar{u}_{[-l,-1]}(t)\\
            \Bar{y}_{[-l,-1]}(t)
        \end{bmatrix} = \begin{bmatrix}
            u_{[t-l,t-1]}\\
            \Tilde{y}_{[t-l,t-1]}
        \end{bmatrix}, \label{initialization}\\
        &\qquad\qquad\ \Bar{\xi}_L(t)\in \Sigma_f, \Bar{\xi}_L(t) = \begin{bmatrix}
            \Bar{u}_{[L-l,L-1]}(t)\\
            \Bar{y}_{[L-l,L-1]}(t)
        \end{bmatrix}, \label{terminal ingredient}\\
        &\qquad\qquad\ \Bar{u}_k(t)\in\mathbb{U}, k\in\mathbb{I}_{[0,L-1]}. \label{input constraints}
	\end{align}
\end{subequations}
There are some important remarks regarding the above robust data-driven MPC problem. 
\begin{enumerate}
	\item This data-driven MPC problem requires the past $l$ steps of input-output data for initialization. In the rest of the paper, we will use $\Bar{J}_L^*(\Tilde{\xi}_t)$ to denote the optimal cost for simplicity, where we define $\Tilde{\xi}_t \triangleq \left[\begin{smallmatrix}
	    u_{[t-l,t-1]}^\top &\Tilde{y}_{[t-l,t-1]}^\top
	\end{smallmatrix}\right]^\top$.
	\item Due to the presence of measurement noise, $(\bar{u}(t), \bar{y}(t))$ is generally not a system trajectory. To tackle this issue, we introduce a slack variable $\Bar{\sigma}(t)$ in \eqref{parametrization using offline data} to absorb the mismatch and penalize it in the objective function \eqref{objective function}. We also penalize the decision variable $\Bar{\alpha}(t)$ in the objective to reduce the influence of measurement noise.
	\item We require $Q,R\succ0$ and $\lambda_\alpha,\lambda_\sigma,\beta_\alpha,\beta_\sigma>0$. Note that a terminal cost term with weight matrix $P \succ 0$ is incorporated, which plays a crucial role in stability analysis in the presence of a terminal set constraint. 
	\item The input constraint set $\mathbb{U}$ in \eqref{input constraints} can be any convex polytope (ellipsoid) that satisfies $u^s\in \mathrm{int}(\mathbb{U})$.
    \item If $\bar{\varepsilon}=0$, Problem \eqref{robust data-driven mpc scheme} is understood in the limiting sense by imposing \(\bar\sigma=0\) and removing the two regularization terms.
\end{enumerate}

To align with the structure of Problem \eqref{robust data-driven mpc scheme}, we make the following assumption:
\begin{myass}\label{persistently exciting assumption}
	The input sequence $u^d$ is persistently exciting of order $L+l+n_x$.
\end{myass}

Note that in this assumption, the order of persistent excitation is different from that in \cref{lemma: Fundamental Lemma}, where only $L+n_x$ is required. This is because, in the proposed MPC scheme, the past $l$ steps of input-output data are used for initialization. The length of the predicted trajectory is thereby $L+l$.

We next make an assumption regarding the prediction horizon $L$:
\begin{myass}\label{prediction horizon assumption}
	The prediction horizon $L$ satisfies $L\geq l+n_x$.
\end{myass}

This assumption sets a lower bound on the prediction horizon $L$, which allows us to leverage the controllability of \eqref{basic system} for trajectory construction of length $L-l$. 

We assume that the terminal set $\Sigma_f$ satisfies the following conditions:
\begin{myass}\label{terminal conditions assumption}
	There exist matrices $P\succ 0$ and $K\in \mathbb{R}^{n_u\times n_\xi}$ and a set $\Sigma_f\subseteq 
\mathbb{U}^l\times (\mathbb{R}^{n_y})^l$ satisfying $\xi^s\in \Sigma_f$ such that, for every $\xi\in \Sigma_f$, the following properties hold with $u = u^s + K(\xi-\xi^s)$, $\xi^e = \xi-\xi^s$, $u^e = u-u^s$, $y^e = y-y^s$:
\begin{enumerate}
	\item $\Tilde{A}\xi + \Tilde{B}u \in \Sigma_f$;
	\item $u\in \mathbb{U}$;
	\item $\|(\Tilde{A}+\Tilde{B}K)\xi^e\|^2_P \leq \|\xi^e\|^2_P - \|u^e\|^2_R - \|y^e\|^2_Q$.
\end{enumerate}
\end{myass}
\cref{terminal conditions assumption} is a standard assumption in model-based MPC to ensure the stability of the closed-loop system \cite{chen1998quasi,rawlings2020model}. In \cref{Robust Terminal Controller Design}, we will show how to design the weight matrix $P$, terminal controller $K$, and terminal set $\Sigma_f$ directly from the offline input-output data $\{u_k^d,\tilde{y}_k^d\}_{k=0}^{N-1}$, which is a key ingredient of this paper. It is straightforward to verify that the proposed MPC scheme \eqref{robust data-driven mpc scheme} is a convex quadratically constrained quadratic program (QCQP) if $\mathbb{U}$ and $\Sigma_f$ are convex ellipsoids. Therefore, it can be solved efficiently using off-the-shelf tools, e.g., CVX \cite{grant2008cvx}. We denote the optimal solution to Problem \eqref{robust data-driven mpc scheme} at time step $t$ by $\Bar{\alpha}^*(t)$, $\Bar{\sigma}^*(t)$, $\Bar{u}^*(t)$, and $\Bar{y}^*(t)$. The complete control algorithm based on Problem \eqref{robust data-driven mpc scheme} is summarized in \cref{full algorithm}.

\begin{algorithm}[t]
    \caption{Robust data-driven MPC algorithm}
    \textbf{Input:} the offline input-output data $\{u_k^d,\tilde{y}_k^d\}_{k=0}^{N-1}$, where $u^d$ satisfies \cref{persistently exciting assumption}; the prediction horizon $L$ satisfying \cref{prediction horizon assumption}; the equilibrium $(u^s,y^s)$; the input constraint set $\mathbb{U}$; the parameters $Q,R,\lambda_\alpha,\lambda_\sigma,\beta_\alpha,\beta_\sigma$.\\
    \textbf{Part 1 (Robust terminal set design):} Design the terminal set $\Sigma_f$, terminal controller $K$, and weight matrix $P$ from the offline data.\\
    \textbf{Part 2 (One-step data-driven MPC):}\\
    (1) At each time step $t$, measure the past $l$ steps of input-output data $(u_{[t-l,t-1]},
		\Tilde{y}_{[t-l,t-1]})$ and solve Problem \eqref{robust data-driven mpc scheme}.\\
    (2) Apply $u_t = \Bar{u}^*_0(t)$.\\
    (3) Set $t = t + 1$ and go back to (1).

    \label{full algorithm}
\end{algorithm}

\subsection{Robust Terminal Set Design}\label{Robust Terminal Controller Design}
In this section, we focus on the first part of \cref{full algorithm}, i.e., the design of the terminal set $\Sigma_f$, terminal controller $K$, and weight matrix $P$ from the offline input-output data $\{u_k^d,\tilde{y}_k^d\}_{k=0}^{N-1}$. Before presenting the design procedure, we first introduce a technical lemma that will be useful in the subsequent analysis.
\begin{mylem}[{\cite{bisoffi2021trade}}]\label{s procedure}
Let $T_0,\ldots,T_\ell\in \mathbb{R}^{(q+p)\times(q+p)}$ be symmetric matrices. If there exist scalars $\tau_1\ge0,\ldots,\tau_\ell\ge0$ satisfying
$
T_0-\sum_{i=1}^{\ell}\tau_iT_i\succeq0,
$
then
$
\left[\begin{smallmatrix} I \\ Z \end{smallmatrix}\right]^\top T_0 \left[\begin{smallmatrix} I \\ Z \end{smallmatrix}\right] \succeq 0
$
holds for every $Z\in\mathbb{R}^{p\times q}$ satisfying
$
\left[\begin{smallmatrix} I \\ Z \end{smallmatrix}\right]^\top T_i \left[\begin{smallmatrix} I \\ Z \end{smallmatrix}\right] \succeq 0, i=1,\ldots,\ell.
$
\end{mylem}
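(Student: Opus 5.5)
The argument is a direct congruence argument, so no S-lemma machinery (e.g. Finsler's lemma) is needed. Fix an arbitrary $Z\in\mathbb{R}^{p\times q}$ satisfying $\left[\begin{smallmatrix} I \\ Z \end{smallmatrix}\right]^\top T_i \left[\begin{smallmatrix} I \\ Z \end{smallmatrix}\right] \succeq 0$ for all $i=1,\ldots,\ell$. Set $M\triangleq\left[\begin{smallmatrix} I \\ Z \end{smallmatrix}\right]\in\mathbb{R}^{(q+p)\times q}$, so that each constraint reads $M^\top T_i M\succeq0$ and the target is $M^\top T_0 M\succeq 0$.

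First, I will use that congruence preserves semidefiniteness. From the hypothesis $T_0-\sum_{i=1}^{\ell}\tau_iT_i\succeq0$, multiplying on the left by $M^\top$ and on the right by $M$ gives
\begin{equation*}
M^\top T_0 M-\sum_{i=1}^{\ell}\tau_i M^\top T_i M\succeq 0.
\end{equation*}
This holds because $v^\top M^\top S M v=(Mv)^\top S(Mv)\ge0$ for any $S\succeq0$ and any $v\in\mathbb{R}^q$.

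Second, I will rearrange this as $M^\top T_0 M\succeq \sum_{i=1}^{\ell}\tau_i M^\top T_i M$. Each summand on the right is positive semidefinite, since $\tau_i\ge0$ and $M^\top T_iM\succeq0$ by the assumption on $Z$. The positive semidefinite cone is closed under nonnegative combinations, so the sum is positive semidefinite. By transitivity of the Loewner order, $M^\top T_0M\succeq0$, which is the claim.

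There is no genuine obstacle here; the only care needed is keeping the dimensions of $M$ consistent with the $(q+p)$-block partition of the $T_i$. Note that the reverse implication (losslessness of the S-procedure) is not claimed and would require extra conditions. The statement only asserts sufficiency, which the congruence argument above delivers.
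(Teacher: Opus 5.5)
Your congruence argument is correct and complete. Pre- and post-multiplying $T_0-\sum_{i=1}^{\ell}\tau_iT_i\succeq0$ by $\left[\begin{smallmatrix} I \\ Z \end{smallmatrix}\right]^\top$ and $\left[\begin{smallmatrix} I \\ Z \end{smallmatrix}\right]$, and then using $\tau_i\ge0$ together with the constraints on $Z$, is the standard proof of this sufficiency direction of the S-procedure. The paper gives no proof of its own and simply imports the result from \cite{bisoffi2021trade}, so your argument supplies the standard justification for the cited lemma.
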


We first partition the input-output data $\{u_k^d,\tilde{y}_k^d\}_{k=0}^{N-1}$ into the sequence $\{\Tilde{\zeta}^d_k\}^{N-1}_{k=l}$, where we define 
\begin{equation}
    \Tilde{\zeta}^d_k \triangleq \begin{bmatrix}
        u^d_{[k-l,k]}\\
        \Tilde{y}^d_{[k-l,k]}
    \end{bmatrix} \in \mathbb{R}^{n_\zeta}, n_\zeta = (l+1)(n_u+n_y).
\end{equation}
By substituting $\Tilde{\zeta}^d_k$ into \eqref{extended system}, we have the following relation:
\begin{equation}
    \Tilde{y}^d_k = 
    \begin{bmatrix}
        \begin{array}{c|c|c}
            \Delta_1 & \Delta_2 & \Delta_3
        \end{array}
    \end{bmatrix}
    \Tilde{Z}_k
    +
    \begin{bmatrix}
        \begin{array}{c|c}
            I &-\Delta_2
        \end{array}
    \end{bmatrix}
    E^d_k,
\end{equation}
where we denote
\begin{equation}
    \begin{aligned}
    \Delta_1 &\triangleq C\mathcal{R}_l-CA^l\mathcal{O}^\dagger_l\mathcal{T}_l, \Delta_2 \triangleq CA^l\mathcal{O}^\dagger_l, \Delta_3 \triangleq D,\\
    \Tilde{Z}_k &\triangleq
\begin{bmatrix}
    \begin{array}{c|c|c}
        \begin{smallmatrix}
            (u^d_{k-l})^\top & \cdots & (u^d_{k-1})^\top
        \end{smallmatrix}
        &
        \begin{smallmatrix}
            (\Tilde{y}^d_{k-l})^\top & \cdots & (\Tilde{y}^d_{k-1})^\top
        \end{smallmatrix}
        &
        \scriptstyle{(u^d_k)^\top}
    \end{array}
\end{bmatrix}^{\top},\\
E^d_k & \triangleq
\begin{bmatrix}
    \begin{array}{c|c}
        \scriptstyle{(\varepsilon^d_k)^\top}
        &
        \begin{smallmatrix}
            (\varepsilon^d_{k-l})^\top & \cdots & (\varepsilon^d_{k-1})^\top
        \end{smallmatrix}
    \end{array}
\end{bmatrix}^{\top}.
    \end{aligned}
\end{equation}
Since the system matrices $A$, $B$, $C$, and $D$ are unknown, the matrices $\Delta_1$, $\Delta_2$, and $\Delta_3$ are also unknown. However, we can derive a rough uncertainty set for $\Delta \triangleq \begin{bmatrix}
    \Delta_1 & \Delta_2 & \Delta_3
\end{bmatrix}$ based on the offline data. Recall that the offline measurement noise satisfies $\|\varepsilon^d_k\|_2 \leq \bar{\varepsilon}_{off}$, which implies that $E^d_k {E^d_k}^\top \preceq \Theta$, where $\Theta = \mathrm{blkdiag}(\Theta_{11},\Theta_{22}), \Theta_{11} = (l+1)\bar{\varepsilon}_{off}^2 I_{n_y}, \Theta_{22} = (l+1) \bar{\varepsilon}_{off}^2 I_{ln_y}$. Therefore, we have
\begin{equation}
    (\Tilde{y}^d_k - \Delta \Tilde{Z}_k)(\Tilde{y}^d_k - \Delta \Tilde{Z}_k)^\top \preceq 
    \begin{bmatrix}
        I &-\Delta_2
    \end{bmatrix}
    \Theta
    \begin{bmatrix}
        I &-\Delta_2
    \end{bmatrix}^\top,
\end{equation}
and thus 
\begin{equation}
    \begin{bmatrix}
        I\\
        \Delta^\top
    \end{bmatrix}^\top
    \underbrace{
    \begin{bmatrix}
        \Tilde{y}^d_k (\Tilde{y}_k^d)^\top - \Theta_{11} & -\Tilde{y}^d_k {\Tilde{Z}_k}^\top \\
        -\Tilde{Z}_k (\Tilde{y}_k^d)^\top & \Tilde{Z}_k {\Tilde{Z}_k}^\top - \left[
            \begin{smallmatrix}
                0 &0 &0\\
                0 &\Theta_{22} &0\\
                0 &0 &0
            \end{smallmatrix}
        \right]
    \end{bmatrix}}_{\Omega_k}
    \begin{bmatrix}
        I\\
        \Delta^\top
    \end{bmatrix}\preceq 0.
\end{equation}
We slightly modify the above inequality by left- and right-multiplying $T$ and its transpose, where we define $T \triangleq \begin{bmatrix}
    \begin{smallmatrix}
    0 &\cdots &0 &I
    \end{smallmatrix}
\end{bmatrix}^\top\in\mathbb{R}^{(n_u+n_y)l\times n_y}$. This leads to the following inequality
\begin{equation}\label{single data QMI}
    \begin{bmatrix}
        I\\
        (T\Delta)^\top
    \end{bmatrix}^\top
    \underbrace{
        \begin{bmatrix}
            T &0\\
            0 &I
        \end{bmatrix}\Omega_k
        \begin{bmatrix}
            T &0\\
            0 &I
        \end{bmatrix}^\top
    }_{\Bar{\Omega}_k}\begin{bmatrix}
        I\\
        (T\Delta)^\top
    \end{bmatrix}\preceq 0.
\end{equation}
The inequalities \eqref{single data QMI} define the following outer approximation of the models that can possibly generate the noisy offline data:
\begin{equation}
    \Sigma_\Delta = \bigcap_{k=l}^{N-1} \left\{ \Delta \,\middle|\,\begin{bmatrix}
        I\\
        (T\Delta)^\top
    \end{bmatrix}^\top
    \Bar{\Omega}_k\begin{bmatrix}
        I\\
        (T\Delta)^\top
    \end{bmatrix}\preceq 0.\right\}.
\end{equation}
We can ensure that the true $\Delta$ belongs to $\Sigma_\Delta$, but we cannot distinguish its elements. Therefore, we need to design the terminal ingredients such that \cref{terminal conditions assumption} holds for all $\Delta\in \Sigma_\Delta$.  Note that by leveraging $T$, we can express $\Tilde{A}$ and $\Tilde{B}$ as $\Tilde{A} = A'+T\Delta\left[\begin{smallmatrix}I\\0\end{smallmatrix}\right], \Tilde{B} = B'+T\Delta\left[\begin{smallmatrix}0\\I\end{smallmatrix}\right]$ for some known matrices $A'$ and $B'$. 

\begin{mythr}\label{robust terminal set design theorem}
    Suppose that there exist matrices $X\in \mathbb{R}^{n_\xi\times n_\xi}$, $Y\in \mathbb{R}^{n_u\times n_\xi}$, and scalars $\{\lambda_i\}_{i=l,l+1,\ldots,N-1}$ such that the following semidefinite program (SDP) is feasible:
    \begin{subequations}\label{robust terminal set design sdp}
        \begin{align}
        &\begin{bmatrix}
            \begin{bmatrix}
                -X &0\\
                0 &0
            \end{bmatrix}-\sum\limits_{i=l}^{N-1}\lambda_i\Bar{\Omega}_i & * &*\\
            \begin{bmatrix}(A'X+B'Y)^\top &X &Y^\top\end{bmatrix} &-X &*\\
            0 & \begin{bmatrix}Q^{\frac{1}{2}}T^\top X \\R^{\frac{1}{2}}Y\end{bmatrix} & -I
        \end{bmatrix}\preceq0, \label{robust terminal set design sdp eq1}\\
        &\begin{bmatrix}
            -X &*\\
            \begin{bmatrix}Q^{\frac{1}{2}}T^\top X \\R^{\frac{1}{2}}Y\end{bmatrix} &-I
        \end{bmatrix}\prec 0, X\succ 0, \lambda_i\ge 0, i\in \mathbb{I}_{[l,N-1]}, \label{robust terminal set design sdp eq2}
            \end{align}
    \end{subequations}
then we can compute $P = X^{-1}-T QT^\top, K = YX^{-1}$ and choose $r\geq0$ such that 
    \begin{equation}\label{expression of terminal set}
        \Sigma_f \triangleq \left\{ \xi\in\mathbb{R}^{n_\xi}\middle|\, \|\xi-\xi^s\|^2_P\leq r \right\}
    \end{equation} 
is contained in $\mathbb{U}^l\times (\mathbb{R}^{n_y})^l$. Then \cref{terminal conditions assumption} holds with the obtained $P,K$, and $\Sigma_f$.
\end{mythr}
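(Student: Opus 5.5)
The plan is to verify the three items of \cref{terminal conditions assumption} in reverse order of difficulty. First I will check that $P\succ0$. Then I will show that the decrease condition 3) holds for every $\Delta\in\Sigma_\Delta$ using the S-procedure of \cref{s procedure}. Finally, 1) and 2) will follow from 3) together with the shift structure of the extended state.

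\emph{Positivity of $P$.} Apply a Schur complement to \eqref{robust terminal set design sdp eq2}. This gives $X - XTQT^\top X - Y^\top R Y\succ0$. A congruence with $X^{-1}$, using $K=YX^{-1}$, yields $X^{-1}-TQT^\top - K^\top R K\succ0$. Hence $P=X^{-1}-TQT^\top\succ K^\top RK\succeq0$.

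\emph{Decrease condition.} Write $\xi^+ = (\tilde A+\tilde BK)\xi^e$. The key observation is that $T^\top\xi^+$ is exactly the newest output entry of the shifted extended state, i.e. $T^\top \xi^+ = y^e$. Therefore
\[
\|\xi^+\|_P^2=\|\xi^+\|_{X^{-1}}^2-\|y^e\|_Q^2 ,
\]
and the $\|y^e\|_Q^2$ terms cancel on both sides of 3). Condition 3) thus reduces to
\[
(\tilde A+\tilde BK)^\top X^{-1}(\tilde A+\tilde BK)\preceq X^{-1}-TQT^\top-K^\top RK ,
\]
where $u^e=K\xi^e$ and the term $TQT^\top$ accounts for the output already stored in $\xi^e$. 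A congruence with $X$ turns this into
\[
(\tilde AX+\tilde BY)^\top X^{-1}(\tilde AX+\tilde BY)\preceq X - XTQT^\top X - Y^\top RY .
\]
Next substitute $\tilde AX+\tilde BY = A'X+B'Y + T\Delta\left[\begin{smallmatrix}X\\Y\end{smallmatrix}\right]$. Taking Schur complements with respect to the $-X$ and $-I$ blocks of \eqref{robust terminal set design sdp eq1}, the requirement becomes a quadratic matrix inequality of the form
\[
\begin{bmatrix} I\\ (T\Delta)^\top\end{bmatrix}^\top T_0\begin{bmatrix} I\\ (T\Delta)^\top\end{bmatrix}\succeq0 .
\]
Here $T_0$ is built from $X$, $Y$, $A'$, $B'$, $Q$ and $R$ only. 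The data constraints defining $\Sigma_\Delta$ have the same form with $T_i=-\bar\Omega_i$. By \cref{s procedure}, the multipliers $\lambda_i\ge0$ in \eqref{robust terminal set design sdp eq1} certify $T_0-\sum_i\lambda_i(-\bar\Omega_i)\succeq0$. Hence the inequality holds for every $\Delta\in\Sigma_\Delta$, in particular for the true one.

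The main obstacle is this last rearrangement. One must apply the Schur complements so that $T\Delta$ enters only through the outer factor $[I;(T\Delta)^\top]$, with the $-X$ block in the upper-left matching the $I$ partition and the block $\left[\begin{smallmatrix}X\\Y\end{smallmatrix}\right]$ matching $\tilde Z$. I would first expand the $(2,1)$ block of \eqref{robust terminal set design sdp eq1} as $[(A'X+B'Y)^\top\ X\ Y^\top] = [I\ \ (T\Delta)^\top]$ times a constant matrix. I would then verify by a dualization/Schur argument that the resulting condition is equivalent to the displayed robust inequality. Care is needed with the sign conventions: the data inequalities are $\preceq0$, whereas \cref{s procedure} is stated for $\succeq0$.

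\emph{Invariance and input constraints.} From 3), $\|\xi^++\xi^s-\xi^s\|_P^2\le\|\xi-\xi^s\|_P^2\le r$, so the sublevel set \eqref{expression of terminal set} is invariant, which gives 1). For 2), the last input block of $\tilde A\xi+\tilde Bu$ equals $u$ by the shift structure of $\xi$. Since $\tilde A\xi+\tilde Bu\in\Sigma_f\subseteq\mathbb U^l\times(\mathbb R^{n_y})^l$ by the choice of $r$, we obtain $u\in\mathbb U$. Finally, $\xi^s\in\Sigma_f$ holds trivially.
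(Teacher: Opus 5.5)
Your proposal is correct and follows essentially the paper's route. Schur complements of \eqref{robust terminal set design sdp eq1} with respect to the $-X$ and $-I$ blocks give the certificate for \cref{s procedure} with $T_i=-\bar\Omega_i$, which yields $(\tilde A+\tilde BK)^\top X^{-1}(\tilde A+\tilde BK)\preceq X^{-1}-TQT^\top-K^\top RK$ for every $\Delta\in\Sigma_\Delta$. Your identity $T^\top(\tilde A+\tilde BK)\xi^e=y^e$, together with the sublevel-set and shift-structure arguments, fills in exactly what the paper delegates to \cite[Proposition~10]{berberich2021design}. One small slip: it is $(\tilde AX+\tilde BY)^\top$, not the constant $(2,1)$ block, that factors as $\bigl[(A'X+B'Y)^\top\ \ X\ \ Y^\top\bigr]\left[\begin{smallmatrix}I\\(T\Delta)^\top\end{smallmatrix}\right]$.
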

\textit{Proof}: The proof follows a similar line to \cite[Proposition 10]{berberich2021design}, with the main difference being that the noisy-data uncertainty set $\Sigma_\Delta$ is handled using \cref{s procedure}. Let $K=YX^{-1}$ and define $Z \triangleq T Q T^\top+K^\top R K -X^{-1}$. Applying a congruence transformation and the Schur complement to \eqref{robust terminal set design sdp} gives
\begin{equation}
    \begin{bmatrix}
                -X &0\\
                0 &0
    \end{bmatrix} - \begin{bmatrix}A'+B'K \\I \\K\end{bmatrix}\circledast Z^{-1}  - \sum\limits_{i=l}^{N-1}\lambda_i\Bar{\Omega}_i\preceq 0.
\end{equation}
Invoking \cref{s procedure} and applying the Schur complement gives
\begin{equation}
(\Tilde{A}+\Tilde{B}K)^\top\circledast X^{-1} -X^{-1} + T Q T^\top + K^\top R K \preceq 0.
\end{equation}
The rest of the proof follows the same procedure as in \cite[Proposition 10]{berberich2021design}.
\hfill$\square$

\cref{robust terminal set design theorem} provides a sufficient condition for designing the terminal ingredients directly from the offline data. The feasibility of the SDP \eqref{robust terminal set design sdp} can also be checked efficiently using CVX \cite{grant2008cvx}. If the SDP is feasible, we can compute $P$, $K$, and $\Sigma_f$ accordingly. The terminal controller $K$ will be used in the subsequent theoretical analysis, and the weight matrix $P$ and terminal set $\Sigma_f$ are required in the MPC problem \eqref{robust data-driven mpc scheme}. Now, we have obtained all the ingredients required for the MPC problem \eqref{robust data-driven mpc scheme} and we can implement the second part of \cref{full algorithm}. Our next step is to analyze the stability of the closed-loop system under the proposed MPC scheme, which will be presented in the following subsections. In this paper, w.l.o.g., we assume that $u^s = 0, y^s = 0$, and $\xi^s = 0$ for simplicity. The results can be easily extended to the case of nonzero equilibrium.
\subsection{Lower and Upper Bound of Lyapunov Function}\label{Lower and Upper bound of Lyapunov Function}
In this section, we establish a technical result regarding the lower and upper bounds of the Lyapunov function. The Lyapunov function is defined as $V_t = \Bar{J}_L^*(\Tilde{\xi}_t)+\gamma W(\xi_t)$, where $\Bar{J}_L^*(\Tilde{\xi}_t)$ is the optimal cost of Problem \eqref{robust data-driven mpc scheme} and $W(\xi_t) = \xi_t^\top P_W \xi_t$ with $P_W\succ 0$ is the IOSS Lyapunov function of the extended system \eqref{extended system} introduced in \cref{IOSS lemma}. The parameter $\gamma>0$ is a constant that will be determined later. The following lemma shows that $V_t$ admits quadratic lower and upper bounds in terms of the extended state $\xi_t$ when $\xi_t$ is around the origin:
\begin{mylem}\label{lower and upper bound lemma}
    Suppose that Assumptions \ref{persistently exciting assumption}--\ref{terminal conditions assumption} hold. Then, there exists a constant $\delta>0$ such that if $\|\xi_t\|\leq \delta$, the following relation holds
    \begin{equation}\label{lower and upper bound eq}
        \gamma\lambda_{\min}(P_W)\|\xi_t\|^2 \leq V_t \leq c_{V,1}(\bar{\varepsilon})\|\xi_t\|^2 + c_{V,2}(\bar{\varepsilon}),
    \end{equation}
    where $c_{V,1}(\bar{\varepsilon})$ is lower-bounded by a positive constant as $\bar{\varepsilon}\rightarrow 0$, and $c_{V,2}(\bar{\varepsilon}) = 2l\lambda_\sigma \Bar{\varepsilon}^{2-\beta_\sigma}$.
\end{mylem}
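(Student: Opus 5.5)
The lower bound is immediate and needs no structure of the MPC problem. Every term in the objective \eqref{objective function} is nonnegative, so $\Bar{J}_L^*(\Tilde{\xi}_t)\ge 0$. Hence $V_t\ge \gamma W(\xi_t)=\gamma\xi_t^\top P_W\xi_t\ge \gamma\lambda_{\min}(P_W)\|\xi_t\|^2$, which holds for every $\xi_t$, not only for $\|\xi_t\|\le\delta$.

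The upper bound requires a feasible candidate for Problem \eqref{robust data-driven mpc scheme} whose cost is quadratic in $\|\xi_t\|$ plus an $\bar\varepsilon$-dependent offset.

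\textbf{Step 1: nominal candidate.} The true (noise-free) past $(u_{[t-l,t-1]},y_{[t-l,t-1]})$ is a system trajectory with extended state $\xi_t$. By controllability and \cref{prediction horizon assumption}, there is an input steering $x$ to the origin in $n_x$ steps, after which we hold $u=0$, so $y=0$ as well. Concretely, set $\hat u_{[0,L-1]}=G x_t=G\Gamma\xi_t$ for a fixed deadbeat-type linear map $G$, and choose it so that $\hat u_k=0$ and $\hat y_k=0$ for $k\ge n_x$. Because $L\ge l+n_x$, the terminal window $\hat\xi_L$ then equals $0\in\Sigma_f$, and $\hat u_k\in\mathbb U$ provided $\|\hat u\|\le c\|\xi_t\|\le c\delta$ is small, using $0\in\mathrm{int}(\mathbb U)$. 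This is where $\delta$ comes from. The resulting input-output sequence $(\hat u,\hat y)$ of length $L+l$, with the true past prepended, is a trajectory. By \cref{persistently exciting assumption} and \cref{lemma: Fundamental Lemma} it can be written with the noise-free Hankel matrix: $[H(u^d);H(y^d)]\hat\alpha=[\hat u;\hat y]$. Take the minimum-norm solution $\hat\alpha=\mathcal H^\dagger[\hat u;\hat y]$, so that $\|\hat\alpha\|\le c_\alpha\|\xi_t\|$.

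\textbf{Step 2: absorb the noise.} The actual constraints use $\tilde y^d$ and the noisy past $\tilde y_{[t-l,t-1]}$. I keep $\bar u=\hat u$ and set the past outputs to $\bar y_{[-l,-1]}=\tilde y_{[t-l,t-1]}$, while the future outputs remain $\bar y_{[0,L-1]}=\hat y$. The slack is then defined by \eqref{parametrization using offline data}:
\[
\bar\sigma=H_{L+l}(\tilde y^d)\hat\alpha-\bar y=H_{L+l}(\varepsilon^d)\hat\alpha-\big[\varepsilon_{[t-l,t-1]};0\big].
\]
The terminal and input constraints are unaffected, since $\bar u$ and the future outputs are the nominal ones. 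Using $\|H(\varepsilon^d)\|\le c_H\bar\varepsilon$, we get
\[
\|\bar\sigma\|^2\le 2c_H^2\bar\varepsilon^2\|\hat\alpha\|^2+2l\bar\varepsilon^2 .
\]
The second term matches the stated $c_{V,2}$ once it is multiplied by $\lambda_\sigma/\bar\varepsilon^{\beta_\sigma}$.

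\textbf{Step 3: sum the cost.} The stage and terminal costs are at most $c\|\xi_t\|^2$, and in fact the terminal cost vanishes. The regularization contributes $\lambda_\alpha\bar\varepsilon^{\beta_\alpha}c_\alpha^2\|\xi_t\|^2$. Adding $\gamma\lambda_{\max}(P_W)\|\xi_t\|^2$ gives
\[
c_{V,1}(\bar\varepsilon)=c_0+\gamma\lambda_{\max}(P_W)+\lambda_\alpha\bar\varepsilon^{\beta_\alpha}c_\alpha^2+2\lambda_\sigma c_H^2c_\alpha^2\bar\varepsilon^{2-\beta_\sigma},
\]
which stays bounded below by $c_0+\gamma\lambda_{\max}(P_W)>0$ as $\bar\varepsilon\to0$.

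\textbf{Main obstacle.} The delicate point is ensuring $\|\hat\alpha\|\le c_\alpha\|\xi_t\|$ uniformly. This requires the noise-free Hankel matrix to have the right rank and the pseudoinverse bound to be a fixed data-dependent constant. I would handle it by expressing the candidate as a linear function of $x_t$, through $\Gamma$, and using the full row rank of the $[H(u^d);\text{state}]$ part guaranteed by persistency of excitation.
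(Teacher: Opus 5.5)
Your proposal is correct and follows essentially the same route as the paper. It uses a deadbeat-type candidate that reaches the origin within $L-l\geq n_x$ steps, so the terminal window is zero and the terminal cost vanishes, together with a minimum-norm $\bar\alpha$ bounded linearly in $\|\xi_t\|$ and the slack $\bar\sigma=H_{L+l}(\tilde y^d)\bar\alpha-\bar y$ absorbing both offline and online noise; this yields exactly $c_{V,2}(\bar\varepsilon)=2l\lambda_\sigma\bar\varepsilon^{2-\beta_\sigma}$.

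Your \emph{main obstacle} is not a real one. The minimum-norm solution of the noise-free input--output Hankel equation coincides with the paper's choice $H_{ux}^\dagger[\bar u_{[-l,L-1]}(t);\,x_{t-l}]$, because the two affine solution sets are identical when $\mathcal O_{L+l}$ has full column rank. In any case, $\|\mathcal H^\dagger\|$ is a fixed finite constant and the right-hand side lies in the range of $\mathcal H$.
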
 
\textit{Proof}: See Appendix~\ref{proof of lower and upper bound lemma}.

This lemma establishes a crucial relation between the Lyapunov function $V_t$ and the extended state $\xi_t$, which will be instrumental in proving the practical exponential stability of the closed-loop system under the proposed MPC scheme. 
\subsection{Bound on the Output Prediction Error}\label{Bound on the Output Prediction Error}
Compared with the noise-free case \cite{berberich2021design}, where the optimal solution is a trajectory of the nominal system, one challenge that complicates the stability analysis in our setting is that the optimal solution $(\Bar{u}^*(t),\Bar{y}^*(t))$ of Problem \eqref{robust data-driven mpc scheme} is not necessarily a trajectory of \eqref{basic system} due to the presence of the online and offline measurement noise. We denote the nominal output trajectory starting from $x_t$ with the application of $\Bar{u}_{[0,L-1]}^*(t)$ by $\hat{y}_{[t,t+L-1]}$. Besides, we assume that there exists a sequence of constants $\rho_{2,l+k}$ satisfying 
\begin{equation}
    \rho_{2,l+k} \geq \|CA^{l+k}\mathcal{O}_l^\dagger\|^2, k\in \mathbb{I}_{[0,L-1]}.
\end{equation}
Then, we provide the following result bounding the difference between the predicted output $\Bar{y}^*$ and the nominal output $\hat{y}$. 
\begin{mylem}\label{prediction error lemma}
    Suppose that the MPC problem \eqref{robust data-driven mpc scheme} is feasible at time step $t$. Then, the following holds for $k\in\mathbb{I}_{[0,L-1]}$:
    \begin{equation}\label{prediction error eq}
        \begin{aligned}
        &\|\Bar{y}^*_k(t)-\hat{y}_{t+k}\|^2 \leq 3c_N\Bar{\varepsilon}^2\|\Bar{\alpha}^*(t)\|^2 + 3\|\Bar{\sigma}^*_k(t)\|^2\\
        &+3\rho_{2,l+k}(3l\Bar{\varepsilon}^2+3\|\Bar{\sigma}^*_{[-l,-1]}(t)\|^2+
        3lc_N\Bar{\varepsilon}^2\|\Bar{\alpha}^*(t)\|^2),
        \end{aligned}
    \end{equation}
    where $c_N \triangleq N-L-l+1$.
\end{mylem}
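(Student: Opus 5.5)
The plan is the standard Berberich-type argument. The key move is to split the optimal trajectory into a noise-free data part plus two explicit error terms, and then propagate the initialization error through the $l$-step past window.

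\emph{Step 1 (decomposition).} Write $H_{L+l}(\tilde y^d)=H_{L+l}(y^d)+H_{L+l}(\varepsilon^d)$. Define
$\bar y' \triangleq H_{L+l}(y^d)\bar\alpha^*(t)$ and $\bar u^*(t)=H_{L+l}(u^d)\bar\alpha^*(t)$. By \cref{lemma: Fundamental Lemma} (with \cref{persistently exciting assumption}), the pair $(\bar u^*(t),\bar y')$ is a true trajectory of \eqref{basic system} of length $L+l$. Constraint \eqref{parametrization using offline data} then gives
\[
\bar y^*_k(t)=\bar y'_k+H_{L+l}(\varepsilon^d)_k\bar\alpha^*(t)-\bar\sigma^*_k(t),\qquad k\in\mathbb I_{[-l,L-1]}.
\]
Each block row of $H_{L+l}(\varepsilon^d)$ has $c_N=N-L-l+1$ columns, each of norm at most $\bar\varepsilon$. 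By Cauchy--Schwarz, $\|H_{L+l}(\varepsilon^d)_k\bar\alpha^*\|^2\le c_N\bar\varepsilon^2\|\bar\alpha^*\|^2$. Summing over the $l$ past rows gives $\|H(\varepsilon^d)_{[-l,-1]}\bar\alpha^*\|^2\le lc_N\bar\varepsilon^2\|\bar\alpha^*\|^2$.

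\emph{Step 2 (initial state mismatch).} Let $x'$ be the state of the trajectory $(\bar u^*,\bar y')$ at prediction time $0$. The true state is $x_t$, and both trajectories share the past inputs $u_{[t-l,t-1]}$ by \eqref{initialization}. Relation \eqref{relationship between two systems 2} then gives
\[
x'-x_t=A^l\mathcal O_l^\dagger\bigl(\bar y'_{[-l,-1]}-y_{[t-l,t-1]}\bigr).
\]
Using \eqref{initialization}, $\bar y^*_{[-l,-1]}=\tilde y_{[t-l,t-1]}=y_{[t-l,t-1]}+\varepsilon_{[t-l,t-1]}$. Hence
\[
\bar y'_{[-l,-1]}-y_{[t-l,t-1]}=\varepsilon_{[t-l,t-1]}+\bar\sigma^*_{[-l,-1]}-H(\varepsilon^d)_{[-l,-1]}\bar\alpha^*.
\]
Applying $\|a+b+c\|^2\le3(\|a\|^2+\|b\|^2+\|c\|^2)$ together with $\|\varepsilon_{[t-l,t-1]}\|^2\le l\bar\varepsilon^2$ bounds this squared norm by
\[
3l\bar\varepsilon^2+3\|\bar\sigma^*_{[-l,-1]}\|^2+3lc_N\bar\varepsilon^2\|\bar\alpha^*\|^2.
\]

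\emph{Step 3 (combine).} Both $\bar y'$ and $\hat y$ are generated by the same inputs $\bar u^*_{[0,L-1]}$. Their difference is therefore $\bar y'_k-\hat y_{t+k}=CA^k(x'-x_t)=CA^{l+k}\mathcal O_l^\dagger(\cdot)$, with $D$ and input terms cancelling. This is where $\rho_{2,l+k}$ enters, bounding its squared norm by $\rho_{2,l+k}$ times the Step 2 bound. Finally,
\[
\bar y^*_k-\hat y_{t+k}=(\bar y'_k-\hat y_{t+k})+H(\varepsilon^d)_k\bar\alpha^*-\bar\sigma^*_k,
\]
and a second application of the three-term inequality gives exactly \eqref{prediction error eq}.

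The main obstacle is Step 2: one must carefully justify the state reconstruction from the past window. Since $x'=\Gamma[u_{[t-l,t-1]};\bar y'_{[-l,-1]}]$ and $x_t=\Gamma\xi_t$, only the output block of $\Gamma$, namely $A^l\mathcal O_l^\dagger$, survives. The index bookkeeping must then produce $CA^{l+k}\mathcal O_l^\dagger$ rather than $CA^k$ times something else.
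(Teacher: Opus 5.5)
Your proposal is correct and follows essentially the same argument as the paper. The paper works with the zero-input difference trajectory $y^- = \hat{y} - H_{L+l}(y^d)\bar{\alpha}^*(t)$ and propagates its past window through $CA^{l+k}\mathcal{O}_l^\dagger$; this is your state-mismatch step $x'-x_t = A^l\mathcal{O}_l^\dagger(\cdot)$ written in output coordinates, and the paper then closes with the same three-term inequality and the bound $\|H_q(\varepsilon^d)\|^2 \le q\, c_N\bar{\varepsilon}^2$, giving exactly the stated constants.
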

\textit{Proof}: The proof follows the same underlying argument as \cite[Lemma 2]{berberich2020data}, but is included in Appendix~\ref{proof of prediction error lemma} for completeness since the present setting employs an \(l\)-step extended state and Euclidean measurement noise bounds.

\cref{prediction error lemma} will later be used to construct a feasible solution for the MPC problem at the next time step, which is crucial for proving the recursive feasibility of the MPC scheme and practical exponential stability of the closed-loop system.
\subsection{Continuity Analysis of the Optimal Solution}\label{Analysis of the Optimal Solution Continuity}
In this section, we will introduce another technical result on the continuity of the optimal solution. Specifically, we aim to study the difference between the optimal solution/cost of Problem \eqref{robust data-driven mpc scheme} and the optimal solution/cost of the nominal MPC problem, which is provided as follows:
\begin{subequations}\label{nominal data-driven mpc scheme}
	\begin{align}
		&\check{J}^*_L(u_{[t-l,t-1]},y_{[t-l,t-1]})\nonumber \\ 
		&\begin{aligned}&= \min_{\substack{\check{\alpha}(t), \check{u}(t),\check{y}(t)}} \sum_{k=0}^{L-1}\left(\|\check{u}_k(t)-u^s\|_R^2 + \|\check{y}_k(t)-y^s\|_Q^2\right) \\
		&+ \|\check{\xi}_L(t)-\xi^s\|_P^2  + \lambda_\alpha \Bar{\varepsilon}^{\beta_\alpha}\|\check{\alpha}(t)\|^2\end{aligned}\label{nominal objective function} \\
        &\qquad \mathrm{s.t.}\quad \begin{bmatrix}
            \check{u}(t)\\
            \check{y}(t)
        \end{bmatrix} = \begin{bmatrix}
            H_{L+l}(u^d)\\
            H_{L+l}(y^d)
        \end{bmatrix}\check{\alpha}(t), \label{nominal parametrization using offline data}\\
        &\qquad\qquad\ \begin{bmatrix}
            \check{u}_{[-l,-1]}(t)\\
            \check{y}_{[-l,-1]}(t)
        \end{bmatrix} = \begin{bmatrix}
            u_{[t-l,t-1]}\\
            y_{[t-l,t-1]}
        \end{bmatrix}, \\
        &\qquad\qquad\ \check{\xi}_L(t)\in \Sigma_f, \check{\xi}_L(t) = \begin{bmatrix}
            \check{u}_{[L-l,L-1]}(t)\\
            \check{y}_{[L-l,L-1]}(t)
        \end{bmatrix}, \\
        &\qquad\qquad\ \check{u}_k(t)\in\mathbb{U}, k\in\mathbb{I}_{[0,L-1]}. 
	\end{align}
\end{subequations}
We denote its optimal solution by $\check{\alpha}^*(t), \check{u}^*(t),\check{y}^*(t)$. Note that this nominal MPC problem is not subject to measurement noise, which is different from our problem setting. Therefore, this problem is virtual, as the true noise-free online and offline outputs are inaccessible. We only use it for theoretical analysis and do not discuss its physical meaning. To carry out the analysis, we introduce another auxiliary MPC problem, which is presented below:
\begin{subequations}\label{auxiliary nominal data-driven mpc scheme}
	\begin{align}
		&\breve{J}^*_L(u_{[t-l,t-1]},y_{[t-l,t-1]})\nonumber \\ 
		&\begin{aligned}&= \min_{\substack{\breve{\alpha}(t), \breve{u}(t),\breve{y}(t)}} \sum_{k=0}^{L-1}\left(\|\breve{u}_k(t)-u^s\|_R^2 + \|\breve{y}_k(t)-y^s\|_Q^2\right) \\
		&+ \|\breve{\xi}_L(t)-\xi^s\|_P^2  + \lambda_\alpha \Bar{\varepsilon}^{\beta_\alpha}\|\breve{\alpha}(t)\|^2\end{aligned} \\
        &\qquad \mathrm{s.t.}\quad \begin{bmatrix}
            \breve{u}(t)\\
            \breve{y}(t) + \sigma_1
        \end{bmatrix} = \begin{bmatrix}
            H_{L+l}(u^d)\\
            H_{L+l}(y^d)
        \end{bmatrix}\breve{\alpha}(t), \\
        &\qquad\qquad\ \begin{bmatrix}
            \breve{u}_{[-l,-1]}(t)\\
            \breve{y}_{[-l,-1]}(t)
        \end{bmatrix} = \begin{bmatrix}
            u_{[t-l,t-1]}\\
            y_{[t-l,t-1]} + \sigma_2
        \end{bmatrix}, \\
        &\qquad\qquad\ \breve{\xi}_L(t)\in \Sigma_f, \breve{\xi}_L(t) = \begin{bmatrix}
            \breve{u}_{[L-l,L-1]}(t)\\
            \breve{y}_{[L-l,L-1]}(t)
        \end{bmatrix}, \\
        &\qquad\qquad\ \breve{u}_k(t)\in\mathbb{U}, k\in\mathbb{I}_{[0,L-1]}. 
	\end{align}
\end{subequations}
We denote its optimal solution by $\breve{\alpha}^*(t), \breve{u}^*(t),\breve{y}^*(t)$. In addition, we define $\sigma \triangleq \left[\begin{smallmatrix}
    \sigma_1\\
    \sigma_2
\end{smallmatrix}\right]$. Obviously, the only difference between Problem \eqref{auxiliary nominal data-driven mpc scheme} and Problem \eqref{nominal data-driven mpc scheme} is the presence of $\sigma$. This variable is pre-specified, and its expression will be provided in the proof of \cref{optimal solution continuity proposition}. It is evident that Problem \eqref{auxiliary nominal data-driven mpc scheme} is equivalent to Problem \eqref{nominal data-driven mpc scheme} when $\sigma = 0$. Before presenting the technical result, we refer to the formulation obtained by eliminating $\check{\alpha}(t)$ via the minimum-norm solution of the data equation as the reduced formulation of Problem \eqref{nominal data-driven mpc scheme}. A similar reduction procedure for Problem \eqref{robust data-driven mpc scheme} is provided in the proof of \cref{optimal solution continuity proposition}.
\begin{myass}\label{LICQ assumption}
The reduced formulation of Problem \eqref{nominal data-driven mpc scheme} satisfies the linear independence constraint qualification (LICQ) at its optimizer for all initial extended states sufficiently close to the equilibrium and all sufficiently small nonnegative noise bounds, i.e., the gradients of the equality constraints together with those of the active inequality constraints are linearly independent.
\end{myass}

To ensure that \cref{LICQ assumption} also covers the case $r=0$, an equivalent constraint representation is adopted. For $r=0$, the terminal inequality is replaced by an equivalent nonredundant linear terminal equality representation. We are now ready to provide the technical results below.
\begin{mypro}\label{optimal solution continuity proposition}
Suppose that Assumptions~\ref{persistently exciting assumption}--\ref{LICQ assumption} hold, that Problem \eqref{robust terminal set design sdp} is feasible, and that $\beta_\alpha>0, \beta_\sigma>0, \beta_\alpha+\beta_\sigma<2$ hold. Then, there exists a constant $\delta>0$ such that for all $\|\xi_t\|\leq \delta$, the following results hold:
\begin{enumerate}
    \item 
    \begin{subequations}\label{optimal solution continuity result 1}
    \begin{align}
        \check{J}^*_L(\xi_t) &\leq \big(c_{s,\xi}+ \lambda_\alpha\Bar{\varepsilon}^{\beta_\alpha} c_{\alpha,\xi}\big)\delta^2,\label{optimal solution continuity result 1.1}\\
        \Bar{J}^*_L(\Tilde{\xi}_t) &\leq \left(1+2c_H\tfrac{\lambda_\sigma}{\lambda_\alpha}\Bar{\varepsilon}^{2-\beta_\alpha-\beta_\sigma}\right)\check{J}^*_L(\xi_t)+2\lambda_\sigma l \Bar{\varepsilon}^{2-\beta_\sigma}, \label{optimal solution continuity result 1.2}
    \end{align}
    \end{subequations}
    where $c_H \triangleq (l+L)(N-L-l+1)$. $c_{s,\xi}$ and $c_{\alpha,\xi}$ are two constants that are defined in the proof of \cref{lower and upper bound lemma}.
    \item There exist functions $\beta_u \in \mathcal{K}_\infty$, $\beta_y \in \mathcal{K}_\infty$, and a constant $\bar{\varepsilon}_{\max}>0$ such that for all $\bar{\varepsilon}\in (0,\bar{\varepsilon}_{\max}]$,
    \begin{equation}\label{optimal solution continuity result 2}
        \begin{aligned}
            &\|\Bar{u}^*(t)-\check{u}^*(t)\|\leq \beta_u(\Bar{\varepsilon}),\\
            &\|\Bar{y}^*(t)-\check{y}^*(t)\|\leq \beta_y(\Bar{\varepsilon}).
        \end{aligned}
    \end{equation}
\end{enumerate}
\end{mypro}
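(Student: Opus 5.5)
For part 1, I will treat \eqref{optimal solution continuity result 1.1} and \eqref{optimal solution continuity result 1.2} separately, and in both cases exhibit explicit feasible points.

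\emph{Bound \eqref{optimal solution continuity result 1.1}.} I will reuse the construction from the proof of \cref{lower and upper bound lemma}. By controllability and \cref{prediction horizon assumption}, for $\|\xi_t\|\le\delta$ there is an input sequence that steers the state to the origin within $L-l\ge n_x$ steps. This sequence has energy at most $c\|\xi_t\|^2$, and padding it with zeros gives a trajectory whose terminal extended state is $0\in\Sigma_f$. Choosing $\delta$ small keeps the inputs in $\mathbb U$, since $u^s\in\mathrm{int}(\mathbb U)$. By \cref{lemma: Fundamental Lemma} and \cref{persistently exciting assumption}, this trajectory is generated by the minimum-norm $\check\alpha=H^\dagger[\cdot]$, with $\|\check\alpha\|^2\le c_{\alpha,\xi}\|\xi_t\|^2$. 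Evaluating the cost at this point then gives $c_{s,\xi}\delta^2+\lambda_\alpha\bar\varepsilon^{\beta_\alpha}c_{\alpha,\xi}\delta^2$.

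\emph{Bound \eqref{optimal solution continuity result 1.2}.} I take the nominal optimizer $(\check\alpha^*,\check u^*,\check y^*)$ and define a candidate for Problem \eqref{robust data-driven mpc scheme} by
\[
\bar\alpha=\check\alpha^*,\qquad \bar u=\check u^*,
\]
with $\bar y$ equal to $\check y^*$ except on the past window, where $\bar y_{[-l,-1]}=\tilde y_{[t-l,t-1]}$. The slack is $\bar\sigma=H_{L+l}(\tilde y^d)\check\alpha^*-\bar y$. This candidate is feasible: the input and terminal constraints are untouched, since the future outputs are unchanged. The stage and terminal costs equal those of the nominal optimizer. The slack equals $H_{L+l}(\varepsilon^d)\check\alpha^*$ on future indices, and the same minus $\varepsilon_{[t-l,t-1]}$ on past indices.

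Using $(a+b)^2\le2a^2+2b^2$ and $\|H_{L+l}(\varepsilon^d)\|^2\le c_H\bar\varepsilon^2$ (a Frobenius bound), I get
\[
\|\bar\sigma\|^2\le 2c_H\bar\varepsilon^2\|\check\alpha^*\|^2+2l\bar\varepsilon^2 .
\]
Multiplying by $\lambda_\sigma/\bar\varepsilon^{\beta_\sigma}$ and using $\lambda_\alpha\bar\varepsilon^{\beta_\alpha}\|\check\alpha^*\|^2\le\check J^*_L$ gives
\[
2c_H\frac{\lambda_\sigma}{\lambda_\alpha}\bar\varepsilon^{2-\beta_\alpha-\beta_\sigma}\,\check J^*_L+2\lambda_\sigma l\bar\varepsilon^{2-\beta_\sigma}.
\]
Hence \eqref{optimal solution continuity result 1.2} follows by optimality of $\bar J^*_L$.

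\emph{Part 2.} I will use a parametric sensitivity argument.

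First, I reduce Problem \eqref{robust data-driven mpc scheme} to the form of the auxiliary problem \eqref{auxiliary nominal data-driven mpc scheme}. Writing $H(\tilde y^d)=H(y^d)+H(\varepsilon^d)$, I set
\[
\sigma_1=\bar\sigma^*-H(\varepsilon^d)\bar\alpha^*
\]
on the future part, with $\sigma_2$ built from the online noise and the past slack. The robust optimizer is then feasible for \eqref{auxiliary nominal data-driven mpc scheme}, and arguing symmetrically it is in fact optimal there up to a perturbation of the data.

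Next, from part 1 I will show $\|\sigma\|\to0$ as $\bar\varepsilon\to0$. Part 1 gives $\frac{\lambda_\sigma}{\bar\varepsilon^{\beta_\sigma}}\|\bar\sigma^*\|^2\le\bar J^*_L=O(\delta^2)+O(\bar\varepsilon^{2-\beta_\sigma})$, so $\|\bar\sigma^*\|^2=O(\bar\varepsilon^{\beta_\sigma})$. In the same way $\|\bar\alpha^*\|^2=O(\bar\varepsilon^{-\beta_\alpha})$, so $\bar\varepsilon^2\|\bar\alpha^*\|^2=O(\bar\varepsilon^{2-\beta_\alpha})$. Both vanish, which uses $\beta_\alpha+\beta_\sigma<2$ and $\beta_\alpha,\beta_\sigma>0$.

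Then I eliminate $\alpha$ via the minimum-norm solution. This gives reduced problems in $(u,y)$ with strongly convex quadratic cost ($Q,R,P\succ0$), convex constraints, and constraint data depending continuously on $\sigma$. By \cref{LICQ assumption} together with strong convexity, the strong second-order sufficient condition holds. Robinson/Fiacco sensitivity theory then gives Lipschitz (hence $\mathcal K_\infty$-bounded) continuity of the optimizer in $\sigma$ near $\sigma=0$. Composing with $\|\sigma\|\le\kappa(\bar\varepsilon)$ yields the functions $\beta_u,\beta_y$.

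\emph{Expected obstacle.} The hard step is the last one. The $\alpha$-regularization weight $\lambda_\alpha\bar\varepsilon^{\beta_\alpha}$ itself depends on $\bar\varepsilon$, and the robust optimizer's $\bar\alpha^*$ is not the minimum-norm one. I must therefore verify that the reduction is legitimate and that the sensitivity constants stay uniform for $\bar\varepsilon\in(0,\bar\varepsilon_{\max}]$ and $\|\xi_t\|\le\delta$. I would handle this by treating $(\sigma,\bar\varepsilon,\xi_t)$ jointly as parameters on a compact neighborhood and invoking uniform LICQ from \cref{LICQ assumption}.
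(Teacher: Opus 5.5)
Your part~1 is exactly the paper's argument. You get \eqref{optimal solution continuity result 1.1} from the steering candidate of \cref{lower and upper bound lemma}. You get \eqref{optimal solution continuity result 1.2} from the nominal optimizer, with the past outputs replaced by $\tilde y_{[t-l,t-1]}$ and slack $H_{L+l}(\tilde y^d)\check\alpha^*(t)-\bar y(t)$.

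Part~2 has a genuine gap at the sentence ``arguing symmetrically it is in fact optimal there up to a perturbation of the data.'' The robust optimizer $(\bar\alpha^*(t),\bar u^*(t),\bar y^*(t))$ is feasible for Problem~\eqref{auxiliary nominal data-driven mpc scheme} with $\sigma=\tilde\sigma$, but it is not that problem's optimizer. In Problem~\eqref{robust data-driven mpc scheme} the slack moves with $\alpha$ through $H_{L+l}(\varepsilon^d)\alpha$ and is penalized with the diverging weight $\lambda_\sigma/\bar\varepsilon^{\beta_\sigma}$. In the auxiliary problem, $\sigma$ is frozen and unpenalized. Your LICQ/SSOSC sensitivity argument in $\sigma$ therefore only bounds $\|\breve u^*(t)-\check u^*(t)\|$, the distance between the auxiliary optimizers at $\sigma=\tilde\sigma$ and at $\sigma=0$. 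Nothing in the proposal bounds $\|\bar u^*(t)-\breve u^*(t)\|$, so ``composing with $\|\sigma\|\le\kappa(\bar\varepsilon)$'' says nothing yet about $\bar u^*(t)$. Nor can your ``perturbation'' be fed into a standard sensitivity theorem: it is a cost term whose coefficient blows up as $\bar\varepsilon\to0$.

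The missing step is a quantitative suboptimality estimate followed by a triangle inequality; this is what the paper does.
\begin{itemize}
\item Map the auxiliary optimizer back into Problem~\eqref{robust data-driven mpc scheme} with slack $\bar\sigma(t)=\bar\sigma^*(t)+H_{L+l}(\varepsilon^d)(\breve\alpha^*(t)-\bar\alpha^*(t))$. This is presumably the ``symmetric'' direction you had in mind.
\item Together with your forward direction, this gives $\breve J^*_L\le\bar J^*_L-\frac{\lambda_\sigma}{\bar\varepsilon^{\beta_\sigma}}\|\bar\sigma^*(t)\|^2$ and $\bar J^*_L\le\breve J^*_L+\frac{\lambda_\sigma}{\bar\varepsilon^{\beta_\sigma}}\|\bar\sigma(t)\|^2$. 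So this candidate is suboptimal for the robust problem by at most $\frac{\lambda_\sigma}{\bar\varepsilon^{\beta_\sigma}}(\|\bar\sigma(t)\|^2-\|\bar\sigma^*(t)\|^2)$.
\item Use $\|\breve\alpha^*\|^2,\|\bar\alpha^*\|^2\le\bar J^*_L/(\lambda_\alpha\bar\varepsilon^{\beta_\alpha})$ and $\|\bar\sigma^*\|^2\le\bar\varepsilon^{\beta_\sigma}\bar J^*_L/\lambda_\sigma$. Then this gap is of order $\bar J^*_L(\bar\varepsilon^{2-\beta_\alpha-\beta_\sigma}+\bar\varepsilon^{(2-\beta_\alpha-\beta_\sigma)/2})$.
\item Quadratic growth of the strongly convex reduced robust objective in $(\bar\sigma,\bar u_{[0,L-1]},\bar y_{[0,L-1]})$ turns this into a bound on $\|\breve u^*(t)-\bar u^*(t)\|^2$.
\end{itemize}
This is where the \emph{strict} inequality $\beta_\alpha+\beta_\sigma<2$ is genuinely needed, not merely to make $\|\tilde\sigma\|$ small. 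Only after this step does your sensitivity bound combine with $\|\bar u^*-\check u^*\|\le\|\bar u^*-\breve u^*\|+\|\breve u^*-\check u^*\|$ to give $\beta_u$, and analogously $\beta_y$.
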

\textit{Proof}: See Appendix~\ref{proof of optimal solution continuity proposition}.

\cref{optimal solution continuity proposition} provides a key bound on the difference between the optimal solution/cost of Problem \eqref{robust data-driven mpc scheme} and the optimal solution/cost of Problem \eqref{nominal data-driven mpc scheme}. Specifically, the result \eqref{optimal solution continuity result 1} implies that Problem \eqref{nominal data-driven mpc scheme} is feasible if $\|\xi_t\|$ is sufficiently small, and Problem \eqref{robust data-driven mpc scheme} is feasible whenever Problem \eqref{nominal data-driven mpc scheme} is feasible. Besides, $\Bar{J}^*_L(\Tilde{\xi}_t)$ is asymptotically upper-bounded by $\check{J}^*_L(\xi_t)$ as $\Bar{\varepsilon}\rightarrow 0$. The result \eqref{optimal solution continuity result 2} shows that if Problem \eqref{nominal data-driven mpc scheme} is feasible, the difference between the optimal solution of Problem \eqref{robust data-driven mpc scheme} and that of Problem \eqref{nominal data-driven mpc scheme} vanishes as $\bar{\varepsilon}\rightarrow0$. \cref{optimal solution continuity proposition} will be used to provide a key bound in the proof of practical exponential stability.

\subsection{Recursive Feasibility}\label{Recursive Feasibility}
\begin{mypro}\label{recursive feasibility proposition}
    Suppose that \cref{persistently exciting assumption} and \cref{terminal conditions assumption} hold and that the SDP \eqref{robust terminal set design sdp} is feasible. If the MPC problem \eqref{robust data-driven mpc scheme} is feasible at time step $t$, then it remains feasible at time step $t+1$.
\end{mypro}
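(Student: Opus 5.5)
The plan is to build an explicit candidate solution for Problem \eqref{robust data-driven mpc scheme} at time $t+1$ from the optimal solution $(\Bar{\alpha}^*(t),\Bar{\sigma}^*(t),\Bar{u}^*(t),\Bar{y}^*(t))$ at time $t$. Following the idea announced in the introduction, I will shift the optimal input and output trajectories together by one step, append a terminal input--output pair generated by the terminal controller $K$, and then recover $\Bar{\alpha}(t+1)$ and $\Bar{\sigma}(t+1)$ from the data equation. The structural fact that makes this work is that, for $\Bar{\varepsilon}>0$, the slack $\Bar{\sigma}(t+1)$ is unconstrained and enters only \eqref{parametrization using offline data}. The output part of that equation can therefore always be satisfied by defining $\Bar{\sigma}(t+1)\triangleq H_{L+l}(\tilde y^d)\Bar{\alpha}(t+1)-\Bar{y}(t+1)$. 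Feasibility then reduces to three requirements: the input part of \eqref{parametrization using offline data}, the initialization \eqref{initialization}, and the constraints \eqref{terminal ingredient}--\eqref{input constraints}.

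\textbf{Step 1 (candidate trajectory).} Define the candidate as follows:
\begin{itemize}
\item $\Bar{u}_k(t+1)=\Bar{u}^*_{k+1}(t)$ for $k\in\mathbb{I}_{[-l,L-2]}$;
\item $\Bar{y}_k(t+1)=\Bar{y}^*_{k+1}(t)$ for $k\in\mathbb{I}_{[-l,-2]}\cup\mathbb{I}_{[0,L-2]}$;
\item $\Bar{y}_{-1}(t+1)=\tilde y_t$, the newly measured output;
\item $\Bar{u}_{L-1}(t+1)=K\Bar{\xi}^*_L(t)$ and $\Bar{y}_{L-1}(t+1)=\Tilde{C}\Bar{\xi}^*_L(t)+\Tilde{D}K\Bar{\xi}^*_L(t)$.
\end{itemize}
Because $u_t=\Bar{u}^*_0(t)$ is applied and $\Bar{y}^*_{[-l,-1]}(t)=\tilde y_{[t-l,t-1]}$, the past window of this candidate equals $(u_{[t+1-l,t]},\tilde y_{[t+1-l,t]})$. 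Hence \eqref{initialization} holds at $t+1$. Note that $\Bar{y}^*_0(t)$ is deliberately replaced by $\tilde y_t$, and this discrepancy is exactly what the slack will absorb.

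\textbf{Step 2 (terminal and input constraints).} By construction, the new terminal extended state is $\Bar{\xi}_L(t+1)=\Tilde{A}\Bar{\xi}^*_L(t)+\Tilde{B}K\Bar{\xi}^*_L(t)$. Since $\Bar{\xi}^*_L(t)\in\Sigma_f$, Assumption~\ref{terminal conditions assumption}(1) gives $\Bar{\xi}_L(t+1)\in\Sigma_f$, and Assumption~\ref{terminal conditions assumption}(2) gives $\Bar{u}_{L-1}(t+1)\in\mathbb{U}$. This uses the true (unknown) $\Tilde{A},\Tilde{B},\Tilde{C},\Tilde{D}$, which is legitimate because only existence is claimed. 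Assumption~\ref{terminal conditions assumption} itself is guaranteed by feasibility of the SDP \eqref{robust terminal set design sdp} via Theorem~\ref{robust terminal set design theorem}. The remaining inputs $\Bar{u}_k(t+1)$, $k\in\mathbb{I}_{[0,L-2]}$, are shifted optimal inputs and so already lie in $\mathbb{U}$.

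\textbf{Step 3 (data equation).} Assumption~\ref{persistently exciting assumption} gives persistent excitation of order $L+l+n_x\ge L+l$, so $H_{L+l}(u^d)$ has full row rank. I then take $\Bar{\alpha}(t+1)=H_{L+l}(u^d)^\dagger\Bar{u}(t+1)$, which satisfies the input rows of \eqref{parametrization using offline data} exactly. Defining $\Bar{\sigma}(t+1)$ as above satisfies the output rows, so the candidate is feasible.

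The main subtlety I expect concerns the limiting case $\Bar{\varepsilon}=0$, where $\Bar{\sigma}\equiv0$ is imposed and the slack argument is unavailable. In that case $\tilde y=y$, and by Lemma~\ref{lemma: Fundamental Lemma} the optimal solution is a genuine system trajectory. The appended pair generated by the extended system \eqref{extended system} continues it as a trajectory, and $\tilde y_t=\Bar{y}^*_0(t)$. Lemma~\ref{lemma: Fundamental Lemma}, applied with depth $L+l$ under Assumption~\ref{persistently exciting assumption}, then yields an $\Bar{\alpha}(t+1)$ with $\Bar{\sigma}(t+1)=0$. Beyond this case distinction, the remaining care lies in the index bookkeeping of the shift, in particular checking that the last $l$ entries of the candidate reproduce $\Tilde{A}\Bar{\xi}^*_L(t)+\Tilde{B}\Bar{u}_{L-1}(t+1)$, which follows from the shift structure of $\Tilde{A}$ and $\Tilde{B}$.
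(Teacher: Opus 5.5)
Your proposal is correct and follows the paper's proof: you use the same shifted input/output candidate with the terminal pair $\bigl(K\bar\xi^*_L(t),(\tilde C+\tilde DK)\bar\xi^*_L(t)\bigr)$ appended, you obtain invariance and input admissibility from Assumption~\ref{terminal conditions assumption}, and you define the slack as $H_{L+l}(\tilde y^d)\bar\alpha(t+1)-\bar y(t+1)$. The only difference is that the paper takes $\bar\alpha(t+1)=H_{ux}^\dagger\left[\begin{smallmatrix}\bar u(t+1)\\ x_{t+1-l}\end{smallmatrix}\right]$ instead of $H_{L+l}(u^d)^\dagger\bar u(t+1)$; this makes $H_{L+l}(y^d)\bar\alpha(t+1)$ the nominal trajectory from $x_{t+1-l}$, which is reused in the stability proof, and it would automatically give $\bar\sigma(t+1)=0$ when $\bar\varepsilon=0$, so your separate noise-free argument, though correct, is unnecessary with that choice.
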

\textit{Proof}: See Appendix~\ref{proof of recursive feasibility proposition}. 

It is worth noting that feasibility itself is not an issue for Problem \eqref{robust data-driven mpc scheme}. Under the persistent-excitation assumption, \(H_{L+l}(u^d)\) has full row rank, so any admissible input sequence satisfying the initialization and input constraints can be represented by some \(\bar{\alpha}(t)\). For any corresponding output sequence satisfying the initialization and terminal constraints, a slack variable \(\bar{\sigma}(t)\) can then be chosen to satisfy the data equation. Therefore, the purpose of \cref{recursive feasibility proposition} is not to establish feasibility, but to construct a feasible solution at time \(t+1\) directly from the optimal solution at time \(t\). This construction is essential for comparing successive optimal costs and forms the basis of the subsequent stability analysis.

\begin{myrem}
Unlike the construction in \cite{berberich2020data}, which appends a local steering segment to recover the terminal equality constraint, we simultaneously shift the optimal input and output trajectories and append the terminal feedback. Terminal set invariance then directly preserves the terminal constraint, providing the candidate for the one-step Lyapunov analysis.
\end{myrem}

\subsection{Practical Exponential Stability}\label{Practical Exponential Stability}
In this section, we provide the main theoretical result of this paper, which establishes the practical exponential stability of the closed-loop system under the proposed MPC scheme. This result is presented in the following theorem.

\begin{mythr}\label{practical exponential stability theorem}
Suppose that Assumptions~\ref{persistently exciting assumption}--\ref{LICQ assumption} hold and that Problem \eqref{robust terminal set design sdp} is feasible. Let
\[
\beta_\alpha>0,\qquad
\beta_\sigma>0,\qquad
\beta_\alpha+\beta_\sigma<2.
\]
Then there exist constants $\gamma>0, \delta>0$, 
$\Bar{\varepsilon}_{\max}>0$, and a function
$\beta\in\mathcal{K}_\infty$ such that, with
\[
\Bar V
:=
\gamma\lambda_{\min}(P_W)\delta^2,
\]
the noise bound $\Bar{\varepsilon}_{\max}$ can be chosen to satisfy
\[
\beta(\Bar{\varepsilon}_{\max})
\le \Bar V, c_{V,2}(\bar{\varepsilon}_{\max}) < \Bar V.
\]
For any $\Bar{\varepsilon}\in
(0,\Bar{\varepsilon}_{\max}]$, any admissible noise sequence satisfying
this bound, and any initial condition such that
\[
V_0\le \Bar V,
\]
the one-step implementation of
\eqref{robust data-driven mpc scheme} is recursively feasible. Moreover, the sublevel set $V_t\leq \bar{V}$ is invariant. Furthermore, the closed-loop extended state is practically
exponentially stable. Consequently, the system state converges to a ball around the origin whose radius vanishes as
$\Bar{\varepsilon}\to0$.
\end{mythr}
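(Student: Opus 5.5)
The plan is a standard Lyapunov decrease argument for $V_t=\Bar J_L^*(\Tilde\xi_t)+\gamma W(\xi_t)$, carried out by induction on the sublevel set $\{V_t\le\Bar V\}$. First, I fix $\delta$ as the minimum of the constants from \cref{lower and upper bound lemma} and \cref{optimal solution continuity proposition}. If $V_t\le\Bar V=\gamma\lambda_{\min}(P_W)\delta^2$, then the lower bound in \eqref{lower and upper bound eq} gives $\|\xi_t\|\le\delta$. This places us in the region where the continuity bounds \eqref{optimal solution continuity result 1}--\eqref{optimal solution continuity result 2} and the quadratic upper bound on $V_t$ are valid. Feasibility at $t+1$ then follows from \cref{recursive feasibility proposition}.

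The core step is to bound $\Bar J_L^*(\Tilde\xi_{t+1})$ by the cost of the candidate built in \cref{recursive feasibility proposition}. That candidate shifts $(\Bar u^*(t),\Bar y^*(t))$ by one step, appends $\Bar u_L=K\Bar\xi^*_L(t)$ together with the corresponding output, and chooses the slack to absorb the mismatch. Three facts then yield
\[
\Bar J_L^*(\Tilde\xi_{t+1})-\Bar J_L^*(\Tilde\xi_t)\le -\|\Bar u_0^*(t)\|_R^2-\|\Bar y^*_0(t)\|_Q^2 + \text{(noise terms)}.
\]
The three facts are: the telescoping of the stage costs; the terminal decrease property 3) in \cref{terminal conditions assumption}, which cancels the appended stage cost against the drop in terminal cost; and a bound on the new slack. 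The slack bound comes from \cref{prediction error lemma}, since the new initial window contains $\Tilde y_t$ rather than $\Bar y^*_0(t)$, and the shifted outputs differ from true outputs by the prediction error. The noise terms are of the form $c\,\bar\varepsilon^{2-\beta_\sigma}$ and $c\,\bar\varepsilon^{2-\beta_\alpha-\beta_\sigma}\Bar J_L^*$, plus cross terms; I split the cross terms using Young's inequality. To control $\|\Bar\alpha^*\|^2\le\Bar J^*/(\lambda_\alpha\bar\varepsilon^{\beta_\alpha})$ and $\|\Bar\sigma^*\|^2\le\bar\varepsilon^{\beta_\sigma}\Bar J^*/\lambda_\sigma$, I use the regularization terms. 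This is exactly where $\beta_\alpha+\beta_\sigma<2$ makes these contributions vanish as $\bar\varepsilon\to0$.

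Next, I add $\gamma$ times the IOSS inequality \eqref{IOSS eq}, which gives $\gamma(-\tfrac12\|\xi_t\|^2+c_{W,1}\|u_t\|^2+c_{W,2}\|y_t\|^2)$. Since $u_t=\Bar u_0^*(t)$ and $\|y_t\|^2\le2\|\Bar y^*_0(t)\|^2+2\|y_t-\Bar y^*_0(t)\|^2$, with the last term again bounded by \cref{prediction error lemma}, I choose $\gamma$ small enough that $\gamma c_{W,1}\le\tfrac12\lambda_{\min}(R)$ and $2\gamma c_{W,2}\le\tfrac12\lambda_{\min}(Q)$. This yields
\[
V_{t+1}-V_t\le-\tfrac{\gamma}{2}\|\xi_t\|^2+c_1(\bar\varepsilon)V_t+\beta_0(\bar\varepsilon),
\]
with $c_1(\bar\varepsilon)\to0$ and $\beta_0\in\mathcal K_\infty$. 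The upper bound in \eqref{lower and upper bound eq} converts $-\|\xi_t\|^2$ into $-(V_t-c_{V,2})/c_{V,1}$. This gives $V_{t+1}\le\rho V_t+\beta(\bar\varepsilon)(1-\rho)$ for some $\rho<1$ once $\bar\varepsilon\le\bar\varepsilon_{\max}$, where the uniform positivity of $c_{V,1}$ from \cref{lower and upper bound lemma} keeps $\rho$ bounded away from $1$. If $\beta(\bar\varepsilon_{\max})\le\Bar V$, then $V_{t+1}\le\rho\Bar V+(1-\rho)\Bar V=\Bar V$, which gives invariance and hence closes the induction for all $t$.

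Finally, iterating gives $V_t\le\rho^tV_0+\beta(\bar\varepsilon)$. Combining this with $V_0\le c_{V,1}\|\xi_0\|^2+c_{V,2}$ and $V_t\ge\gamma\lambda_{\min}(P_W)\|\xi_t\|^2$, and taking square roots, yields \eqref{practical exponential stability} with $d(\bar\varepsilon)=\sqrt{(\beta(\bar\varepsilon)+c_{V,2}(\bar\varepsilon))/(\gamma\lambda_{\min}(P_W))}$. The state statement then follows from $x_t=\Gamma\xi_t$ in \eqref{relationship between two systems 2}.

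The main obstacle will be the slack bound for the candidate. The appended terminal pair uses the true extended dynamics applied to a noisy predicted $\Bar\xi_L^*$, and the new initialization uses measured data. I must show that the resulting mismatch, measured in the weighted norm $\lambda_\sigma\bar\varepsilon^{-\beta_\sigma}\|\cdot\|^2$, is bounded by $o(1)\cdot V_t$ plus a $\mathcal K_\infty$ function of $\bar\varepsilon$, rather than by $O(1)\cdot V_t$. For this I will rely on \cref{prediction error lemma} together with the $\Bar\alpha,\Bar\sigma$ cost bounds. Where a cross term $\|\Bar\xi_L^*\|\cdot\bar\varepsilon$ appears, I will fall back on the continuity result \eqref{optimal solution continuity result 2} to compare with the nominal problem.
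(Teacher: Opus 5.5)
Your overall structure matches the paper's proof. The common elements are: the same shifted candidate with the appended terminal pair, telescoping plus the terminal decrease, the IOSS term with the same choice of $\gamma$, conversion of $-\|\xi_t\|^2$ via \cref{lower and upper bound lemma}, and invariance of $\{V_t\le\bar V\}$ from $\beta(\bar\varepsilon_{\max})\le\bar V$. The gap is at the step you flag as the main obstacle, and the tools you name there do not close it.

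Through \cref{prediction error lemma}, the weighted new slack $\lambda_\sigma\bar\varepsilon^{-\beta_\sigma}\|\bar\sigma(t+1)\|^2$ contains $\lambda_\sigma\bar\varepsilon^{-\beta_\sigma}\|\bar\sigma^*_i(t)\|^2$ and $\lambda_\sigma\bar\varepsilon^{-\beta_\sigma}\rho_{2,l+i}\|\bar\sigma^*_{[-l,-1]}(t)\|^2$. These are multiplied by fixed factors: the $3$'s in \eqref{prediction error eq}, the splitting of the slack, and the gain $\|CA^l[-\mathcal{O}_l^\dagger\mathcal{T}_l\ \ \mathcal{O}_l^\dagger]\|^2$ of the appended output. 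Inserting your regularization bound $\|\bar\sigma^*(t)\|^2\le\bar\varepsilon^{\beta_\sigma}\bar J_L^*(\tilde\xi_t)/\lambda_\sigma$ turns these into $c\,\bar J_L^*(\tilde\xi_t)$ with $c$ independent of $\bar\varepsilon$. The powers of $\bar\varepsilon$ cancel exactly, so $\beta_\alpha+\beta_\sigma<2$ plays no role here. Since $c$ is in general larger than one, this term can be absorbed neither by $-\lambda_\sigma\bar\varepsilon^{-\beta_\sigma}\|\bar\sigma^*(t)\|^2$ (coefficient one) nor by $-\tfrac{\gamma}{2}\|\xi_t\|^2$. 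Hence your inequality with $c_1(\bar\varepsilon)\to0$ does not follow. The $\bar\alpha$-contributions carry vanishing powers of $\bar\varepsilon$, and the slack terms entering through $\|y_t-\bar y^*_0(t)\|^2$ in the IOSS part carry no factor $\bar\varepsilon^{-\beta_\sigma}$; those parts are fine with the crude bounds.

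What is missing is an estimate showing $\bar\varepsilon^{-\beta_\sigma}\|\bar\sigma^*(t)\|^2\to0$ uniformly on the sublevel set. The paper obtains it by testing the robust problem at time $t$ with a candidate built from the \emph{nominal} optimizer: $\bar\alpha(t)=\check\alpha^*(t)$, $\bar u(t)=\check u^*(t)$, past outputs $\tilde y_{[t-l,t-1]}$, and predicted outputs $\check y^*_{[0,L-1]}(t)$. This candidate's slack is only $O(\bar\varepsilon^2(1+\|\xi_t\|^2))$. Optimality then gives $\|\bar\sigma^*(t)\|^2\le\|\bar\sigma(t)\|^2+\frac{\bar\varepsilon^{\beta_\sigma}}{\lambda_\sigma}(H_{\mathrm{c}}-H^*)$, where $H_{\mathrm{c}}$ and $H^*$ are the costs without the slack penalty of this candidate and of the robust optimizer. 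The difference $H_{\mathrm{c}}-H^*$ is bounded by $\beta_{H,1}(\bar\varepsilon)+\beta_{H,2}(\bar\varepsilon)\sqrt{\bar V}+O(\bar\varepsilon^{\beta_\alpha}\|\xi_t\|^2)$ via the solution continuity \eqref{optimal solution continuity result 2}. This is precisely where \cref{LICQ assumption} and the sensitivity argument enter the theorem. After multiplication by $\lambda_\sigma\bar\varepsilon^{-\beta_\sigma}$, every piece vanishes as $\bar\varepsilon\to0$.

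So the continuity result is needed to bound $\bar\sigma^*(t)$ itself, not merely cross terms of the form $\|\bar\xi_L^*\|\,\bar\varepsilon$. Without this step, or some other continuity argument for the optimal value of the nominal problem, your decrease inequality does not close.
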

\textit{Proof}: See Appendix~\ref{proof of practical exponential stability theorem}. 

\cref{practical exponential stability theorem} is the main theoretical result of this paper, which establishes the practical exponential stability of the closed-loop system under the proposed MPC scheme. The noise-free case, i.e., $\bar{\varepsilon}=0$, reduces to the formulation considered in \cite{berberich2021design}, which proved exponential stability. In addition to the formal assumptions stated in this paper, we also assume that Problem \eqref{robust terminal set design sdp} is feasible, which enables us to design the weight matrix $P$ and terminal set $\Sigma_f$ satisfying \cref{terminal conditions assumption}. Besides, the conditions $\beta_\alpha>0, \beta_\sigma>0, \beta_\alpha+\beta_\sigma<2$ are required in the proof, ensuring that the noise term vanishes as $\Bar{\varepsilon}\rightarrow 0$. Furthermore, according to \eqref{lower and upper bound eq}, the choice $\Bar{V} = \gamma\lambda_{\min}(P_W)\delta^2$ implies that $\|x_t\|\leq\|\Gamma\|\|\xi_t\|\leq\|\Gamma\|\delta$, i.e., our result holds locally. Furthermore, we require the online and offline measurement noise to be sufficiently small; the admissible noise bound \(\bar\varepsilon_{\max}\) must be chosen sufficiently small relative to \(\bar V\).

\subsection{Discussion and Comparison with Existing Data-Driven MPC Schemes}\label{remark on different mpc scheme}
    The proposed formulation unifies existing terminal-constraint data-driven MPC schemes. Specifically, the terminal equality formulation in \cite{berberich2020data} is recovered when $r=0$, whereas the noise-free terminal set formulation in \cite{berberich2021design} is recovered when $\Bar{\varepsilon}=0$. Beyond this theoretical unification, the introduction of a terminal set also provides additional flexibility in closed-loop performance. Compared with the terminal equality constraints adopted in \cite{berberich2020data,liu2022data}, a nonzero terminal set level $r$ relaxes the terminal requirement and can avoid overly aggressive control actions, thereby yielding smoother control inputs and improved closed-loop performance. Hence, $r$ can be interpreted as a performance-tuning parameter that provides a trade-off between the restrictiveness of the terminal condition and the resulting closed-loop behavior. This effect will be demonstrated numerically in \cref{level study}.

Second, our formulation avoids the nonconvex constraint $\|\sigma(t)\|_\infty\leq \Bar{\varepsilon}(1+\|\alpha(t)\|_1)$ employed in \cite{berberich2020data} for the theoretical analysis. Instead, the noise-dependent regularization term $\frac{\lambda_\sigma}{\Bar{\varepsilon}^{\beta_\sigma}}\|\Bar{\sigma}(t)\|^2$ increasingly penalizes the slack variable as $\Bar{\varepsilon}$ decreases and thus drives $\Bar{\sigma}(t)$ toward zero in the small-noise case. The same regularization mechanism could also be incorporated into \cite{berberich2020data} without changing its stability argument.

Finally, compared with data-driven MPC without terminal ingredients \cite{bongard2022robust}, our framework does not rely on a sufficiently long prediction horizon to establish closed-loop stability and instead only requires the lower bound $L\geq l+n_x$. This allows a substantially shorter prediction horizon to be employed and can therefore reduce the online computational burden. The price is an additional offline step for designing the terminal ingredients. Such a trade-off between offline and online computation is particularly attractive in networked and resource-constrained control systems, where substantial computational resources may be available offline while online computation is subject to real-time limitations. The effect of the terminal set is evaluated numerically in \cref{terminal study}.

\section{Simulations}\label{Simulations}
In this section, numerical simulations are conducted to demonstrate the effectiveness and advantages of the proposed robust data-driven MPC scheme. Specifically, we investigate three aspects. First, we study how the size of the terminal set affects the closed-loop behavior and demonstrate the tunable trade-off enabled by the proposed terminal set formulation. Second, we investigate the influence of the prediction horizon and show that terminal ingredients are particularly beneficial when a short prediction horizon is employed. Third, we compare the proposed method with data-driven MPC without terminal ingredients and demonstrate that comparable closed-loop performance can be achieved with a substantially shorter prediction horizon and reduced online computation.

For all simulations, we consider system \eqref{basic system} with 
\begin{equation*}
    A=
\begin{bmatrix}
1.0121 & 0.1445\\
0.1084 & 0.5061
\end{bmatrix},
B=
\begin{bmatrix}
0.0225\\
0.2024
\end{bmatrix}, C = \begin{bmatrix}
1 &0
\end{bmatrix}, D = 0.
\end{equation*}
We can easily compute the observability index as $\underline{l}=2$. Consequently, we choose $l=2$. The above system is obtained by zero-order-hold (ZOH) discretization of the continuous-time dynamics
\begin{equation*}
    \dot x=
\begin{bmatrix}
0&1\\
0.75&-3.5
\end{bmatrix}x+
\begin{bmatrix}
0\\
1.4
\end{bmatrix}u
\end{equation*}
with sampling time \(T_s=0.2\,\mathrm{s}\). The system is open-loop unstable and can be interpreted as a second-order benchmark motivated by linearized inverted pendulum dynamics around its upright equilibrium. The control objective is to regulate the system to the equilibrium $(u^s,y^s)=(0,0)$ while satisfying the input constraint $u_t\in\mathbb U=[-4,4]$. Since only the first state, i.e., angular position, is measured, the controller is implemented in an output-feedback fashion using only measured input-output data. The length of the offline input-output data trajectory is set to $N=200$, and the online and offline measurement noise bounds are chosen as $\bar{\varepsilon}_{on} = 
\bar{\varepsilon}_{off}=0.001$. 
The weighting and regularization parameters are selected as
\begin{equation*}
Q=1, R=2,
\beta_{\alpha}=0.5,
\beta_{\sigma}=0.5, \lambda_{\alpha}=100,
\lambda_{\sigma}=4000.
\end{equation*}
Using the collected offline data, we first solve SDP \eqref{robust terminal set design sdp} in \cref{robust terminal set design theorem} to construct the terminal ingredients. The resulting terminal weighting matrix $P$ and a conservative reference level $r_{ref}$ are given by
\begin{equation*}
    P = \left[\begin{smallmatrix}
        2.4262	&0.7569	&-6.6579	&15.5241\\
0.7569	&5.5529	&-16.4195	&34.8040\\
-6.6579	&-16.4195	&149.9706	&-276.9169\\
15.5241	&34.8040	&-276.9169	&550.0696
    \end{smallmatrix}\right], r_{ref} = 7.6328.
\end{equation*}
All numerical experiments are performed in MATLAB R2024b using MOSEK on a computer equipped with an Intel Core i7-12700H CPU and 16 GB of RAM.

To facilitate further discussion, we define some important metrics: realized cost $J_{real}$, total input variation $TV_u$, and peak input deviation $PD_u$:
\begin{equation*}
    \begin{aligned}
        J_{real} &= \sum_{k=0}^{N_{sim}-1}\left(\|u_k\|^2_R+\|y_k\|^2_Q\right),\\
        TV_u &= \sum_{k=0}^{N_{sim}-1}
\|u_{k}-u_{k-1}\|, PD_u = \max_{k=0,\ldots,N_{sim}-1}\|u_k\|,
    \end{aligned}
\end{equation*}
where $N_{sim}$ represents the simulation length.

\subsection{Terminal Set Size and Performance Trade-off}\label{level study}
We first investigate the influence of the terminal set size on the closed-loop behavior. Recall the expression of the terminal set \eqref{expression of terminal set}. When $r=0$, the terminal set reduces to the equilibrium point, and the proposed MPC problem recovers the terminal equality formulation considered in \cite{berberich2020data}. As $r$ increases, the terminal condition is gradually relaxed from a point constraint to a nontrivial terminal region. To characterize the size of the terminal set independently of its absolute scale, we use the normalized terminal set level $r/r_{ref}$.

\cref{B_level_tradeoff_publication_trajectories} illustrates the closed-loop trajectories for several representative terminal set levels, where we choose $N_{sim}=51$. We use TE-$1$ to denote the one-step implementation of \eqref{robust data-driven mpc scheme} with a terminal equality constraint. Similarly, we use TS-$1$ to represent the one-step implementation of \eqref{robust data-driven mpc scheme} with a terminal set constraint, and the closed-loop trajectories for several representative normalized levels are plotted. TE-$n_x$ is the baseline method proposed in \cite{berberich2020data}. For the TE-$1$ and TE-$n_x$ cases, the controller forces the predicted trajectory to reach the equilibrium exactly at the end of the prediction horizon, which leads to fast output convergence but relatively aggressive control actions. As $r/r_{ref}$ increases, the terminal condition becomes progressively less restrictive. The controller is therefore allowed to approach the equilibrium more gradually, resulting in slower output convergence but smoother and less aggressive input trajectories. In particular, the peak control input decreases noticeably as $r/r_{ref}$ increases. We summarize the quantitative results in \cref{B_level_tradeoff_metrics}, which validates the above observation. It is worth noting from \cref{B_level_tradeoff_publication_trajectories} and \cref{B_level_tradeoff_metrics} that TE-$1$ and TE-$n_x$ exhibit similar transient behavior; TE-$1$ achieves lower realized cost, whereas TE-$n_x$ results in slightly lower total input variation. The rigorous comparison under the noisy setting is beyond the scope of this paper and is left for future research. 
\begin{figure}[t]
    \centering
    \hfill 
    \includegraphics[width=\linewidth]{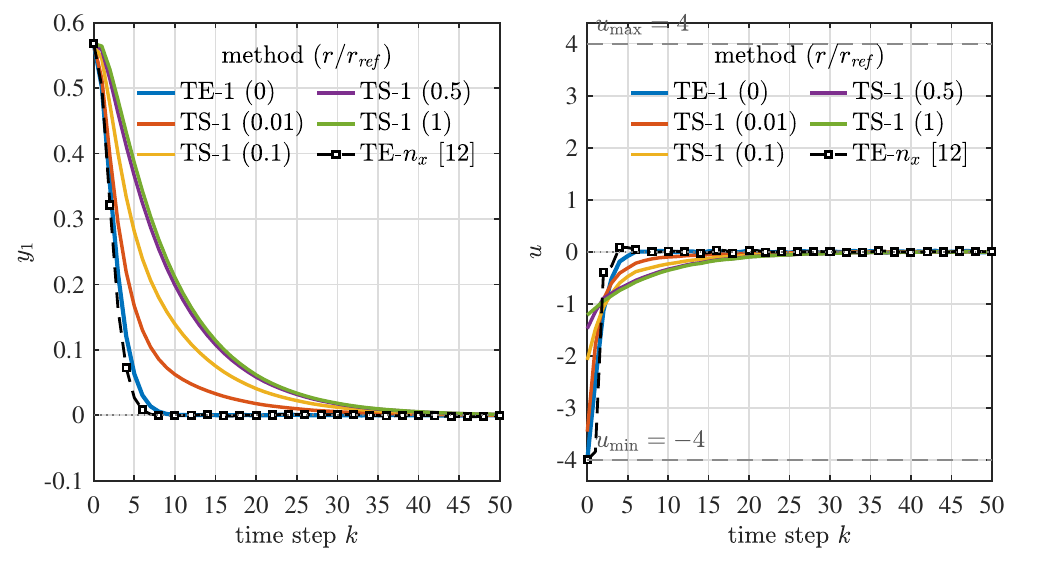}
    \caption{Closed-loop comparison of TE-$1$, TS-$1$, and the TE-\(n_x\) baseline under different terminal set levels.}
    \label{B_level_tradeoff_publication_trajectories}
\end{figure}
\begin{table}[!t]
\centering
\caption{Closed-loop performance metrics under different terminal set levels (with $L=6$).}
\label{B_level_tradeoff_metrics}
\begin{tabular}{c c c c}
\toprule
Design & $J_{real}$ & $TV_u$ & $PD_u$ \\
\midrule
TE-$1$ $(0)$ &$48.2439$  &$8.7550$  &$4.0000$   \\
TS-$1$ $(0.01)$ &$34.5222$  &$7.0126$  &$3.4564$  \\
TS-$1$ $(0.1)$ &$20.5470$  &$4.2349$  &$2.0780$  \\
TS-$1$ $(0.5)$ &$16.3367$  &$3.0211$  &$1.4770$  \\
TS-$1$ $(1)$ &$15.6649$  &$2.4966$  &$1.2157$ \\
TE-$n_x$ \cite{berberich2020data} &$62.7684$  &$8.5650$  &$4.0000$ \\
\bottomrule
\end{tabular}
\end{table}
\begin{table}[!t]
    \caption{Closed-loop performance comparison of terminal equality 
    and terminal set across prediction horizons.}
    \label{C_performance_table}
    \centering
    \renewcommand{\arraystretch}{1.08}

    \begin{tabular}{@{}ccrrrr@{}}
      \toprule
      $r/r_{ref}$ & Metric
        & $L=6$ & $L=10$ & $L=15$ & $L=20$ \\
      \midrule

      \multirow{3}{*}{$0$}
        & $J_{real}$
        & $48.2439$ & $20.1970$ & $15.3484$ & $14.3964$ \\
        & $TV_u$
        & $8.7550$ & $3.7563$ & $2.3842$ & $1.9577$ \\
        & $PD_u$
        & $4.0000$ & $1.7947$ & $1.1645$ & $0.9715$ \\
      \midrule

      \multirow{3}{*}{$1$}
        & $J_{real}$
        & $15.6649$ & $14.5990$ & $14.2173$ & $14.1196$ \\
        & $TV_u$
        & $2.4966$ & $2.0683$ & $1.8303$ & $1.7224$ \\
        & $PD_u$
        & $1.2157$ & $1.0215$ & $0.9144$ & $0.8661$ \\
      \bottomrule
    \end{tabular}
\end{table}

\subsection{Effect of the Prediction Horizon}\label{horizon study}
We next investigate the influence of the prediction horizon $L$. \cref{C_performance_table} compares the closed-loop performance of the terminal equality case $r=0$ and the largest terminal set $r/r_{ref}=1$ under different prediction horizons. The simulation length $N_{sim}$ is chosen as $51$, consistent with the previous experiment. For the short horizon \(L=6\), the terminal set formulation provides a substantial improvement: \(J_{real}\) decreases from \(48.2439\) to \(15.6649\), while \(TV_u\) and \(PD_u\) decrease from \(8.7550\) and \(4.0000\) to \(2.4966\) and \(1.2157\), respectively. This shows that relaxing the terminal equality constraint is particularly beneficial when the prediction horizon is short.

As \(L\) increases, the performance gap between the two terminal conditions becomes progressively smaller. At \(L=20\), the corresponding values are already very close, with \(J_{real}=14.3964\) and \(14.1196\) for \(r=0\) and \(r/r_{ref}=1\), respectively. Similar trends can be observed for \(TV_u\) and \(PD_u\). We can provide a qualitative analysis for the above observations. When $L$ is small, a controller with the terminal equality constraint is required to steer the predicted trajectory to a prescribed terminal point within only a few steps. This restrictive requirement may result in relatively large and aggressive control actions. In contrast, the terminal set constraint only requires the terminal trajectory to enter a prescribed set, thereby providing additional freedom to the optimizer and potentially yielding smoother control inputs and improved closed-loop performance. Hence, a more significant performance difference between terminal equality and terminal set constraints is expected in the short-horizon case. As $L$ increases, the controller has a longer prediction window to approach the terminal condition gradually, which naturally mitigates aggressive control actions even under a terminal equality constraint. Consequently, the performance difference between the two terminal formulations is expected to vanish for sufficiently long prediction horizons.

\begin{figure}[t]
    \centering
    \hfill 
    \includegraphics[width=\linewidth]{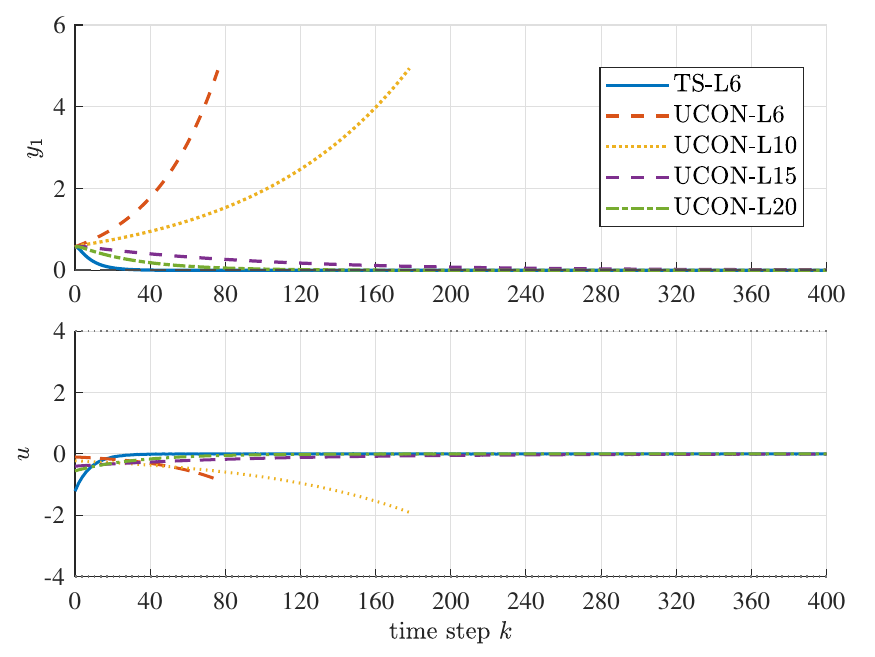}
    \caption{Closed-loop comparison between TS-L$6$ and UCON under different prediction horizons.}
    \label{D_ucon_initial_transient}
\end{figure}
\begin{table}[!t]
\centering
\caption{Comparison of closed-loop performance metrics for TS-L$6$ and UCON with different prediction horizons.}
\label{D_ucon_metrics}
\begin{tabular}{c c c c c}
\toprule
Design & $J_{real}$ & $TV_u$ & $PD_u$ &$T_{tot}$ \\
\midrule
TS-L$6$  &$15.6697$  &$3.7409$  &$1.2157$ &\(129.4502\,\mathrm{s}\) \\
UCON-L$6$  &$\times$  &$\times$ &$\times$ &$\times$\\
UCON-L$10$  &$\times$  &$\times$  &$\times$ &$\times$ \\
UCON-L$15$ &$33.7503$ &$1.0077$  &$0.3998$ & \(172.3504\,\mathrm{s}\)\\
UCON-L$20$  &$16.8952$ &$1.5441$  &$0.5522$ & \(199.8267\,\mathrm{s}\)\\ 
\bottomrule
\end{tabular}
\vspace{2pt}

    \begin{minipage}{0.95\columnwidth}
        Here, \footnotesize{\(\times\) denotes closed-loop divergence, for which the corresponding performance metrics are not reported.}
    \end{minipage}
\end{table}
\subsection{Comparison with MPC Without Terminal Ingredients}\label{terminal study}
We finally compare the proposed scheme with the data-driven MPC framework proposed in \cite{bongard2022robust}, denoted by UCON. UCON does not employ terminal ingredients and can be viewed as the counterpart of our framework obtained by removing the terminal cost and terminal set constraint. In contrast to the proposed approach, which uses terminal ingredients to guarantee desirable closed-loop behavior with a relatively short prediction horizon, UCON relies on a sufficiently long prediction horizon to achieve stability and satisfactory performance, which is the main theoretical result of \cite{bongard2022robust}. This distinction is also consistent with the theoretical motivation of our framework.

To investigate this difference, we compare TS-L$6$ with UCON using prediction horizons \(L=6,10,15,\) and \(20\). Here, the number following “L” denotes the prediction horizon. We set the simulation length to $N_{sim}=401$. \cref{D_ucon_initial_transient} shows a clear dependence of UCON on the prediction horizon. UCON-L$6$ and UCON-L$10$ fail to stabilize the considered unstable system. Increasing the horizon to \(L=15\) restores convergence, but the resulting closed-loop performance remains substantially worse than that of TS-L$6$. Only when the prediction horizon is further increased to \(L=20\) does UCON achieve closed-loop performance comparable to TS-L$6$. In contrast, the proposed controller already achieves satisfactory stabilization and performance with the much shorter horizon \(L=6\). The comparable performance of TS-L$6$ and UCON-L$20$ highlights the benefit of terminal ingredients, which enable similar closed-loop performance with a much shorter prediction horizon and substantially lower online computation. The quantitative results are summarized in \cref{D_ucon_metrics}, where $T_{tot}$ denotes the total online implementation time. These results are consistent with the discussion in \cref{remark on different mpc scheme}, confirming that the proposed framework trades additional offline computation for reduced online computational effort without sacrificing closed-loop performance.
\section{Conclusion}\label{Conclusion}
This paper develops a unified robust data-driven MPC framework with terminal ingredients for unknown LTI systems under noise-free input and noisy output measurements. Robust terminal ingredients are designed directly from offline data, and a one-step MPC implementation is established. The proposed scheme guarantees recursive feasibility and practical exponential stability under suitable conditions. Numerical results demonstrate that the terminal set level provides a useful trade-off between convergence speed and closed-loop performance. In particular, the proposed terminal set formulation is advantageous for short prediction horizons and can achieve performance comparable to long-horizon terminal-free MPC with reduced online computational effort.
\appendix
\subsection{Proof of \cref{lower and upper bound lemma}}\label{proof of lower and upper bound lemma}
The lower bound is trivial. We only focus on the upper bound. According to \eqref{relationship between two systems 2}, we have $\|x_t\| \leq \|\Gamma\| \|\xi_t\|$. Therefore, if $\delta$ is sufficiently small, then $\xi_t$, and hence $x_t$, will also be sufficiently small. Moreover, \cref{prediction horizon assumption} implies $L-l\geq n_x$. These two facts imply that there exists a feasible input sequence $\Hat{u}_{[t,t+L-l-1]}$, i.e., $\Hat{u}_k\in\mathbb{U}, k\in\mathbb{I}_{[t,t+L-l-1]}$, that steers $x_t$ to $x_{t+L-l} = 0$ while satisfying 
\begin{equation}\label{uy bounded by x}
    \left\lVert
    \begin{bmatrix}
        \Hat{u}_{[t,t+L-l-1]}\\
        \Hat{y}_{[t,t+L-l-1]}
    \end{bmatrix}
    \right\rVert^2 \leq \Gamma_{uy} \|x_t\|^2
\end{equation}
for some constant $\Gamma_{uy}>0$. This fact can be explained by choosing $\Hat{u}$ with minimal energy and leveraging the system dynamics. Then, we can construct a candidate input/output solution for Problem \eqref{robust data-driven mpc scheme} as $\Bar{u}_{[0,L-l-1]}(t) = \Hat{u}_{[t,t+L-l-1]}$, $\Bar{u}_{[L-l,L-1]}(t)=0$, $\Bar{y}_{[0,L-l-1]}(t) = \Hat{y}_{[t,t+L-l-1]}$, $\Bar{y}_{[L-l,L-1]}(t)=0$. This solution also satisfies the terminal constraint since the terminal set contains the origin. To facilitate subsequent discussion, we define 
\begin{equation}
    H_{ux} \triangleq
    \begin{bmatrix}
        H_{L+l}(u^d)\\
        H_1\left(x^d_{[0,N-L-l]}\right)
    \end{bmatrix},
\end{equation}
which has full row rank according to \cref{persistently exciting assumption} \cite[Corollary 2]{willems2005note}. By using $H_{ux}$, we choose the candidate $\Bar{\alpha}(t)$ as
\begin{equation}\label{alpha bar expression}
    \Bar{\alpha}(t) = H_{ux}^\dagger
    \begin{bmatrix}
        \Bar{u}_{[-l,L-1]}(t)\\
        x_{t-l}
    \end{bmatrix}.
\end{equation}
Note that this $\Bar{\alpha}(t)$ implies that 
\begin{equation}
    \begin{bmatrix}
        H_{L+l}(u^d)\\
        H_{L+l}(y^d)
    \end{bmatrix}\Bar{\alpha}(t)=
    \begin{bmatrix}
        \Bar{u}_{[-l,L-1]}(t)\\
        y_{[t-l,t-1]}\\
        \Bar{y}_{[0,L-1]}(t)
    \end{bmatrix}.
\end{equation}
Therefore, to ensure that the equality constraint \eqref{parametrization using offline data} holds, we choose $\Bar{\sigma}(t)$ as
\begin{equation}
    \begin{aligned}
        \Bar{\sigma}_{[-l,-1]}(t) &= H_l\left(\varepsilon^d_{[0,N-L-1]}\right)\Bar{\alpha}(t)-\varepsilon_{[t-l,t-1]},\\
        \Bar{\sigma}_{[0,L-1]}(t) &= H_L\left(\varepsilon^d_{[l,N-1]}\right)\Bar{\alpha}(t).
    \end{aligned}
\end{equation}
We have shown that our candidate solution is feasible for Problem \eqref{robust data-driven mpc scheme}. In addition, it is worth noting that this candidate can provide an upper bound on $V_t$ since $V_t$ is defined as the optimal cost of Problem \eqref{robust data-driven mpc scheme} plus $\gamma W(\xi_t)$; see \cref{Lower and Upper bound of Lyapunov Function}. Our next step is thus to bound the cost of this candidate solution. First,
\begin{equation}\label{sigma bound}
    \begin{aligned}
    \|\Bar{\sigma}(t)\|^2 &\leq 2\|H_{L+l}(\varepsilon^d_{[0,N-1]})\Bar{\alpha}(t)\|^2 + 2\|\varepsilon_{[t-l,t-1]}\|^2\\
    &\leq 2 \|H_{L+l}(\varepsilon^d_{[0,N-1]})\|^2 \|\Bar{\alpha}(t)\|^2 + 2l\Bar{\varepsilon}^2\\
    &\leq 2c_H\Bar{\varepsilon}^2\|\Bar{\alpha}(t)\|^2 + 2l\Bar{\varepsilon}^2,
    \end{aligned}
\end{equation}
where we use the fact that the matrix $2$-norm is bounded by the Frobenius norm in the third inequality. We then bound $\|\Bar{\alpha}(t)\|^2$ as follows:
\begin{equation}
    \begin{aligned}
        &\|\Bar{\alpha}(t)\|^2 \leq \|H_{ux}^\dagger\|^2\left(\|\Bar{u}_{[-l,L-1]}(t)\|^2+\|x_{t-l}\|^2\right)\\
        &= \|H_{ux}^\dagger\|^2\left(\|\Bar{u}_{[0,L-1]}(t)\|^2+\|\Bar{u}_{[-l,-1]}(t)\|^2 + \|x_{t-l}\|^2\right).
    \end{aligned}
\end{equation}
Notice that $\left[\begin{smallmatrix}\Bar{u}_{[-l,-1]}(t)\\x_{t-l}\end{smallmatrix}\right] = \left[\begin{smallmatrix}I &0\\-\mathcal{O}_l^\dagger\mathcal{T}_l &\mathcal{O}_l^\dagger\end{smallmatrix}\right]\xi_t$, $\|\Bar{u}_{[0,L-1]}(t)\|^2\leq \Gamma_{uy}\|x_t\|^2$. By using these facts and \eqref{relationship between two systems 2}, we have
\begin{equation}\label{alpha bound}
\|\Bar{\alpha}(t)\|^2 \leq \underbrace{\|H_{ux}^\dagger\|^2\left(\left\lVert\left[\begin{smallmatrix}I &0\\-\mathcal{O}_l^\dagger\mathcal{T}_l &\mathcal{O}_l^\dagger\end{smallmatrix}\right]\right\rVert^2+\Gamma_{uy}\|\Gamma\|^2\right)}_{c_{\alpha,\xi}}\|\xi_t\|^2.
\end{equation}
In addition, \eqref{uy bounded by x} and \eqref{relationship between two systems 2} imply
\begin{equation}\label{sum bound}
    \sum_{k=0}^{L-1}\left(\|\Bar{u}_k(t)\|_R^2 + \|\Bar{y}_k(t)\|_Q^2\right)\leq \underbrace{\lambda_{\max}\left(Q,R\right)\Gamma_{uy}\|\Gamma\|^2}_{c_{s,\xi}} \|\xi_t\|^2.
\end{equation}
By substituting \eqref{sum bound}, \eqref{sigma bound}, and \eqref{alpha bound} into the cost function \eqref{objective function}, we obtain
\begin{equation}\label{proof of lower and upper bound case 1}
    V_t  \leq c_{V,1}(\bar{\varepsilon}) \|\xi_t\|^2 + c_{V,2}(\bar{\varepsilon}),
\end{equation}
where
\begin{equation}
\begin{aligned}   
    c_{V,1}(\bar{\varepsilon}) &= c_{s,\xi} + \left(\lambda_\alpha\Bar{\varepsilon}^{\beta_\alpha} + 2\lambda_\sigma c_H\Bar{\varepsilon}^{2-\beta_\sigma}\right)c_{\alpha,\xi}+ \gamma\lambda_{\max}(P_W),\\
    c_{V,2}(\bar{\varepsilon}) &= 2l\lambda_\sigma\Bar{\varepsilon}^{2-\beta_\sigma}.
\end{aligned}
\end{equation}
The proof is complete. $\hfill \square$
\subsection{Proof of \cref{prediction error lemma}}\label{proof of prediction error lemma}
From \eqref{parametrization using offline data} and \eqref{initialization}, we notice that 
\begin{equation}
    \begin{bmatrix}
            H_{L+l}(u^d)\\
            H_{L+l}(y^d)
        \end{bmatrix}\Bar{\alpha}^*(t)
\end{equation}
is a trajectory of \eqref{basic system}. Then,
\begin{equation}
    y^- \triangleq \hat{y} -  H_{L+l}(y^d)\Bar{\alpha}^*(t)
\end{equation}
represents a zero-input output trajectory with 
\begin{equation}\label{inital condition Proof of Lemma 4}
     y^-_{[t-l,t-1]} = H_l\left(\varepsilon^d_{[0,N-L-1]}\right)\bar{\alpha}^*(t)-\varepsilon_{[t-l,t-1]}-\bar{\sigma}^*_{[-l,-1]}(t).
\end{equation}
According to the system dynamics of \eqref{basic system}, for $k\in \mathbb{I}_{[0,L-1]}$,  
\begin{equation}\label{leverage system dynamics Proof of Lemma 4}
     y^-_{t+k} = CA^{l+k}\mathcal{O}_l^\dagger y^-_{[t-l,t-1]}.
\end{equation}
Therefore,
\begin{equation}\label{compute bound Proof of Lemma 4}
\begin{aligned}
   & \|\bar{y}^*_k(t)-\hat{y}_{t+k}\| \\
   &= \|H_1\left(\tilde{y}^d_{[l+k,N-L+k]}\right)\bar{\alpha}^*(t)-\bar{\sigma}^*_k(t)-\hat{y}_{t+k}\|\\
   &\leq \|y^-_{t+k}\| + \|H_1\left(\varepsilon^d_{[l+k,N-L+k]}\right)\bar{\alpha}^*(t)-\bar{\sigma}^*_k(t)\|.
\end{aligned}
\end{equation}
Combining \eqref{inital condition Proof of Lemma 4}, \eqref{leverage system dynamics Proof of Lemma 4}, and \eqref{compute bound Proof of Lemma 4}, and using the facts that 
\begin{equation}
    \|a+b+c\|^2\leq3\left(\|a\|^2+\|b\|^2+\|c\|^2\right), \|H_q(\varepsilon^d)\|^2\leq qc_N \bar{\varepsilon}^2,
\end{equation}
we obtain \eqref{prediction error eq}. The proof is complete. $\hfill \square$
\subsection{Proof of \cref{optimal solution continuity proposition}}\label{proof of optimal solution continuity proposition}
If $\delta>0$ is sufficiently small, then $\|\xi_t\|\leq\delta$ implies that Problem \eqref{nominal data-driven mpc scheme} is feasible. We can prove \eqref{optimal solution continuity result 1.1} by using the candidate solution constructed in the proof of \cref{lower and upper bound lemma}. To prove \eqref{optimal solution continuity result 1.2}, we construct a feasible candidate solution for Problem \eqref{robust data-driven mpc scheme} based on the optimal solution of Problem \eqref{nominal data-driven mpc scheme}. We choose $\Bar{u}(t)=\check{u}^*(t), \Bar{y}(t)=\left[\begin{smallmatrix}\Tilde{y}_{[t-l,t-1]}\\\check{y}^*_{[0,L-1]}(t)\end{smallmatrix}\right]$. Note that $\|\check{\alpha}(t)\|^2$ is penalized in the cost function \eqref{nominal objective function}, which implies that $\check{\alpha}^*(t)$ satisfying \eqref{nominal parametrization using offline data} has minimal norm. Since \eqref{nominal parametrization using offline data} parametrizes a system trajectory due to the absence of measurement noise, we can obtain the expression of $\check{\alpha}^*(t)$ as
\begin{equation}\label{alpha check expression}
    \check{\alpha}^*(t) = H_{ux}^\dagger \begin{bmatrix}\check{u}^*(t)\\x_{t-l}\end{bmatrix}.
\end{equation}
Then, we choose $\Bar{\alpha}(t) = \check{\alpha}^*(t)$. To ensure that the equality constraint \eqref{parametrization using offline data} holds, we choose $\Bar{\sigma}(t)$ as
\begin{equation}
    \Bar{\sigma}(t) = H_{L+l}(\Tilde{y}^d)\Bar{\alpha}(t)- \Bar{y}(t).
\end{equation}
It is easy to verify that the candidate solution $\Bar{\alpha}(t),\Bar{\sigma}(t),\Bar{u}(t),\Bar{y}(t)$ satisfies all constraints of Problem \eqref{robust data-driven mpc scheme} and is thus feasible. Now, we are ready to characterize the relation between the optimal cost $\Bar{J}^*_L(\Tilde{\xi}_t)$ of Problem \eqref{robust data-driven mpc scheme} and the optimal cost $\check{J}^*_L(\xi_t)$ of Problem \eqref{nominal data-driven mpc scheme}. First, by using the candidate solution $  \Bar{\alpha}(t),\Bar{\sigma}(t),\Bar{u}(t),\Bar{y}(t)$, we have
\begin{equation}\label{objective function bound proof of optimal solution continuity}
    \Bar{J}^*_L(\Tilde{\xi}_t) \leq \check{J}^*_L(\xi_t) + \frac{\lambda_\sigma}{\Bar{\varepsilon}^{\beta_\sigma}}\|\Bar{\sigma}(t)\|^2.
\end{equation}
Then, we bound $\|\Bar{\sigma}(t)\|^2$ as follows:
\begin{equation}\label{sigma bound proof of optimal solution continuity}
    \begin{aligned}
        &\|\Bar{\sigma}(t)\|^2 \\
        &= \left\lVert\left(H_{L+l}(\Tilde{y}^d)-H_{L+l}(y^d)\right)\Bar{\alpha}(t)+H_{L+l}(y^d)\Bar{\alpha}(t)-\Bar{y}(t)\right\rVert^2\\
        &\leq 2\|H_{L+l}(\varepsilon^d)\Bar{\alpha}(t)\|^2+2\|\Check{y}^*(t)-\Bar{y}(t)\|^2\\
        &\leq 2\|H_{L+l}(\varepsilon^d)\|^2\|\Bar{\alpha}(t)\|^2+2\|\varepsilon_{[t-l,t-1]}\|^2\\
        &\leq 2c_H\Bar{\varepsilon}^2\|\Bar{\alpha}(t)\|^2+2l\Bar{\varepsilon}^2.
    \end{aligned}
\end{equation}
The term $\|\Bar{\alpha}(t)\|^2$ can be bounded as follows:
\begin{equation}\label{alpha bound proof of optimal solution continuity}
    \|\Bar{\alpha}(t)\|^2 = \|\check{\alpha}^*(t)\|^2 \leq \frac{\check{J}^*_L(\xi_t)}{\lambda_\alpha \Bar{\varepsilon}^{\beta_\alpha}}.
\end{equation}
Combining \eqref{objective function bound proof of optimal solution continuity}, \eqref{sigma bound proof of optimal solution continuity}, and \eqref{alpha bound proof of optimal solution continuity}, we obtain \eqref{optimal solution continuity result 1.2}. 

For \eqref{optimal solution continuity result 2}, we will only prove $\|\Bar{u}^*(t)-\check{u}^*(t)\|\leq \beta_u(\Bar{\varepsilon})$ since $\|\Bar{y}^*(t)-\check{y}^*(t)\|\leq \beta_y(\Bar{\varepsilon})$ can be proved in a similar way. To prove this result, we need to consider Problem \eqref{auxiliary nominal data-driven mpc scheme}. The expression of $\sigma$ is given by
\begin{equation}
    \sigma = \begin{bmatrix}
        \sigma_1\\ \sigma_2
    \end{bmatrix} = 
    \begin{bmatrix}
        \Bar{\sigma}^*(t) + \left(H_{L+l}(y^d)-H_{L+l}(\Tilde{y}^d)\right)\Bar{\alpha}^*(t)\\
        \Tilde{y}_{[t-l,t-1]} - y_{[t-l,t-1]}
    \end{bmatrix} \triangleq \tilde{\sigma} .
\end{equation}
Due to the construction of $\sigma$, $\left(\Bar{\alpha}^*(t),\Bar{u}^*(t),\Bar{y}^*(t)\right)$ is a feasible but not necessarily optimal solution of Problem \eqref{auxiliary nominal data-driven mpc scheme}. The corresponding objective function value is denoted by $\breve{J}_L(\xi_t)$. It is obvious that the following relation holds:
\begin{equation}\label{relationship 1 Proof of Proposition 1}
    \breve{J}^*_L(\xi_t) \leq\breve{J}_L(\xi_t) = \Bar{J}^*_L(\Tilde{\xi}_t) - \frac{\lambda_\sigma}{\Bar{\varepsilon}^{\beta_\sigma}}\|\Bar{\sigma}^*(t)\|^2.
\end{equation}
We then aim to construct a feasible solution to Problem \eqref{robust data-driven mpc scheme} based on $\left(\breve{\alpha}^*(t),\breve{u}^*(t),\breve{y}^*(t)\right)$. By choosing
\begin{equation}
    \Bar{\sigma}(t) = \Bar{\sigma}^*(t) + \left(H_{L+l}(\Tilde{y}^d)-H_{L+l}(y^d)\right)\left(\breve{\alpha}^*(t)-\Bar{\alpha}^*(t)\right),
\end{equation}
we can check that $\left(\breve{\alpha}^*(t),\Bar{\sigma}(t),\breve{u}^*(t),\breve{y}^*(t)\right)$ is a feasible solution of Problem \eqref{robust data-driven mpc scheme}. The corresponding objective function value is denoted by $\Bar{J}_L(\Tilde{\xi}_t)$. Similarly, we have the following relation:
\begin{equation}\label{relationship 2 Proof of Proposition 1}
    \Bar{J}^*_L(\Tilde{\xi}_t) \leq \Bar{J}_L(\Tilde{\xi}_t) = \breve{J}^*_L(\xi_t) + \frac{\lambda_\sigma}{\Bar{\varepsilon}^{\beta_\sigma}}\|\Bar{\sigma}(t)\|^2.
\end{equation}
Now we focus on Problem \eqref{robust data-driven mpc scheme}. It is worth noting that we can rewrite Problem \eqref{robust data-driven mpc scheme} in an equivalent way. For simplicity, we use $H_{u\Tilde{y}}$ to represent $\left[\begin{smallmatrix}
            H_{L+l}(u^d)\\
            H_{L+l}(\Tilde{y}^d)
        \end{smallmatrix}\right]$. We can then find a matrix $Z$ whose column space spans the left-null space of $\left[\begin{smallmatrix}
            H_{L+l}(u^d)\\
            H_{L+l}(\Tilde{y}^d)
        \end{smallmatrix}\right]$, i.e.,
\begin{equation}
    \mathrm{col}(Z) = \mathrm{ker}\left(H_{u\Tilde{y}}^\top\right).
\end{equation}
By using $Z$, we can replace the constraint \eqref{parametrization using offline data} with
\begin{equation}\label{replace parametrization using offline data}
    Z^\top\begin{bmatrix}
            \Bar{u}(t)\\
            \Bar{y}(t) + \Bar{\sigma}(t)
        \end{bmatrix}=0.
\end{equation}
This constraint ensures that $\left[\begin{smallmatrix}
            \Bar{u}(t)\\
            \Bar{y}(t) + \Bar{\sigma}(t)
        \end{smallmatrix}\right]$ lies in the column space of $H_{u\Tilde{y}}$. Consequently, we can replace $\bar{\alpha}$ in the objective function \eqref{objective function} by the following closed-form expression:
\begin{equation}
    \Bar{\alpha}(t) = H_{u\Tilde{y}}^\dagger\begin{bmatrix}
            \Bar{u}(t)\\
            \Bar{y}(t) + \Bar{\sigma}(t)
        \end{bmatrix}.
\end{equation}
This is because the optimal solution $\Bar{\alpha}^*(t)$ satisfying \eqref{parametrization using offline data} has minimal norm. Now we have equivalently rewritten Problem \eqref{robust data-driven mpc scheme} by removing the variable $\Bar{\alpha}(t)$. We refer to the resulting problem as the reduced problem. To facilitate discussion, we denote the objective function value of the reduced problem corresponding to $(\bar{\sigma},\breve{u}^*(t),\breve{y}^*(t))$ by $\bar{J}_{L,red}(\tilde{\xi}_t)$. Since $R\succ0, Q\succ0$, and $\lambda_\sigma>0$, the Hessian matrix of the objective function \eqref{objective function} with respect to $\left[\begin{smallmatrix}\Bar{\sigma}(t)^\top &\Bar{u}_{[0,L-1]}(t)^\top &\Bar{y}_{[0,L-1]}(t)^\top \end{smallmatrix}\right]^\top$ satisfies
\begin{equation}
    \nabla^2 \Bar{J} \succeq \mu I,
\end{equation}
for some constant $\mu>0$. Since the initial components of $\bar{u}^*(t)$ and $\breve{u}^*(t)$ are equal, by using the strong convexity property, there exists a constant $c_{u,1}= \frac{2}{\mu}>0$ such that 
\begin{equation}
\begin{aligned}
    \|\breve{u}^*(t)-\Bar{u}^*(t)\|^2 &\leq c_{u,1}\left(\Bar{J}_{L,red}(\Tilde{\xi}_t)-\Bar{J}^*_L(\Tilde{\xi}_t)\right),\\
    &\leq c_{u,1}\left(\Bar{J}_L(\Tilde{\xi}_t)-\Bar{J}^*_L(\Tilde{\xi}_t)\right)
\end{aligned}
\end{equation}
hold. Then, 
\begin{equation}\label{relationship 3 Proof of Proposition 1}
    \begin{aligned}
        &\|\breve{u}^*(t)-\Bar{u}^*(t)\|^2 \\
        &\leq c_{u,1}\left(\breve{J}^*_L(\xi_t)+\frac{\lambda_\sigma}{\Bar{\varepsilon}^{\beta_\sigma}}\|\Bar{\sigma}(t)\|^2-\breve{J}_L(\xi_t)-\frac{\lambda_\sigma}{\Bar{\varepsilon}^{\beta_\sigma}}\|\Bar{\sigma}^*(t)\|^2\right)\\
        &\leq c_{u,1}\left(\frac{\lambda_\sigma}{\Bar{\varepsilon}^{\beta_\sigma}}\|\Bar{\sigma}(t)\|^2-\frac{\lambda_\sigma}{\Bar{\varepsilon}^{\beta_\sigma}}\|\Bar{\sigma}^*(t)\|^2\right)\\
        &\leq c_{u,1}\frac{\lambda_\sigma}{\Bar{\varepsilon}^{\beta_\sigma}}\left(\|\Bar{\sigma}(t)-\Bar{\sigma}^*(t)\|^2+2\|\Bar{\sigma}(t)-\Bar{\sigma}^*(t)\|\|\Bar{\sigma}^*(t)\|\right).
    \end{aligned}
\end{equation}
Note that 
\begin{equation}\label{relationship 4 Proof of Proposition 1}
    \Bar{\sigma}(t) - \Bar{\sigma}^*(t) = H_{L+l}(\varepsilon^d)\left(\breve{\alpha}^*(t)-\Bar{\alpha}^*(t)\right)
\end{equation}
and 
\begin{equation}\label{relationship 5 Proof of Proposition 1}
    \|\breve{\alpha}^*(t)-\Bar{\alpha}^*(t)\|^2 \leq 2\|\breve{\alpha}^*(t)\|^2 + 2\|\Bar{\alpha}^*(t)\|^2.
\end{equation}
Then, the relevant terms can be bounded as
\begin{equation}\label{relationship 6 Proof of Proposition 1}
    \|\Bar{\sigma}^*(t)\|^2 \leq \frac{\Bar{J}^*_L(\Tilde{\xi}_t)}{\lambda_\sigma}\Bar{\varepsilon}^{\beta_\sigma}, \|\Bar{\alpha}^*(t)\|^2 \leq \frac{\Bar{J}^*_L(\Tilde{\xi}_t)}{\lambda_\alpha \Bar{\varepsilon}^{\beta_\alpha}},
\end{equation}
and
\begin{equation}
    \|\breve{\alpha}^*(t)\|^2 \leq \frac{\breve{J}^*_L(\xi_t)}{\lambda_\alpha \Bar{\varepsilon}^{\beta_\alpha}} \leq \frac{\Bar{J}^*_L(\Tilde{\xi}_t)-\frac{\lambda_\sigma}{\Bar{\varepsilon}^{\beta_\sigma}}\|\Bar{\sigma}^*(t)\|^2}{\lambda_\alpha \Bar{\varepsilon}^{\beta_\alpha}}\leq \frac{\Bar{J}^*_L(\Tilde{\xi}_t)}{\lambda_\alpha \Bar{\varepsilon}^{\beta_\alpha}}.
\end{equation}
By combining \eqref{relationship 3 Proof of Proposition 1}, \eqref{relationship 4 Proof of Proposition 1}, \eqref{relationship 5 Proof of Proposition 1}, and \eqref{relationship 6 Proof of Proposition 1}, we obtain
\begin{equation}
\begin{aligned}
    &\|\breve{u}^*(t)-\Bar{u}^*(t)\|^2 \leq \\
    & c_{u,1}\Bar{J}^*_L(\Tilde{\xi}_t)\left(\frac{4c_H\lambda_\sigma}{\lambda_\alpha}\Bar{\varepsilon}^{2-\beta_\alpha-\beta_\sigma}+4\sqrt{\frac{c_H\lambda_\sigma}{\lambda_\alpha}}\Bar{\varepsilon}^{\frac{2-\beta_\alpha-\beta_\sigma}{2}}\right).
\end{aligned}
\end{equation}
Recall that $\beta_\alpha+\beta_\sigma<2$ and that \eqref{optimal solution continuity result 1} holds. Then, there exists a $\beta_{u,1}\in \mathcal{K}_\infty$ such that
\begin{equation}\label{result 1 Proof of Proposition 1}
    \|\breve{u}^*(t)-\Bar{u}^*(t)\| \leq \beta_{u,1}(\Bar{\varepsilon}).
\end{equation}
We then bound $\|\breve{u}^*(t)-\check{u}^*(t)\|$, which involves the sensitivity analysis of the optimal input trajectory with respect to $\tilde{\sigma}$. Our analysis relies on \cite[Theorem 5.53]{bonnans2013perturbation} and we will show that our problem setting satisfies all required conditions (i)-(v).

Similar to the previous part, we can eliminate $\check\alpha(t)$ via the minimum-norm representation. The nominal MPC problem \eqref{nominal data-driven mpc scheme} is thus equivalently reduced to a convex QCQP in $(\check u(t),\check y(t))$. The reduced cost is strongly convex since \(R\succ0\) and \(Q\succ0\), which guarantees uniqueness of the optimizer (i) and implies the strong second-order sufficient condition (iv). By \cref{LICQ assumption}, LICQ holds at the nominal optimizer. Since LICQ implies MFCQ, Gollan's condition (ii) holds, and the existence of Lagrange multipliers follows (iii). Moreover, the endpoint problems corresponding to $\sigma=0$ and $\sigma = \tilde\sigma$ are feasible. By convexity, their convex combinations provide feasible solutions for all perturbations $\sigma = \theta\tilde\sigma$ with $\theta\in[0,1]$ (v). To show that the feasible set of Problem \eqref{auxiliary nominal data-driven mpc scheme} is uniformly bounded, we first bound $\tilde{\sigma}$ as 
\begin{equation}
    \begin{aligned}
        &\|\tilde{\sigma}\| \leq \|\Bar{\sigma}^*(t)-H_{L+l}(\varepsilon^d)\Bar{\alpha}^*(t)\|+\|\Tilde{y}_{[t-l,t-1]}-y_{t-l,t-1}\|\\
        & \leq \|\Bar{\sigma}^*(t)\| + \|H_{L+l}(\varepsilon^d)\|\|\Bar{\alpha}^*(t)\| + \|\varepsilon_{[t-l,t-1]}\|.
    \end{aligned}
\end{equation}
Using \eqref{relationship 6 Proof of Proposition 1}, we obtain
\begin{equation}\label{sigma tilde bound}
    \|\tilde{\sigma}\| \leq \sqrt{l}\Bar{\varepsilon} + \sqrt{\Bar{J}^*_L(\Tilde{\xi}_t)}\left(\sqrt{\frac{\Bar{\varepsilon}^{\beta_\sigma}}{\lambda_\sigma}}+ \sqrt{\frac{c_H}{\lambda_\alpha}\Bar{\varepsilon}^{2-\beta_\alpha}}\right).
\end{equation}
Recall that $\bar{\varepsilon}\leq\bar{\varepsilon}_{\max}$, $\beta_\alpha+\beta_\sigma<2$, and \eqref{optimal solution continuity result 1} holds. Then, if $\bar{\varepsilon}_{\max}$ is sufficiently small, $\|\tilde{\sigma}\|$ is small enough and uniformly bounded. Together with the facts that $\left[\begin{smallmatrix}
    \check u(t)\\
    \check y(t)+\theta\tilde\sigma_1
\end{smallmatrix}\right]$ is a finite-horizon nominal trajectory, the initial input-output trajectory $\left[\begin{smallmatrix}
    \check u_{[-l,-1]}(t)\\
    \check y_{[-l,-1]}(t)
\end{smallmatrix}\right]$ is bounded, the input constraint set is compact, and $\check{\xi}_L(t)$ satisfies the terminal constraint, we can conclude that the feasible sets of Problem \eqref{auxiliary nominal data-driven mpc scheme} are uniformly bounded (v). Therefore, by \cite[Theorem 5.53]{bonnans2013perturbation}, the optimal solution of the reduced auxiliary problem is locally Lipschitz continuous with respect to \(\tilde{\sigma}\). Since the above regularity conditions hold throughout the compact parameter set defined by sufficiently small \(\|\xi_t\|\) and \(\bar\varepsilon\in[0,\bar\varepsilon_{\max}]\), there exists a uniform constant $c_{u,2}>0$ such that
\begin{equation}
\|\breve{u}^*(t)-\check{u}^*(t)\| \leq c_{u,2}\|\tilde{\sigma}\|.
\end{equation}
Combining this result with \eqref{sigma tilde bound}, there exists a $\beta_{u,2}\in \mathcal{K}_\infty$ such that
\begin{equation}\label{result 2 Proof of Proposition 1}
    \|\breve{u}^*(t)-\check{u}^*(t)\| \leq \beta_{u,2}(\Bar{\varepsilon}).
\end{equation}
Combining \eqref{result 1 Proof of Proposition 1} and \eqref{result 2 Proof of Proposition 1}, there exists a $\beta_u\in\mathcal{K}_\infty$ such that 
\begin{equation}
        \|\Bar{u}^*(t)-\check{u}^*(t)\| \leq \beta_u(\Bar{\varepsilon}),
\end{equation}
which completes the proof of the first result of \eqref{optimal solution continuity result 2}. The bound $\|\Bar{y}^*(t)-\check{y}^*(t)\|\leq \beta_y(\Bar{\varepsilon})$ follows analogously.  $\hfill \square$
\subsection{Proof of \cref{recursive feasibility proposition}}\label{proof of recursive feasibility proposition}
In this section, we prove the recursive feasibility of Problem \eqref{robust data-driven mpc scheme} by constructing a feasible candidate solution at time $t+1$ based on the optimal solution at time $t$. Given that \cref{terminal conditions assumption} holds and the SDP \eqref{robust terminal set design sdp} is feasible, we can compute the weight matrix $P$, the terminal controller $K$, and the terminal set $\Sigma_f$ that satisfy the conditions in \cref{terminal conditions assumption}. Then, we can construct 
\begin{equation}\label{uy bar Proof of Proposition 2}
    \begin{aligned}
    &\Bar{u}_{[-l,-1]}(t+1) = u_{[t+1-l,t]},\\
    & \Bar{u}_{[0,L-2]}(t+1) = \Bar{u}^*_{[1,L-1]}(t),\\
    & \Bar{u}_{L-1}(t+1) = K\Bar{\xi}^*_L(t),\\
    & \Bar{y}_{[-l,-1]}(t+1) = \Tilde{y}_{[t+1-l,t]},\\
    & \Bar{y}_{[0,L-2]}(t+1) = \Bar{y}^*_{[1,L-1]}(t),\\
    & \Bar{y}_{L-1}(t+1) = (\Tilde{C}+\Tilde{D}K)\Bar{\xi}^*_L(t).\\
    \end{aligned}
\end{equation}
According to the first and second conditions in \cref{terminal conditions assumption}, our candidate solution satisfies $\Bar{u}_k(t+1)\in\mathbb{U}, k\in\mathbb{I}_{[0,L-1]}$, and $\Bar{\xi}_L(t+1)\in\Sigma_f$. Then, we construct
\begin{equation}\label{alpha bar Proof of Proposition 2}
    \Bar{\alpha}(t+1) = H_{ux}^\dagger
    \begin{bmatrix}
        \Bar{u}(t+1)\\
        x_{t+1-l}
    \end{bmatrix},
\end{equation}
and 
\begin{equation}
    \Bar{\sigma}(t+1) = H_{L+l}\left(\Tilde{y}^d\right)\Bar{\alpha}(t+1)-\Bar{y}(t+1).
\end{equation}
Our candidate solution $\Bar{\alpha}(t+1),\Bar{\sigma}(t+1),\Bar{u}(t+1),\Bar{y}(t+1)$ satisfies all constraints of Problem \eqref{robust data-driven mpc scheme} and is thus feasible. The proof is complete. $\hfill \square$
\subsection{Proof of \cref{practical exponential stability theorem}}\label{proof of practical exponential stability theorem}
In this section, we prove the practical exponential stability of the closed-loop system under the proposed robust data-driven MPC scheme. The proof integrates all the technical results in the previous sections. First, by leveraging the candidate solution at time $t+1$, we can bound the Lyapunov function decrease as follows:
\begin{equation}\label{lyapunov decrease 1 Proof of Theorem 3}
\begin{aligned}
    &V_{t+1} - V_t \leq \sum_{k=0}^{L-1}\left(\|\Bar{u}_k(t+1)\|^2_R + \|\Bar{y}_k(t+1)\|^2_Q \right)+ \\
    &\|\Bar{\xi}_L(t+1)\|^2_P + \lambda_\alpha\Bar{\varepsilon}^{\beta_\alpha}\|\Bar{\alpha}(t+1)\|^2 + \frac{\lambda_\sigma}{\Bar{\varepsilon}^{\beta_\sigma}}\|\Bar{\sigma}(t+1)\|^2 - \\
    &\sum_{k=0}^{L-1}\left(\|\Bar{u}^*_k(t)\|^2_R + \|\Bar{y}^*_k(t)\|^2_Q\right) - \|\Bar{\xi}^*_L(t)\|^2_P - \lambda_\alpha\Bar{\varepsilon}^{\beta_\alpha}\|\Bar{\alpha}^*(t)\|^2 \\
    &- \frac{\lambda_\sigma}{\Bar{\varepsilon}^{\beta_\sigma}}\|\Bar{\sigma}^*(t)\|^2 + \gamma \left(W(\xi_{t+1})-W(\xi_t)\right).
\end{aligned}
\end{equation}
Since the candidate solution $\Bar{u}(t+1)$ and $\Bar{y}(t+1)$ are directly shifted from $\Bar{u}^*(t)$ and $\Bar{y}^*(t)$, we can simplify \eqref{lyapunov decrease 1 Proof of Theorem 3} as
\begin{equation}\label{lyapunov decrease 2 Proof of Theorem 3}
\begin{aligned}
    &V_{t+1} - V_t \leq \|\Bar{u}_{L-1}(t+1)\|^2_R + \|\Bar{y}_{L-1}(t+1)\|^2_Q + \\
    &\|\Bar{\xi}_L(t+1)\|^2_P + \lambda_\alpha\Bar{\varepsilon}^{\beta_\alpha}\|\Bar{\alpha}(t+1)\|^2 + \frac{\lambda_\sigma}{\Bar{\varepsilon}^{\beta_\sigma}}\|\Bar{\sigma}(t+1)\|^2 - \\
    &\|\Bar{u}^*_0(t)\|^2_R - \|\Bar{y}^*_0(t)\|^2_Q - \|\Bar{\xi}^*_L(t)\|^2_P  +\gamma \left(W(\xi_{t+1})-W(\xi_t)\right),
\end{aligned}
\end{equation}
which can be further simplified by using the third condition of \cref{terminal conditions assumption}
\begin{equation}\label{lyapunov decrease 3 Proof of Theorem 3}
\begin{aligned}
    &V_{t+1} - V_t\leq-\|\Bar{u}^*_0(t)\|^2_R - \|\Bar{y}^*_0(t)\|^2_Q +\frac{\lambda_\sigma}{\Bar{\varepsilon}^{\beta_\sigma}}\|\Bar{\sigma}(t+1)\|^2   \\
    & +\lambda_\alpha\Bar{\varepsilon}^{\beta_\alpha}\|\Bar{\alpha}(t+1)\|^2 + \gamma \left(W(\xi_{t+1})-W(\xi_t)\right).
\end{aligned}
\end{equation}
The rest of the proof focuses on bounding the right-hand side of \eqref{lyapunov decrease 3 Proof of Theorem 3}. We first bound $\|\Bar{\alpha}(t+1)\|^2$. Recall that $\Bar{\alpha}(t+1)$ is constructed in \eqref{alpha bar Proof of Proposition 2}. Then, 
\begin{equation}\label{relationship 1 Proof of Theorem 3}
\|\Bar{\alpha}(t+1)\|^2 \leq \|H_{ux}^\dagger\|^2\left(\|\Bar{u}(t+1)\|^2+\|x_{t+1-l}\|^2\right).  
\end{equation}
The candidate solution $\Bar{u}(t+1)$ in \eqref{uy bar Proof of Proposition 2} enables us to decompose $\|\Bar{u}(t+1)\|^2$ as
\begin{equation}\label{relationship 2 Proof of Theorem 3}
\|\Bar{u}(t+1)\|^2 = \|\Bar{u}^*_{[-l+1,L-1]}(t)\|^2+\|\Bar{u}_{L-1}(t+1)\|^2.
\end{equation}
Note that we can explicitly express $x_{t-l+1}$ by using $\xi_t$
\begin{equation}
    x_{t-l+1} = 
    \underbrace{\begin{bmatrix}
        -A\mathcal{O}_l^\dagger\mathcal{T}_l + \left[\begin{smallmatrix}B &0 &\cdots &0\end{smallmatrix}\right] &A\mathcal{O}^\dagger_l
    \end{bmatrix}}_{M}\xi_t.
\end{equation}
Therefore, 
\begin{equation}\label{relationship 3 Proof of Theorem 3}
    \|x_{t+1-l}\|^2 \leq \|M\|^2\|\xi_t\|^2.
\end{equation}
Besides, since $\Bar{\xi}^*_L(t)\in\Sigma_f$, the candidate solution in \eqref{uy bar Proof of Proposition 2} implies that $\Bar{\xi}_L(t+1)\in\Sigma_f$, according to the terminal conditions in \cref{terminal conditions assumption}. Then, we have
\begin{equation}\label{relationship 4 Proof of Theorem 3}
    \|\Bar{u}_{L-1}(t+1)\|^2  \leq \|\Bar{\xi}_L(t+1)\|^2 \leq \frac{r}{\lambda_{\min}(P)}.
\end{equation}
Furthermore, we can use $\Bar{V}$, i.e., the upper bound of $V_t$, to bound $ \|\Bar{u}^*_{[-l+1,L-1]}(t)\|^2$ as
\begin{equation}\label{relationship 5 Proof of Theorem 3}
\begin{aligned}
     \|\Bar{u}^*_{[-l+1,L-1]}(t)\|^2 &= \|\Bar{u}^*_{[-l+1,-1]}(t)\|^2 + \|\Bar{u}^*_{[0,L-1]}(t)\|^2\\
     &\leq \|\xi_t\|^2 + \frac{\bar{V}}{\lambda_{\min}(R)}.
\end{aligned}
\end{equation}
By substituting \eqref{relationship 2 Proof of Theorem 3}, \eqref{relationship 3 Proof of Theorem 3}, \eqref{relationship 4 Proof of Theorem 3}, and \eqref{relationship 5 Proof of Theorem 3} into \eqref{relationship 1 Proof of Theorem 3}, we obtain
\begin{equation}\label{relationship 6 Proof of Theorem 3}
\begin{aligned}
    &\|\Bar{\alpha}(t+1)\|^2 \leq \\&\|H_{ux}^\dagger\|^2\left(\frac{\Bar{V}}{\lambda_{\min}(R)}+\frac{r}{\lambda_{\min}(P)}+\left(\|M\|^2+1\right)\|\xi_t\|^2\right).
\end{aligned}
\end{equation}
We now consider bounding $\|\Bar{\sigma}(t+1)\|^2$. By using our constructed $\Bar{\sigma}(t+1)$, we have
\begin{equation}\label{relationship 7 Proof of Theorem 3}
    \begin{aligned}
        &\|\Bar{\sigma}(t+1)\|^2\\
        &= \|H_{L+l}(\varepsilon^d)\bar{\alpha}(t+1)+H_{L+l}(y^d)\Bar{\alpha}(t+1)-\Bar{y}(t+1)\|^2\\
        & \leq 2c_H \Bar{\varepsilon}^2\|\bar{\alpha}(t+1)\|^2 + 2\|H_{L+l}(y^d)\Bar{\alpha}(t+1)-\Bar{y}(t+1)\|^2\\
        & \leq 2c_H \Bar{\varepsilon}^2\|\bar{\alpha}(t+1)\|^2 + 2\|\Hat{y}_{[t-l+1,t+L]}-\Bar{y}(t+1)\|^2.
    \end{aligned}
\end{equation}
The second term can be decomposed as 
\begin{equation}\label{relationship 8 Proof of Theorem 3}
\begin{aligned}
    &\|\Hat{y}_{[t-l+1,t+L]}-\Bar{y}(t+1)\|^2 \\
    &\leq 2l\Bar{\varepsilon}^2+2\|\Hat{y}_{[t+1,t+L]}-\Bar{y}_{[0,L-1]}(t+1)\|^2\\
    & =  2l\Bar{\varepsilon}^2+2\sum^{L-1}_{i=1}\|\hat{y}_{t+i}-\bar{y}^*_i(t)\|^2+2\|\hat{y}_{t+L}-\bar{y}_{L-1}(t+1)\|^2.
\end{aligned}
\end{equation}
Note that in \eqref{relationship 7 Proof of Theorem 3}, the bound on $\|\Bar{\alpha}(t+1)\|^2$ has been derived in \eqref{relationship 6 Proof of Theorem 3}. Then, according to \eqref{relationship 8 Proof of Theorem 3}, we only need to bound $\sum^{L-1}_{i=1}\|\hat{y}_{t+i}-\bar{y}^*_i(t)\|^2$ and $\|\hat{y}_{t+L}-\bar{y}_{L-1}(t+1)\|^2$ to obtain a complete bound on $\|\Bar{\sigma}(t+1)\|^2$. 

Recall from \cref{prediction error lemma} that, for $i\in\mathbb{I}_{[1,L-1]}$, the output prediction error satisfies
    \begin{equation}\label{output prediction bound Proof of Theorem 3}
        \begin{aligned}
        &\|\hat{y}_{t+i}-\Bar{y}^*_i(t)\|^2 \leq 3c_N\Bar{\varepsilon}^2\|\Bar{\alpha}^*(t)\|^2 + 3\|\Bar{\sigma}^*_i(t)\|^2\\
        &+3\rho_{2,l+i}(3l\Bar{\varepsilon}^2+3\|\Bar{\sigma}^*_{[-l,-1]}(t)\|^2+
        3lc_N\Bar{\varepsilon}^2\|\Bar{\alpha}^*(t)\|^2).
        \end{aligned}
    \end{equation}
In \eqref{output prediction bound Proof of Theorem 3}, only $\|\Bar{\sigma}_i^*(t)\|^2$ and $\|\Bar{\sigma}^*_{[-l,-1]}(t)\|^2$ do not explicitly vanish as $\Bar{\varepsilon}\rightarrow 0$. In what follows, we will provide a bound on $\|\Bar{\sigma}^*(t)\|^2$ to show that $\|\Bar{\sigma}^*(t)\|^2$ not only vanishes as $\Bar{\varepsilon}\rightarrow 0$ but can also be effectively incorporated into \eqref{lyapunov decrease 3 Proof of Theorem 3} to ensure the Lyapunov function decrease.

First, according to \cref{lower and upper bound lemma}, the choice $\Bar{V} = \gamma\lambda_{\min}(P_W)\delta^2$ implies that $\|\xi_t\|^2\leq \frac{\Bar{V}}{\gamma\lambda_{\min}(P_W)} = \delta^2$. This fact, together with $\|x_t\|\leq \|\Gamma\|\|\xi_t\|$, further implies that if $\delta$ is sufficiently small, $x_t$ is sufficiently close to the origin. Then, by controllability, there is an input trajectory that steers the system state to the terminal set while satisfying input constraints. Consequently, Problem \eqref{nominal data-driven mpc scheme} is feasible. We then utilize the optimal solution of Problem \eqref{nominal data-driven mpc scheme} to construct a feasible solution of Problem \eqref{robust data-driven mpc scheme}, i.e.,
\begin{equation}
    \Bar{\alpha}(t) = \Check{\alpha}^*(t), \Bar{u}(t) = \Check{u}^*(t), \Bar{y}(t) = \begin{bmatrix}
        \tilde{y}_{[t-l,t-1]}\\
        \Check{y}_{[0,L-1]}^*(t)
    \end{bmatrix}.
\end{equation}
To satisfy the equality constraint \eqref{parametrization using offline data}, we choose $\Bar{\sigma}(t)$ as
\begin{equation}
    \begin{aligned}
        \Bar{\sigma}_{[-l,-1]}(t) &= H_l\left(\varepsilon^d_{[0,N-L-1]}\right)\bar{\alpha}(t)-\varepsilon_{[t-l,t-1]},\\
        \Bar{\sigma}_{[0,L-1]}(t) &= H_L\left(\varepsilon^d_{[l,N-1]}\right)\bar{\alpha}(t).
    \end{aligned}
\end{equation}
As in the proof of \cref{lower and upper bound lemma}, we have
\begin{equation}\label{relationship 9 Proof of Theorem 3}
\|\Bar{\sigma}(t)\|^2\leq 2c_H\Bar{\varepsilon}^2\|\Bar{\alpha}(t)\|^2 + 2l\Bar{\varepsilon}^2.
\end{equation}
To facilitate subsequent analysis, we define
\begin{equation}
\begin{aligned}
    &H(\Bar{u}(t),\Bar{y}(t),\Bar{\alpha}(t)) \triangleq \\
    &\sum_{k=0}^{L-1}\left(\|\Bar{u}_k(t)\|_R^2 + \|\Bar{y}_k(t)\|_Q^2\right)+\|\Bar{\xi}_L(t)\|^2_P + \lambda_\alpha \Bar{\varepsilon}^{\beta_\alpha}\|\Bar{\alpha}(t)\|^2.
\end{aligned}
\end{equation}
Then, according to optimality:
\begin{equation}
\begin{aligned}
    &H(\Bar{u}^*(t),\Bar{y}^*(t),\Bar{\alpha}^*(t))+\frac{\lambda_\sigma}{\bar{\varepsilon}^{\beta_\sigma}}\|\bar{\sigma}^*(t)\|^2 \leq\\
    &H(\Bar{u}(t),\Bar{y}(t),\Bar{\alpha}(t))+\frac{\lambda_\sigma}{\bar{\varepsilon}^{\beta_\sigma}}\|\bar{\sigma}(t)\|^2,
\end{aligned}
\end{equation}
which implies 
\begin{equation}\label{sigma bar * bound Proof of Theorem 3}
\begin{aligned}
    &\|\bar{\sigma}^*(t)\|^2 \leq \|\bar{\sigma}(t)\|^2 +\\
    &\left(H(\Bar{u}(t),\Bar{y}(t),\Bar{\alpha}(t)) - H(\Bar{u}^*(t),\Bar{y}^*(t),\Bar{\alpha}^*(t)) \right)\frac{\bar{\varepsilon}^{\beta_\sigma}}{\lambda_\sigma}.
\end{aligned}
\end{equation}
By some calculations, we can obtain
\begin{equation}
\begin{aligned}
    &H(\Bar{u}(t),\Bar{y}(t),\Bar{\alpha}(t)) - H(\Bar{u}^*(t),\Bar{y}^*(t),\Bar{\alpha}^*(t))=\\
    &  \|\bar{u}_{[0,L-1]}(t)\|^2_{I_L\otimes R}-\|\bar{u}_{[0,L-1]}^*(t)\|^2_{I_L\otimes R} +
    \|\bar{y}_{[0,L-1]}(t)\|^2_{I_L\otimes Q}\\
    &-\|\bar{y}_{[0,L-1]}^*(t)\|^2_{I_L\otimes Q} + \|\bar{\xi}_L(t)\|^2_{P}-\|\bar{\xi}_L^*(t)\|^2_{P} \\
    &+ \lambda_\alpha \bar{\varepsilon}^{\beta_\alpha}\|\bar{\alpha}(t)\|^2_2-\lambda_\alpha \bar{\varepsilon}^{\beta_\alpha}\|\bar{\alpha}^*(t)\|^2_2\\
    &
    \leq \lambda_{\max}(R)\left(\|\bar{u}(t)-\bar{u}^*(t)\|^2 + 2\|\bar{u}(t)-\bar{u}^*(t)\|\|\bar{u}^*(t)\|\right)\\
    &+\lambda_{\max}(Q)\left(\|\bar{y}(t)-\bar{y}^*(t)\|^2 + 2\|\bar{y}(t)-\bar{y}^*(t)\|\|\bar{y}^*(t)\|\right)\\
    &+\|\bar{\xi}_L(t)-\bar{\xi}^*_L(t)\|^2_{P} + 2\|\bar{\xi}_L(t)-\bar{\xi}^*_L(t)\|_P\|\bar{\xi}^*_L(t)\|_{P}\\
    &+\lambda_\alpha \bar{\varepsilon}^{\beta_\alpha}\|\bar{\alpha}(t)\|^2_2.
\end{aligned}
\end{equation}
Since $\Bar{V} = \gamma\lambda_{\min}(P_W)\delta^2$ implies $\|\xi_t\|\leq \delta$, we can use the result \eqref{optimal solution continuity result 2} to bound $\|\bar{u}(t)-\bar{u}^*(t)\|$ and $\|\bar{y}(t)-\bar{y}^*(t)\|$. The term $\|\bar{\xi}_L(t)-\bar{\xi}_L^*(t)\|_P$ can be bounded in a similar way due to the structure of $\bar{\xi}_L(t)$ and $\bar{\xi}^*_L(t)$. In addition, $\|\bar{u}^*(t)\|, \|\bar{y}^*(t)\|$, and $\|\bar{\xi}_L^*(t)\|_{P}$ can be bounded using $\Bar{V}$. In summary, there exist functions $\beta_{H,1}\in \mathcal{K}_\infty$ and $\beta_{H,2}\in \mathcal{K}_\infty$ such that 
\begin{equation}\label{H difference bound Proof of Theorem 3}
    \begin{aligned}
        &H(\Bar{u}(t),\Bar{y}(t),\Bar{\alpha}(t)) - H(\Bar{u}^*(t),\Bar{y}^*(t),\Bar{\alpha}^*(t))\\
        &\leq \beta_{H,1}(\bar{\varepsilon})+\beta_{H,2}(\bar{\varepsilon})\sqrt{\bar{V}} + \lambda_\alpha \bar{\varepsilon}^{\beta_\alpha}\|\bar{\alpha}(t)\|^2.
    \end{aligned}
\end{equation}
We next focus on bounding $\lambda_\alpha \bar{\varepsilon}^{\beta_\alpha}\|\bar{\alpha}(t)\|^2_2$. Recall from \eqref{alpha check expression} that $\bar{\alpha}(t)=\check{\alpha}^*(t)$ is given by
\begin{equation}
    \bar{\alpha}(t) = H_{ux}^\dagger \begin{bmatrix}\check{u}^*(t)\\x_{t-l}\end{bmatrix}.
\end{equation}
Then, we obtain 
\begin{equation}\label{alpha bar bound Proof of Theorem 3}
    \begin{aligned}
        &\|\Bar{\alpha}(t)\|^2 \leq \|H_{ux}^\dagger\|^2\left(\|\check{u}^*_{[-l,L-1]}(t)\|^2+\|x_{t-l}\|^2\right)\\
        &= \|H_{ux}^\dagger\|^2\left(\|\check{u}^*_{[0,L-1]}(t)\|^2+\|\check{u}^*_{[-l,-1]}(t)\|^2 + \|x_{t-l}\|^2\right)\\
        &\leq \|H_{ux}^\dagger\|^2\left(\frac{\check{J}^*_L(\xi_t)}{\lambda_{\min}(R)} + \left\lVert\left[\begin{smallmatrix}I &0\\-\mathcal{O}_l^\dagger\mathcal{T}_l &\mathcal{O}_l^\dagger\end{smallmatrix}\right]\right\rVert^2\|\xi_t\|^2\right).
    \end{aligned}
\end{equation}
Since $\check{J}^*_L(\xi_t)$ is the optimal cost of the nominal MPC problem, it can be bounded using the candidate solution constructed in the proof of \cref{lower and upper bound lemma}. Here, we omit the details of the procedure and directly provide the bound on $\check{J}^*_L(\xi_t)$:
\begin{equation}\label{J check bound Proof of Theorem 3}
    \check{J}^*_L(\xi_t) \leq \left(c_{s,\xi} + \lambda_\alpha \bar{\varepsilon}^{\beta_\alpha}c_{\alpha,\xi}\right)\|\xi_t\|^2.
\end{equation}
From \eqref{alpha bar bound Proof of Theorem 3}, \eqref{J check bound Proof of Theorem 3}, and $\Bar{\varepsilon}\leq
\Bar{\varepsilon}_{\max}$, there exists a constant $c_{\bar{\alpha}}>0$ such that 
\begin{equation}\label{alpha bar final bound Proof of Theorem 3}
    \|\Bar{\alpha}(t)\|^2 \leq c_{\bar{\alpha}} \|\xi_t\|^2
\end{equation}
holds. Combining \eqref{relationship 9 Proof of Theorem 3}, \eqref{sigma bar * bound Proof of Theorem 3}, \eqref{H difference bound Proof of Theorem 3}, and \eqref{alpha bar final bound Proof of Theorem 3}, we obtain 
\begin{equation}\label{relationship 10 Proof of Theorem 3}
\begin{aligned}
    &\|\bar{\sigma}^*(t)\|^2 \leq 2c_H\Bar{\varepsilon}^2 c_{\bar{\alpha}} \|\xi_t\|^2 + 2l\Bar{\varepsilon}^2+\\
    &\left(\beta_{H,1}(\bar{\varepsilon})+\beta_{H,2}(\bar{\varepsilon})\sqrt{\bar{V}}+\lambda_\alpha \bar{\varepsilon}^{\beta_\alpha}c_{\bar{\alpha}} \|\xi_t\|^2\right)\frac{\bar{\varepsilon}^{\beta_\sigma}}{\lambda_\sigma}.
\end{aligned}
\end{equation}
We combine \eqref{output prediction bound Proof of Theorem 3} and \eqref{relationship 10 Proof of Theorem 3}, and use the fact that $\|\bar{\sigma}_i^*(t)\|^2\leq\|\bar{\sigma}^*(t)\|^2$, $\|\bar{\sigma}_{[-l,-1]}^*(t)\|^2\leq\|\bar{\sigma}^*(t)\|^2$, and $\|\bar{\alpha}^*(t)\|^2\leq\frac{\bar{V}}{\lambda_\alpha \bar{\varepsilon}^{\beta_\alpha}}$. Then, we can bound $\sum^{L-1}_{i=1}\|\hat{y}_{t+i}-\bar{y}^*_i(t)\|^2$ as
\begin{equation}\label{sum term  bound Proof of Theorem 3}
\begin{aligned}
    &\sum^{L-1}_{i=1}\|\hat{y}_{t+i}-\bar{y}^*_i(t)\|^2 \leq c_1\bar{\varepsilon}^{2-\beta_\alpha}+c_2 \bar{\varepsilon}^2+c_3\bar{\varepsilon}^{2-\beta_\alpha}+\big(c_4\Bar{\varepsilon}^2+\\
    &\left(c_5\beta_{H,1}(\bar{\varepsilon})+c_6\beta_{H,2}(\bar{\varepsilon})+c_7\bar{\varepsilon}^{\beta_\alpha} \|\xi_t\|^2\right)\bar{\varepsilon}^{\beta_\sigma} + c_8\Bar{\varepsilon}^2 \|\xi_t\|^2 \big),
\end{aligned}
\end{equation}
for some constants $c_i, i \in \mathbb{I}_{[1,8]}$.
Our next step is to bound $\|\hat{y}_{t+L}-\bar{y}_{L-1}(t+1)\|^2$. We first compute the nominal state at time $t+L-l$ as
\begin{equation}
    \hat{x}_{t+L-l} = \begin{bmatrix}-\mathcal{O}^\dagger_l \mathcal{T}_l &\mathcal{O}^\dagger_l\end{bmatrix} \Hat{\xi}_{t+L}, \Hat{\xi}_{t+L} \triangleq \begin{bmatrix}\Bar{u}^*_{[L-l,L-1]}(t) \\ \Hat{y}_{[t+L-l,t+L-1]}\end{bmatrix}.
\end{equation}
Then, $\Hat{y}_{t+L}$ is the output starting from $\hat{x}_{t+L-l}$ with the application of $\Bar{u}^*_{[L-l,L-1]}(t)$ and $\Bar{u}_{L-1}(t+1)$. To describe $\bar{y}_{L-1}(t+1)$, we note that $\bar{\xi}^*_L(t)$ is not a trajectory of \eqref{basic system}. However, we can calculate a state as 
\begin{equation}\label{virtual state x barbar}
\bar{\bar{x}} = \begin{bmatrix}-\mathcal{O}^\dagger_l \mathcal{T}_l &\mathcal{O}^\dagger_l\end{bmatrix} \Bar{\xi}^*_L(t).
\end{equation}
We then let $\bar{\bar{y}}_{t+L-l,t+L-1}$ denote the output starting from $\bar{\bar{x}}$ with the application of $\Bar{u}^*_{[L-l,L-1]}(t)$. Then, since in \eqref{virtual state x barbar}, $\bar{\bar{x}}$ is computed in the least-squares sense, it represents the state that minimizes $\|\bar{\bar{y}}_{t+L-l,t+L-1}-\Bar{y}^*_{[L-l,L-1]}(t)\|$ under the same input sequence $\bar{u}^*_{[L-l,L-1]}(t)$. 
Then, according to the construction of $\bar{y}_{L-1}(t+1)$ in \eqref{uy bar Proof of Proposition 2}, $\bar{y}_{L-1}(t+1)$ can be viewed as the output starting from $\bar{\bar{x}}$ with the application of $\Bar{u}^*_{[L-l,L-1]}(t)$ and $\Bar{u}_{L-1}(t+1)$. Consequently,
\begin{equation}\label{final term  bound Proof of Theorem 3}
    \begin{aligned}
        &\|\hat{y}_{t+L}-\bar{y}_{L-1}(t+1)\|^2 \\
        &= \|CA^l (\hat{x}_{t+L-l} - \bar{\bar{x}} )\|^2\\
        &\leq \left\lVert CA^l \begin{bmatrix}-\mathcal{O}^\dagger_l \mathcal{T}_l &\mathcal{O}^\dagger_l\end{bmatrix}\left(\hat{\xi}_{t+L}-\bar{\xi}^*_L(t)\right)\right\rVert^2\\
        & \leq \left\lVert CA^l \begin{bmatrix}-\mathcal{O}^\dagger_l \mathcal{T}_l &\mathcal{O}^\dagger_l\end{bmatrix} \right\rVert^2 \left\lVert \hat{y}_{[t+L-l,t+L-1]}-\bar{y}^*_{[L-l,L-1]}(t) \right\rVert^2\\
        & \leq \left\lVert CA^l \begin{bmatrix}-\mathcal{O}^\dagger_l \mathcal{T}_l &\mathcal{O}^\dagger_l\end{bmatrix} \right\rVert^2\sum^{L-1}_{i=1} \|\hat{y}_{t+i}-\bar{y}^*_i(t)\|^2.
    \end{aligned}
\end{equation}
Therefore, we can use the bound on $\sum^{L-1}_{i=1} \|\hat{y}_{t+i}-\bar{y}^*_i(t)\|^2$ in \eqref{sum term  bound Proof of Theorem 3} to obtain a bound for $\|\hat{y}_{t+L}-\bar{y}_{L-1}(t+1)\|^2$. By combining \eqref{relationship 6 Proof of Theorem 3}, \eqref{relationship 7 Proof of Theorem 3}, \eqref{relationship 8 Proof of Theorem 3}, \eqref{sum term  bound Proof of Theorem 3}, and \eqref{final term  bound Proof of Theorem 3}, we can bound $\frac{\lambda_\sigma}{\Bar{\varepsilon}^{\beta_\sigma}}\|\Bar{\sigma}(t+1)\|^2$ and $\lambda_\alpha\Bar{\varepsilon}^{\beta_\alpha}\|\Bar{\alpha}(t+1)\|^2$ in \eqref{lyapunov decrease 3 Proof of Theorem 3}. The remaining terms on the right-hand side of \eqref{lyapunov decrease 3 Proof of Theorem 3} can be bounded as follows:
\begin{equation}
\begin{aligned}
    &-\|\Bar{u}^*_0(t)\|^2_R - \|\Bar{y}^*_0(t)\|^2_Q + \gamma \left(W(\xi_{t+1})-W(\xi_t)\right) \\
    & \overset{\eqref{IOSS eq}}{\leq} -\|\bar{u}^*_0(t)\|^2_R - \|\bar{y}^*_0(t)\|^2_Q\\
    &+ \gamma\left(-\frac{1}{2}\|\xi_t\|^2 + c_{W,1}\|u_t\|^2 + c_{W,2}\|y_t\|^2\right).
\end{aligned}
\end{equation}
By using $
    \|y_t\|^2 \leq 2\|y_t-\bar{y}^*_0(t)\|^2 + 2\|\bar{y}^*_0\|^2,$
we can choose 
\begin{equation}
    \gamma = \frac{\lambda_{\min}(Q,R)}{\max\{c_{W,1},2c_{W,2}\}}
\end{equation}
such that
\begin{equation}\label{rest of terms Proof of Theorem 3}
\begin{aligned}
    &-\|\Bar{u}^*_0(t)\|^2_R - \|\Bar{y}^*_0(t)\|^2_Q + \gamma \left(W(\xi_{t+1})-W(\xi_t)\right)\\
    &\leq -\frac{\gamma}{2}\|\xi_t\|^2 + 2c_{W,2} \gamma \|y_t-\bar{y}^*_0(t)\|^2.
\end{aligned}
\end{equation}
Combining \eqref{relationship 6 Proof of Theorem 3}, \eqref{relationship 7 Proof of Theorem 3}, \eqref{relationship 8 Proof of Theorem 3}, \eqref{sum term  bound Proof of Theorem 3}, \eqref{final term  bound Proof of Theorem 3}, and \eqref{rest of terms Proof of Theorem 3}, the Lyapunov function decrease \eqref{lyapunov decrease 3 Proof of Theorem 3} can be bounded as 
\begin{equation}\label{simplied lyapunov decrease}
\begin{aligned}
    &V_{t+1} - V_t \leq\\ 
    &(c_9 \bar{\varepsilon}^{\beta_\alpha} + c_{10}\bar{\varepsilon}^{2-\beta_\sigma}+c_{11}\bar{\varepsilon}^{\beta_\alpha+\beta_\sigma}+c_{12}\bar{\varepsilon}^2-\frac{\gamma}{2})\|\xi_t\|^2 + \\
    &c_{13}\bar{\varepsilon}^{\beta_\alpha} + c_{14}\bar{\varepsilon}^{2-\beta_\sigma}+(c_{15}+c_{16}\bar{\varepsilon}^{\beta_\sigma})(c_{17}\bar{\varepsilon}^{2-\beta_\sigma}+\\
    &c_{18}\bar{\varepsilon}^{2-\beta_\alpha-\beta_\sigma}+c_{19}\beta_{H,1}(\bar{\varepsilon})+c_{20}\beta_{H,2}(\bar{\varepsilon})),
\end{aligned}
\end{equation}
for some constants $c_i, i\in\mathbb{I}_{[9,20]}$. According to \cref{lower and upper bound lemma}, 
\begin{equation}\label{inequality from lower and upper bound}
    \|\xi_t\|^2 \geq \frac{V_t-c_{V,2}(\bar{\varepsilon})}{c_{V,1}(\bar{\varepsilon})}.
\end{equation}
If $\bar{\varepsilon}$ is sufficiently small, the coefficient of $\|\xi_t\|^2$ will be negative. Thus, we can use \eqref{inequality from lower and upper bound} to rewrite \eqref{simplied lyapunov decrease} as 
\begin{equation}
\begin{aligned}
    &V_{t+1}\\
    &\leq \left(1+\frac{c_9 \bar{\varepsilon}^{\beta_\alpha} + c_{10}\bar{\varepsilon}^{2-\beta_\sigma}+c_{11}\bar{\varepsilon}^{\beta_\alpha+\beta_\sigma}+c_{12}\bar{\varepsilon}^2-\frac{\gamma}{2}}{c_{V,1}}\right)V_t +\\
    &\frac{c_{V,2}(\bar{\varepsilon})}{c_{V,1}(\bar{\varepsilon})}\left(\frac{\gamma}{2} - c_9 \bar{\varepsilon}^{\beta_\alpha} - c_{10}\bar{\varepsilon}^{2-\beta_\sigma}-c_{11}\bar{\varepsilon}^{\beta_\alpha+\beta_\sigma}-c_{12}\bar{\varepsilon}^2\right)+ \\
    &c_{13}\bar{\varepsilon}^{\beta_\alpha}+c_{14}\bar{\varepsilon}^{2-\beta_\sigma} + (c_{15}+c_{16}\bar{\varepsilon}^{\beta_\sigma})(c_{17}\bar{\varepsilon}^{2-\beta_\sigma}+\\
    &c_{18}\bar{\varepsilon}^{2-\beta_\alpha-\beta_\sigma}+c_{19}\beta_{H,1}(\bar{\varepsilon})+c_{20}\beta_{H,2}(\bar{\varepsilon})).
\end{aligned}
\end{equation}
Recall that as $\bar{\varepsilon}\rightarrow 0$, $c_{V,1}(\bar{\varepsilon})$ is lower-bounded by a positive constant and $c_{V,2}(\bar{\varepsilon})$ vanishes. Therefore, there exists a constant $0<\rho<1$ and a $\mathcal{K}_\infty$ function $\beta_V$ such that $V_{t+1}$ satisfies
\begin{equation}\label{lyapunov decrease 4 Proof of Theorem 3}
    V_{t+1} \leq \rho V_t + \beta_V(\bar{\varepsilon}).
\end{equation}
By defining 
\begin{equation}\label{change of variable Proof of Theorem 3}
    \beta(\bar{\varepsilon}) \triangleq \frac{1}{1-\rho}\beta_V(\bar{\varepsilon}) \rightarrow \beta_V(\bar{\varepsilon}) = (1-\rho)\beta(\bar{\varepsilon}),
\end{equation}
we obtain another $\mathcal{K}_\infty$ function $\beta$. We can thereby find a sufficiently small $\Bar{\varepsilon}_{\max}>0$ such that 
\begin{equation}
    \beta(\Bar{\varepsilon}_{\max})
\le \Bar{V}, c_{V,2}(\bar{\varepsilon}_{\max}) < \Bar{V}.
\end{equation}
The first inequality ensures that if $\Bar{\varepsilon}\leq\Bar{\varepsilon}_{\max}$, $V_t\leq\bar{V}$ implies $V_{t+1}\leq\bar{V}$. The second one, together with the upper bound in \eqref{lower and upper bound eq}, guarantees that this sublevel set contains a neighborhood of the origin. It is also worth noting that $V_{t+1} \leq \bar{V} = \gamma\lambda_{\min}(P_W)\delta^2$ implies $\|\xi_{t+1}\| \leq\delta$. This implies that \cref{lower and upper bound lemma} still holds at time $t+1$, so that the proof procedure also applies for time $t+1$. Besides, $\|\xi_{t+1}\| \leq\delta$ also indicates that the nominal MPC scheme \eqref{nominal data-driven mpc scheme} will remain feasible at time $t+1$. Thus, the relevant result, e.g., \cref{optimal solution continuity proposition}, still holds at time $t+1$. Consequently, \eqref{lyapunov decrease 4 Proof of Theorem 3}
holds recursively. To show practical exponential stability, we first derive the following from \eqref{lyapunov decrease 4 Proof of Theorem 3}
\begin{equation}
    \begin{aligned}
        V_t &\leq \rho^t V_0 +\sum^{t-1}_{k=0}\rho^k \beta_V(\bar{\varepsilon})\leq \rho^t V_0 + \frac{\beta_V(\bar{\varepsilon})}{1-\rho}.
    \end{aligned}
\end{equation}
We then apply \cref{lower and upper bound lemma}. Specifically, we apply the lower bound to the left-hand side and apply the upper bound to the right-hand side. Consequently,
\begin{equation}
    \|\xi_t\|^2 \leq \frac{c_{V,1}}{\gamma \lambda_{\min}(P_W)}\rho^t \|\xi_0\|^2 + \frac{\rho^tc_{V,2}(\bar{\varepsilon})+\frac{\beta_V(\bar{\varepsilon})}{1-\rho}}{\gamma\lambda_{\min}(P_W)}.
\end{equation}
By using $\sqrt{a+b}\leq \sqrt{a}+\sqrt{b}$, we obtain
\begin{equation}
    \|\xi_t\| \leq c \left(\sqrt{\rho}\right)^t\|\xi_0\|+d_t,
\end{equation}
where
\begin{equation}
    c = \sqrt{\frac{c_{V,1}}{\gamma \lambda_{\min}(P_W)}}, d_t = \sqrt{\frac{\rho^t c_{V,2}(\bar{\varepsilon})+\frac{\beta_V(\bar{\varepsilon})}{1-\rho}}{\gamma\lambda_{\min}(P_W)}}.
\end{equation}
Since $d_t$ can be upper-bounded by a constant using $\rho^t\leq1$, this result implies practical exponential stability. Moreover, we can derive
\begin{equation}
    \limsup\limits_{t\to\infty} \|\xi_t\| \leq \sqrt{\frac{\beta_V(\bar{\varepsilon})}{(1-\rho)\gamma\lambda_{\min}(P_W)}}.
\end{equation}
Together with $\|x_t\| \leq \|\Gamma\| \|\xi_t\|$, we conclude that the system state ultimately converges to a ball around the origin whose radius vanishes as $\bar{\varepsilon}\rightarrow 0$. The proof is complete. $\hfill \square$

\footnotesize{ 
\bibliographystyle{IEEEtran}
\bibliography{reference}
}
\newpage
\begin{IEEEbiography}[{\includegraphics[width=1in,height=1.25in,clip,keepaspectratio]{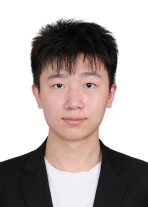}}]{Zhaohua Yang} received the B.Eng. degree from Zhejiang University, Hangzhou, China, and the B.S. degree from the University of Illinois at Urbana-Champaign, Urbana, USA, in 2022. He is currently pursuing the Ph.D. degree in the Department of Electronic and Computer Engineering at the Hong Kong University of Science and Technology, Hong Kong, China. From August 2021 to January 2022, he was a visiting student at the Grainger College of Engineering, the University of Illinois at Urbana-Champaign, Urbana, USA.
		His research interests include optimal control, data-driven control, and model predictive control.
	\end{IEEEbiography}
    \vspace{-2\baselineskip}
\begin{IEEEbiography}[{\includegraphics[width=1in,height=1.25in,clip,keepaspectratio]{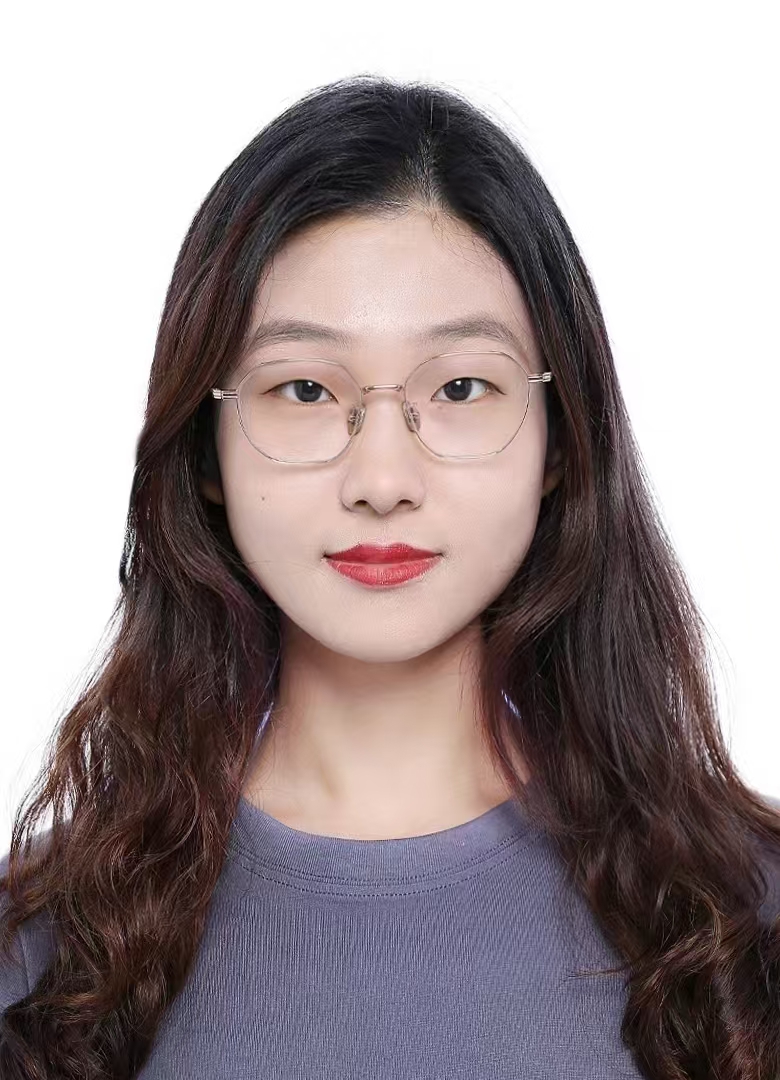}}]{Nan Bai} received the B.S. degree in Mathematics from Northeastern University, Shenyang, China, in 2020, and the Ph.D. degree in Dynamical Systems and Control from the Peking University, Beijing, China, in 2025. She is currently a post-doctoral fellow in the Department of Electronic and Computer Engineering, The Hong Kong University of Science and Technology, Clear Water Bay, Kowloon, Hong Kong, China.
Her research interests include cooperative control of multi-agent systems, distributed optimization, and model predictive control.
	\end{IEEEbiography}
\vspace{-2\baselineskip}
\begin{IEEEbiography}[{\includegraphics[width=1in,height=1.25in,clip,keepaspectratio]{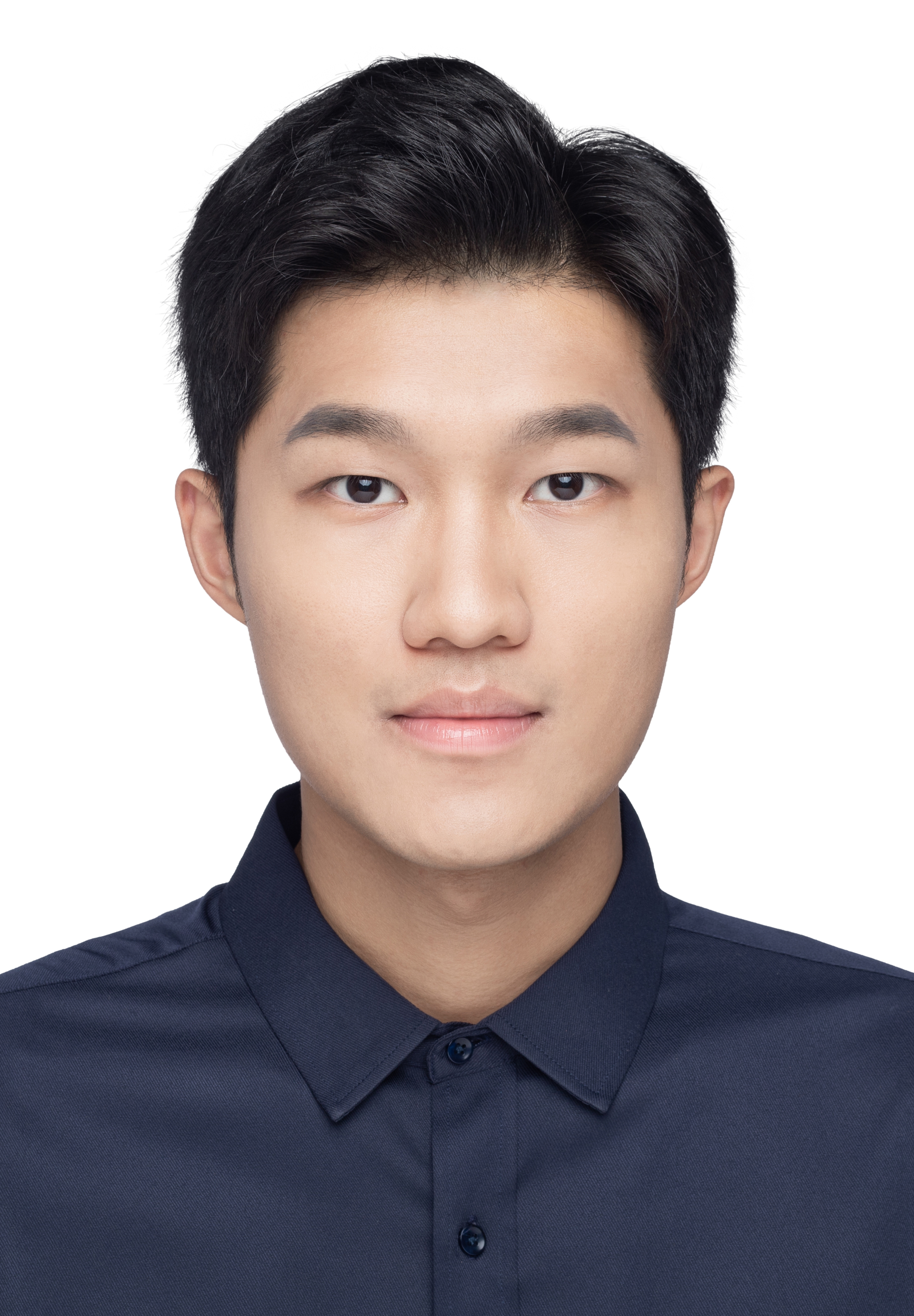}}]{Pengyu Wang} (Member, IEEE) received the B.E. degree in Electronic Engineering from Shandong University, Qingdao, China, in 2022, and the Ph.D. degree in Electronic and Computer Engineering from the Hong Kong University of Science and Technology (HKUST), Hong Kong, in 2026. He was a Visiting Scholar at Tsinghua University, Beijing, China, University of Victoria, Victoria, Canada, and Cornell University, Ithaca, USA. He is currently a Postdoctoral Fellow with the Department of Electronic and Computer Engineering, HKUST. His research interests include multi-robot systems, task scheduling and planning, field robotics, and the safety and security of robotic systems.
	\end{IEEEbiography}
\vspace{-2\baselineskip}
\begin{IEEEbiography}[{\includegraphics[width=1in,height=1.25in,clip,keepaspectratio]{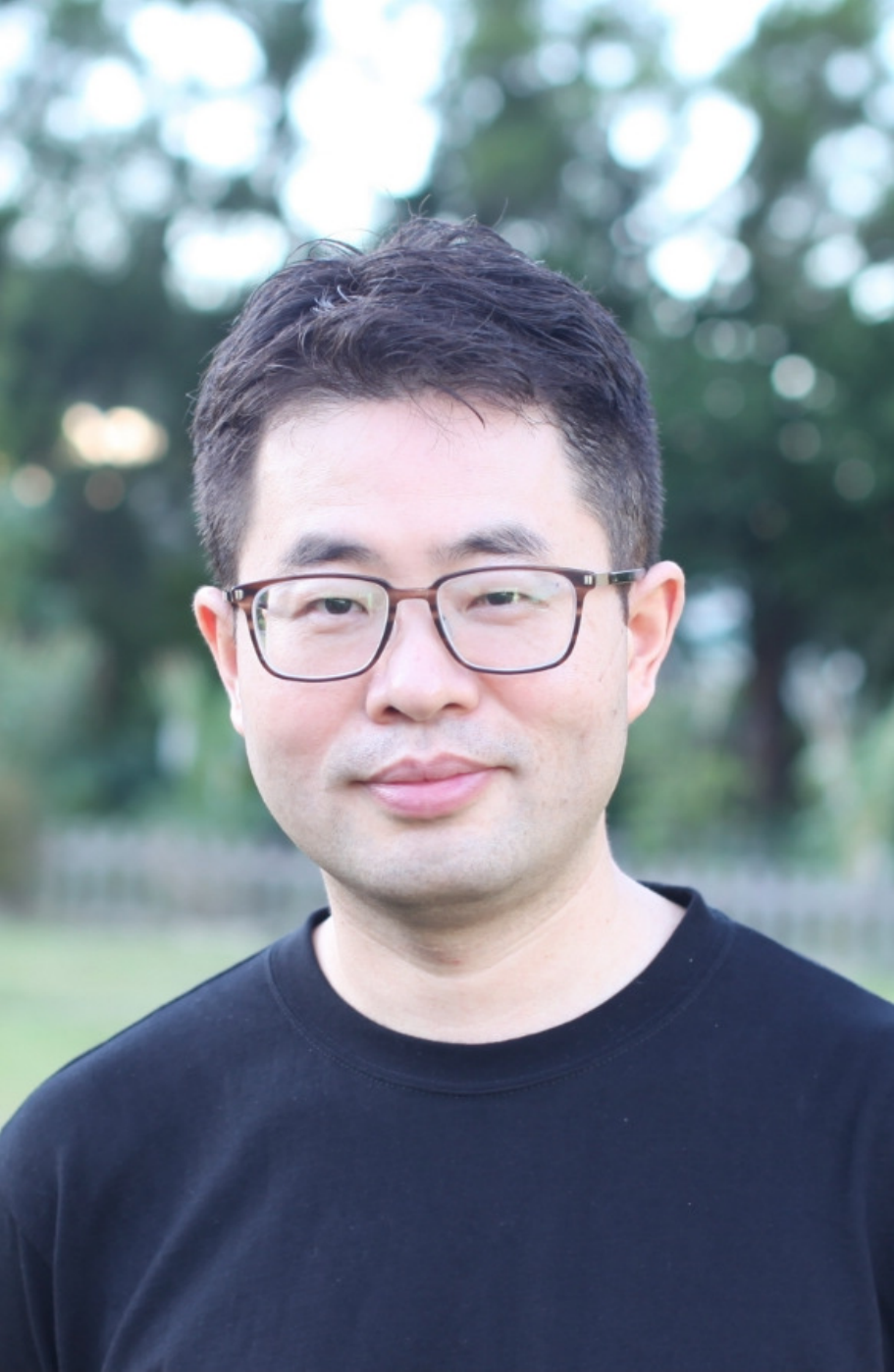}}]{Ling Shi} received his B.E. degree in Electrical and Electronic Engineering from The Hong Kong University of Science and Technology (HKUST) in 2002 and Ph.D. degree in Control and Dynamical Systems from The California Institute of Technology (Caltech) in 2008. He is currently a Professor in the Department of Electronic and Computer Engineering at HKUST with a joint appointment in the Department of Chemical and Biological Engineering (2025-2028), and the Director of The Cheng Kar-Shun Robotics Institute (CKSRI). His research interests include cyber-physical systems security, networked control systems, sensor scheduling, event-based state estimation, and multi-agent robotic systems (UAVs and UGVs). He served as an editorial board member for the European Control Conference 2013-2016. He was a subject editor for International Journal of Robust and Nonlinear Control (2015-2017), an associate editor for IEEE Transactions on Control of Network Systems (2016-2020), an associate editor for IEEE Control Systems Letters (2017-2020), and an associate editor for a special issue on Secure Control of Cyber Physical Systems in IEEE Transactions on Control of Network Systems (2015-2017). He also served as the General Chair of the 23rd International Symposium on Mathematical Theory of Networks and Systems (MTNS 2018). He is currently serving as a member of the Engineering Panel (Joint Research Schemes) of the Hong Kong Research Grants Council (RGC) (2023-2026). He received the 2024 Chen Han-Fu Award given by the Technical Committee on Control Theory, Chinese Association of Automation (TCCT, CAA). He is a member of the Young Scientists Class 2020 of the World Economic Forum (WEF), a member of The Hong Kong Young Academy of Sciences (YASHK), and he is an IEEE Fellow.
	\end{IEEEbiography}
\end{document}